\documentclass[11pt]{article}

\usepackage[T1]{fontenc}
\usepackage{lmodern}

\usepackage{array}
\usepackage{amsmath, amssymb}
\usepackage{yhmath}
\usepackage{amsthm}
\usepackage{proof}
\usepackage{enumitem}
\usepackage{algpseudocodex}
\usepackage{algorithm,setspace}

\usepackage{mathtools}
\usepackage{fullpage}
\usepackage{scalerel}
\usepackage{hhline}
\usepackage{bbm}
\usepackage[normalem]{ulem}
\usepackage{threeparttable}
\usepackage{booktabs}
\usepackage[pdfstartview=FitH,pdfpagemode=UseNone,colorlinks=true,citecolor=blue,linkcolor=blue]{hyperref}
\usepackage{centernot}
\usepackage[
	lambda,
	operators,
	advantage,
	sets,
	adversary,
	landau,
	probability,
	notions,
	logic,
	ff,
	mm,
	primitives,
	events,
	complexity,
	asymptotics,
	keys]{cryptocode}
\usepackage[T1]{fontenc}
\usepackage{graphicx}
\usepackage{framed}
\usepackage{mdframed}
\usepackage{thmtools}
\usepackage{thm-restate}
\usepackage[noabbrev]{cleveref}
\usepackage{verbatim}
\floatname{algorithm}{Protocol}
\crefname{algorithm}{protcol}{protocol}
\Crefname{algorithm}{Protocol}{Protocol}

\usepackage[utf8]{inputenc}
\usepackage{csquotes}
\usepackage{microtype}
\usepackage{multirow}
\usetikzlibrary{decorations.pathreplacing,matrix,shapes,arrows,positioning,chains,calc}

\tikzset{
    ncbar angle/.initial=90,
    ncbar/.style={
        to path=(\tikztostart)
        -- ($(\tikztostart)!#1!\pgfkeysvalueof{/tikz/ncbar angle}:(\tikztotarget)$)
        -- ($(\tikztotarget)!($(\tikztostart)!#1!\pgfkeysvalueof{/tikz/ncbar angle}:(\tikztotarget)$)!\pgfkeysvalueof{/tikz/ncbar angle}:(\tikztostart)$)
        -- (\tikztotarget)
    },
    ncbar/.default=0.5cm,
}

\tikzset{square left brace/.style={ncbar=0.5cm}}
\tikzset{square right brace/.style={ncbar=-0.5cm}}

\tikzset{round left brace/.style={ncbar=0.5cm,out=-115,in=115}}
\tikzset{round right brace/.style={ncbar=0.5cm,out=-65,in=65}}

\usepackage{subcaption}
\usepackage{wrapfig}
\usepackage{xcolor}
\usepackage{xspace}

\usepackage[normalem]{ulem}
\usepackage{bm}

\newif\ifFOCS
\ifFOCS
\Crefname{figure}{Fig.}{Fig.}
\else
\usepackage[colorinlistoftodos,textsize=small]{todonotes}
\fi

\theoremstyle{remark}
\newtheorem{remark}{Remark}

\theoremstyle{definition}
\newtheorem{definition}{Definition}

\theoremstyle{plain}
\newtheorem{claim}{Claim}

\newtheorem{lemma}{Lemma}
\newtheorem{theorem}{Theorem}
\newtheorem{proposition}{Proposition}
\newtheorem{corollary}{Corollary}

\newtheorem{fact}{Fact}

\newcommand{\newuser}[3]{%
  \expandafter\newcommand\csname todo#1\endcsname[2][]{%
    \quitvmode
    \texorpdfstring{\todo[inline,color=#3,##1]{\textbf{#2:} ##2}\xspace}{(TODO: #2: ##2)}%
  }
}

\newcommand\numberthis{\addtocounter{equation}{1}\tag{\theequation}}

\definecolor{paramnotebg}{RGB}{255,248,225}
\definecolor{paramnoteborder}{RGB}{166,91,22}

\definecolor{paramfixbg}{RGB}{232,246,239}
\definecolor{paramfixborder}{RGB}{31,112,68}

\DeclarePairedDelimiter{\lrag}{\lag}{\rag}

\newcommand{\polylog}[1]{\mathrm{polylog}{#1}\xspace}
\renewcommand{\poly}[1]{{\mathrm{poly}{#1}}}

\newcommand{\tO}{\widetilde{O}}

\mathchardef\mhyphen="2D

\let\lag\langle
\let\rag\rangle

\crefname{conjecture}{conjecture}{Conjectures}
\crefname{question}{question}{Questions}
\crefname{claim}{claim}{Claims}
\crefname{fact}{fact}{Facts}

\makeatletter
\DeclareOldFontCommand{\bf}{\normalfont\bfseries}{\mathbf}
\DeclareOldFontCommand{\it}{\normalfont\itshape}{\mathit}
\DeclareOldFontCommand{\rm}{\normalfont\rmfamily}{\mathrm}
\DeclareOldFontCommand{\sc}{\normalfont\scshape}{\@nomath\sc}
\DeclareOldFontCommand{\sf}{\normalfont\sffamily}{\mathsf}
\DeclareOldFontCommand{\sl}{\normalfont\slshape}{\@nomath\sl}
\DeclareOldFontCommand{\tt}{\normalfont\ttfamily}{\mathtt}
\makeatother
\def\gkeywords{}
\def\gabstract{}

\newcommand\makealltitles{
\ifnum\style=\sigalternate
    {
        \begin{abstract}{\gabstract}\end{abstract} \maketitle%
    }
\else
  {\maketitle%
  \ifnum\style=\lncs
    {\begin{abstract}{\gabstract \keywords{\gkeywords}}\end{abstract}}
  \else
    {\begin{abstract}{\gabstract}\end{abstract}
      \ifnum\style=\ieeetr
      \begin{IEEEkeywords}
        {\gkeywords}
      \end{IEEEkeywords}
      \fi
    }
  \fi
  }
\fi
}

\mathchardef\mhyphen="2D

\newcommand{\ie}{{\it i.e.}\xspace}
\newcommand{\ignore}[1]{}

\usepackage{xcolor}
\newcommand{\printparam}[1]{_{\color{gray}[#1]}}
\newcommand{\prot}[4][]{(\cP{#2#3}, \cV{#2#4})\if\relax\detokenize{#1}\relax\else{\printparam{#1}}\fi\xspace}
\newcommand{\outputprotUU}[3][]{\langle\cP{#2}, \cV{#3}\rangle\if\relax\detokenize{#1}\relax\else{\printparam{#1}}\fi\xspace}

\newcommand{\cL}{\mathcal{L}}
\newcommand{\cM}{\mathcal{M}}

\newcommand{\cS}{\mathcal{S}}
\newcommand{\NA}{---}

\let\oldprocedure\procedure
\renewcommand\procedure[3][]{\hspace{0pt}\oldprocedure[#1]{#2}{#3}}

\NewDocumentCommand\inputanon{m}{
\ifnum\paperversion=\anonymous \input{#1} \fi
}

\NewDocumentCommand\inputcam{m}{
\ifnum\paperversion=\cameraready \input{#1} \fi
}

\NewDocumentCommand\inputfull{m}{
\ifnum\paperversion=\fullversion \input{#1} \fi
}

\newcommand{\cB}{\mathcal{B}}

\newcommand{\cU}{{\ensuremath{\mathcal{U}}\xspace}}

\newcommand{\cP}{{\ensuremath{\mathsf{P}}\xspace}}
\newcommand{\cV}{{\ensuremath{\mathsf{V}}\xspace}}

\newcommand{\GF}{\mathbb{GF}}

\newcommand{\cksum}{\ensuremath{\mathsf{cksum}}\xspace}

\newcommand{\Deltac}{\ensuremath{\Delta_c}\xspace}
\newcommand{\pval}{\textsf{PVAL}\xspace}
\newcommand{\pvalF}{\textsf{PVAL}_\FF\xspace}

\newcommand{\eps}{\ensuremath{\epsilon}\xspace}

\newcommand{\GKR}{\ensuremath{\textsf{GKR}}\xspace}
\newcommand{\Batch}{\ensuremath{\textsf{Batch}}\xspace}
\newcommand{\SmallBatch}{\ensuremath{\textsf{SmallBatch}}\xspace}

\newcommand{\ADD}{\ensuremath{\textsf{ADD}}\xspace}
\newcommand{\MULT}{\ensuremath{\textsf{MULT}}\xspace}

\renewcommand{\RR}{\ensuremath{\textsf{RR}}\xspace}

\newcommand{\tstart}{{\ensuremath{\mathsf{start}}\xspace}}
\newcommand{\tend}{{\ensuremath{\mathsf{end}}\xspace}}

\newcommand{\smallnwarrow}{\mathchoice
  {\scalebox{0.7}{$\nwarrow$}}
  {\scalebox{0.7}{$\nwarrow$}}
  {\scalebox{0.6}{$\nwarrow$}}
  {\scalebox{0.6}{$\nwarrow$}}
}

\newcommand{\texp}{{\ensuremath{\smallnwarrow}}}
\newcommand{\tred}{{\ensuremath{\downarrow}}}

\newcommand{\pvalT}{{\ensuremath{R}\xspace}}

\newcommand{\pj}{{\ensuremath{\bm j}\xspace}}

\newcommand{\pjM}{{\ensuremath{\pj^{\bm M}\xspace}}}
\newcommand{\pv}{{\ensuremath{\bm v}\xspace}}

\newcommand{\pvM}{{\ensuremath{\bm \pv^{\bm M}}\xspace}}

\newcommand{\cksumT}{{\ensuremath{R_\cksum}\xspace}}

\newcommand{\rowball}{{\ensuremath{{\cB_{d, \FF}}}\xspace}}
\newcommand{\kmid}{{\ensuremath{{k^{\texp}}}\xspace}}
\newcommand{\tmid}{{\ensuremath{{t^{\texp}}}\xspace}}
\newcommand{\cLmid}{{\ensuremath{{\cL_{\tmid}^{\kmid}}}\xspace}}

\newcommand{\ncol}{{\ensuremath{{L}}\xspace}}

\newcommand{\cQ}{{\ensuremath{{\mathcal{Q}}}\xspace}}

\newcommand{\cksumv}{{\ensuremath{\chi}\xspace}}

\renewcommand{\secpar}{{\ensuremath{{\sigma}}\xspace}}

\newcommand{\NP}{\textsf{NP}\xspace}

\newcommand{\LDP}{\textsf{LDP-IP}\xspace}
\newcommand{\IP}{\textsf{IP}\xspace}

\newcommand{\IPP}{\textsf{IPP}\xspace}

\newcommand{\TS}{{\ensuremath{\mathsf{TISP}\xspace}}}

\newcommand{\depth}{\ensuremath{{\mathsf{depth}}\xspace}}
\newcommand{\size}{\ensuremath{{\mathsf{size}}\xspace}}
\newcommand{\Vtime}{\ensuremath{{\mathsf{Vtime}}\xspace}}
\newcommand{\Ptime}{\ensuremath{{\mathsf{Ptime}}\xspace}}

\newcommand{\transcMat}{\ensuremath{{\bm a}\xspace}}

\newcommand{\I}{\ensuremath{{\mathcal{I}}\xspace}}
\newcommand{\Q}{\ensuremath{{\mathcal{Q}}\xspace}}
\newcommand{\C}{\ensuremath{{\Phi\xspace}}}

\renewcommand{\pp}{\ensuremath{\mathsf{pp}}}

\newcommand{\xmid}{\ensuremath{{\bm x^{\texp}}}}
\newcommand{\xred}{\ensuremath{{\bm x^{\tred}}}}
\newcommand{\kred}{\ensuremath{{k^{\tred}}}}
\newcommand{\Mred}{\ensuremath{{\bm M^{\tred}}}}

\newcommand{\Mx}{\ensuremath{{\bm M}}}
\newcommand{\CRR}{\ensuremath{\Psi}\xspace}

\newcommand{\Cmid}{\ensuremath{{\Psi_{\texp}}}\xspace}
\newcommand{\CCmid}{\ensuremath{{C_{\texp}}}\xspace}
\newcommand{\CIPP}{\ensuremath{{\Psi_{\IPP}}}\xspace}

\newcommand{\Cbase}{\ensuremath{\C_{\mathsf{base}}}\xspace}
\newcommand{\CCbase}{\ensuremath{C_{\mathsf{base}}}\xspace}
\newcommand{\Ccontract}{\ensuremath{\Psi_{\texp}}\xspace}
\newcommand{\CCcontract}{\ensuremath{C_{\texp}}\xspace}
\newcommand{\Cred}{\ensuremath{\Psi_{\tred}}\xspace}
\newcommand{\CCred}{\ensuremath{C_{\tred}}\xspace}
\newcommand{\CBatch}{\ensuremath{\Psi_{\Batch}}\xspace}
\newcommand{\CCBatch}{\ensuremath{C_{\Batch}}\xspace}
\newcommand{\CSmallBatch}{\ensuremath{\Psi_{\SmallBatch}}\xspace}
\newcommand{\CCSmallBatch}{\ensuremath{C_{\SmallBatch}}\xspace}
\newcommand{\Creduce}{{\ensuremath{\C_{\mathsf{reduce}}}\xspace}}
\newcommand{\CCreduce}{{\ensuremath{C_{\mathsf{reduce}}}\xspace}}

\newcommand{\protcplx}[1]{{\ensuremath{(\ell_{#1}, a_{#1}, \Ptime_{#1}, \Vtime_{#1})}}}
\newcommand{\BatchParam}{{\ensuremath{S, T, t, k, \lrag{\C}}}}

\newcommand{\Fbits}{{\ensuremath{\polylog{\abs{\FF}}}}}

\newcommand{\Flog}{{\ensuremath{\log{\abs{\FF}}}}}
\newcommand{\logt}{{\ensuremath{\tau}}}
\newcommand{\recnodes}{{\ensuremath{N(T,k)}}}
\newcommand{\secloc}{{\ensuremath{\secpar_{\mathsf{loc}}}}}

\newcommand{\Fboundstmt}{{\ensuremath{c_02^{4\secloc} T^{10}(k+1)^4 S^{c_1} \cdot (\depth(C) \log \size(C)+1)}}}

\newcommand{\FboundT}{{\ensuremath{2^{\secpar + 5}(\pvalT \log k \log L)^c}}}
\newcommand{\FboundIPP}{{\ensuremath{C_\GKR\cdot 2^{\secpar + 4c + 7} ((\secpar dL\log (\abs{\FF}(k+1)))^c + \depth(C) \log \size(C)+1)}}}
\newcommand{\aboundIPP}{{\ensuremath{L + \poly(d)}}}

\newcommand{\Tbound}{{\ensuremath{8dL (\log k + \Flog) + \secpar + 2}}}
\newcommand{\epstk}{{\ensuremath{\eps(t, k)}}}
\newcommand{\epsIPPconcrete}{{\ensuremath{2^{-\secpar-2}}}}
\newcommand{\epsmid}{{\ensuremath{\eps(\tmid, \kmid)}}}
\newcommand{\epsred}{{\ensuremath{\eps(t, \kred)}}}
\newcommand{\GKRbound}{{\ensuremath{2^{-\secloc}}}}
\newcommand{\IPPbound}{{\ensuremath{2^{-\secloc}}}}

\newcommand{\baset}{{\ensuremath{t_{\mathsf{base}}}}}
\newcommand{\basep}{{\ensuremath{p_{\mathsf{base}}}}}
\newcommand{\basek}{{\ensuremath{k_{\mathsf{base}}}}}
\newcommand{\baseq}{{\ensuremath{q_{\mathsf{base}}}}}

\newcommand{\dbound}{{\ensuremath{96\secloc\lambda \log (Tk)}}}
\newcommand{\dboundm}{{\ensuremath{16\secpar \log k}}}
\newcommand{\dboundIPP}{{\ensuremath{48\secpar\log k}}}

\newcommand{\Reduce}{{\textsf{{RR}}\xspace}}

\FOCSfalse    
\title{Towards a Doubly Efficient $\mathsf{IP} = \mathsf{PSPACE}$}
\usepackage{authblk}

\author[1]{Liyan Chen}
\author[1]{Matthew M. Hong}
\author[1]{Yael Tauman Kalai}
\author[1]{Zoe Xi}

\affil[1]{Massachusetts Institute of Technology}
\date{\today}

\begin{document}
\maketitle

\begin{abstract}
We show that every language in $\mathsf{PSPACE}$ decidable by a Turing machine in time
$T(n)=n^{O(\log n)}$
admits a doubly efficient interactive proof system: the prover runs in time $\mathsf{poly}(T(n))$, and the verifier runs in time $\mathsf{poly}(n)$.
This extends the best previously known regime for such proof systems from
$T(n)=n^{O\!\left(\sqrt{\log n/\log\log n}\right)}$,
established by Berger, Goyal, Hong, and Kalai (FOCS 2025), to $T(n)=n^{O(\log n)}$.

Beyond improving the range of $T$, our protocol is substantially simpler than previous doubly efficient proofs for time-bounded $\mathsf{PSPACE}$.
Earlier constructions proceed indirectly: they first build batch interactive proofs and then invoke them as a black box to obtain doubly efficient protocols. 
In contrast, we give a direct construction.
This not only simplifies the proof but also points to a more promising route for future improvements.
\end{abstract}

\thispagestyle{empty}
\newpage
\enlargethispage{1cm}
\tableofcontents
\addtocontents{toc}{\protect\thispagestyle{empty}}
\thispagestyle{empty}
\newpage
\pagenumbering{arabic}
\section{Introduction}

Verification is one of the most fundamental concepts in computer science, and motivated the study of the  complexity classes $\NP$ \cite{cook1971complexity}, ${\sf IP}$~\cite{GolMicRac89,BabaiM88} and ${\sf MIP}$~\cite{STOC:BGKW88,FOCS:BabForLun90}, among others. This study has led to foundational concepts in cryptography such as zero-knowledge proofs \cite{GolMicRac89}, which are used as building blocks in many cryptographic protocols used today. It has also inspired fundamental notions such as  probabilistically checkable proofs (PCPs) \cite{STOC:BFLS91,FOCS:FGLSS91,FOCS:AroSaf92,FOCS:ALMSS92},
interactive PCPs~\cite{ICALP:KalRaz08},
and interactive oracle proofs~\cite{TCC:BenChiSpo16,STOC:ReiRotRot16}, which in turn have led to breakthrough results in hardness of approximation and to the construction of ${\sf SNARG}$s (succinct non-interactive arguments), which are used in many blockchain applications.

\paragraph{Interactive proofs ($\IP$s)} 

The power of interactive proofs was demonstrated  by the celebrated 
$\IP=\mathsf{PSPACE}$ theorem \cite{JACM:LFKN92,Shamir92}.
Specifically, it was proven that the correctness of any time-$T$, space-$S$ computation can be verified by a $\poly(S,n)$-time verifier, via an interactive proof. However, this comes at a price:  The time required by the prover to convince the verifier of the correctness is $2^{\Omega(S\cdot \log S)}$. While in those works the runtime of the prover was not a parameter of interest, and the prover was thought of as being all-powerful (and was even named after the famous wizard Merlin~\cite{BabaiM88}), this blowup in the prover's runtime makes these proof systems completely impractical.

\paragraph{Doubly efficient $\IP$s}  
The work of \cite{JACM:GolKalRot15} initiated the study of \emph{doubly efficient} $\IP$s, in which the prover's runtime is required to be at most polynomial in $T$ (the time it takes to run the underlying computation), and the verifier is required to run in time significantly less than~$T$ (otherwise, such interactive proofs are trivial).  
They constructed a doubly efficient $\IP$ for any computation that can be performed by a (log-space uniform) circuit of depth $D$
and size $T$, where the verifier's runtime is 
$D\cdot\polylog(T)+ \tilde{O}(n)$, and the communication complexity is $D\cdot\polylog(T)$, where $n$ is the input length. 
In particular, this implies an improved 
$\IP=\mathsf{PSPACE}$ theorem where the verifier runs in time
$\poly(S)+ 
\tilde{O}(n)$, the communication complexity is $\poly(S)$, and the prover runs in time
$2^{O(S)}$. 
This result yields a doubly efficient $\IP$ for log-space computations.  

The next significant advancement was due to Reingold, Rothblum, and Rothblum \cite{STOC:ReiRotRot16}, who constructed a doubly efficient $\IP$, where the prover's runtime is $\poly(T)$ and the verifier's runtime is $\poly(n)$, for every language computable in polynomial space and time $T=n^{O((\log n)^\delta)}$ for a sufficiently small constant $\delta>0$.
Very recently, \cite{FOCS:BGHK25} improved this result by constructing a doubly efficient $\IP$ for every language in polynomial space and time $T=n^{O\!\left(\sqrt{\log n/\log\log n}\right)}$.

The main technical contribution in both of these works is an efficient way to batch unambiguous interactive proofs.  An unambiguous interactive proof is an interactive proof with the guarantee that if the prover ever deviates from the unique prescribed strategy, then it will be rejected with high probability. Both results mentioned above \cite{STOC:ReiRotRot16,FOCS:BGHK25} show that if membership in a language $\cL$ can be proven via a doubly efficient unambiguous $\IP$, then one can prove that $x_1,\ldots,x_k\in\cL$ via a doubly efficient unambiguous $\IP$  where the verifier runs in time significantly less than running these $k$ proofs.
These works then use this batch unambiguous $\IP$ in a black-box way to construct a doubly efficient $\IP$.
This latter part is quite straightforward. 
Indeed, the improvement in \cite{FOCS:BGHK25} is in constructing a more efficient batch unambiguous $\IP$, which immediately yields a more efficient doubly efficient $\IP$.

\subsection{Our Result}

We construct a doubly efficient $\IP$ for every language that is computable in polynomial space and time $T=n^{O(\log n)}$, thus improving both previous results.

\begin{theorem}[Our Doubly Efficient $\IP$, Informal]
    \label{thm:informal-deip}
    There exists a doubly efficient $\IP$ (\ie an interactive proof with a $\poly(T)$ prover and a $\poly(n)$ verifier) for every language that is computable in polynomial space and time $T=n^{O(\log n)}$.
\end{theorem}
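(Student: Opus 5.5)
We will follow the Reingold--Rothblum--Rothblum (RRR) blueprint, but unroll it into a single recursive protocol. Batching and the time-splitting recursion are interleaved directly, and no black-box batch $\IP$ is invoked. Fix a space-$S=\poly(n)$, time-$T$ Turing machine. We consider statements of the form ``configuration $u$ reaches configuration $v$ in $t$ steps,'' and more generally a batch of $k$ such statements $(u_i,v_i,t)$ for $i\in[k]$.

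\textbf{Recursive protocol.} For a batch $(t,k)$, the prover sends the midpoint configurations, splitting into $2k$ statements about time $t/2$. We then interleave two kinds of steps. An \emph{expansion} step ($\texp$) reduces time by a factor $\tau$ at the cost of multiplying the number of statements $k$ by $\tau$. A \emph{reduction} step ($\tred$) uses a low-degree-extension / checksum argument to shrink $k$ claims to $k^{\tred}\ll k$ claims. It does so without letting the verifier read all $k\cdot S$ bits: the prover commits to row-wise low-degree encodings, the verifier samples a few rows and columns, and an unambiguous \IPP{} (run via \GKR{} on a small circuit) checks the consistency of the checksum $\cksum$. At the base, $t\le t_{\mathsf{base}}$, the verifier either evaluates directly or runs \GKR{} on the depth-$\poly(S)$ circuit for a short computation.

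\textbf{Analysis.} We will prove by induction on $(t,k)$ an unambiguous-soundness bound $\eps(t,k)$ together with complexity bounds $(\ell,a,\Ptime,\Vtime)$. Let $N(T,k)$ be the number of recursion nodes. The verifier time recurrence should look like
\[
V(t,k)\le V(t/\tau,k^{\tred})+\poly(S,\log k,|\FF|^{\mathrm{small}}).
\]
The whole point is to choose the parameters so that, over $\log T/\log\tau$ levels, the batch size stays $\poly(n)$ and the per-level overhead does not compound multiplicatively. In RRR and BGHK the black-box batching loses a multiplicative factor per level. With direct interleaving we aim for a loss that is only \emph{additive} in $\log k$ per level. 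We pick $\tau=n$, so there are $\log T/\log n=O(\log n)$ levels, and we keep $k\le\poly(n)$ throughout. This would give verifier time $\poly(n)\cdot O(\log n)=\poly(n)$ and prover time $\poly(T)$, since the prover only recomputes configurations and low-degree extensions over a field of size $\poly(T)$.

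\textbf{Soundness.} Unambiguity is needed so that a cheating prover who deviates on any one of the $k$ statements is caught by the checksum. We will take the field size as in the bound $|\FF|\ge 2^{\secpar}\,\poly(T,k,S)$ so that every Schwartz--Zippel and \GKR{} error is at most $2^{-\secloc}$. A union bound over the $N(T,k)\le\poly(T)$ nodes then gives constant soundness.

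\textbf{Main obstacle.} The hardest step will be the reduction step $\tred$. We must show that from $k$ claims we can derive $k^{\tred}=k/\poly(n)$ fresh claims of the same form, with soundness loss and verifier work depending only polylogarithmically on $k$, and without the degree $d$ of the encoding blowing up across levels. The degree should stay around $\secpar\log k$, which is what yields $T=n^{O(\log n)}$ rather than the $\sqrt{\log n}$ exponent. Our first attempt will be to apply the checksum to a random linear combination of the $k$ transcripts' last-message tables, and to verify it through a single \IPP{} invocation whose circuit has depth $\poly(d,\log k)$.
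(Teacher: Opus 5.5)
Your plan leaves the reduction step $\downarrow$ unconstructed, and that step carries the theorem. The form you assume for it also cannot be realized as sketched.

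An \IPP{} with row reduction is sound only when the input matrix is $\Deltac$-$d$-far from the accepted set. If just one of the $k$ time-$t$ claims is false, there is no such distance. A circuit that only checks consistency with $\cksum$ says nothing about correctness. A circuit that checks the claims themselves has depth $\Omega(t)$, so the \GKR{} run inside the \IPP{} already costs the verifier about $t$.

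The missing idea is the paper's win--win argument. The prover commits to the $k\times(\lambda+1)$ midpoint matrix only through column checksums $\chi$, which uniquely decode up to $d$ row deviations. The parties then recursively run $\Batch(t/\lambda,k\lambda)$ on the midpoint claims to obtain a predicate $\Psi_{\nwarrow}$. The \IPP{} is applied to the predicate ``consistent with $\chi$, $\Psi_{\nwarrow}$ accepts, and the inherited predicate holds.'' Then one of two cases holds:
\begin{enumerate}
\item The true matrix $\bm M$ is already $d$-far from everything consistent with $\chi$.
\item $\chi$ pins down a unique $\bm M^*$ in the $d$-ball. That $\bm M^*$ either has wrong boundary columns, which the output predicate catches via the boundary checksums, or, by determinism of the machine, contains a false midpoint claim, which $\Psi_{\nwarrow}$ rejects with high probability.
\end{enumerate}

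Moreover, the \IPP{} does not return ``fresh claims of the same form.'' It returns a row set $\cQ$ with $|\cQ|\approx k/d$ plus an extra low-depth predicate $\Psi_{\mathsf{IPP}}$ on $\bm M[\cQ,:]$. Those rows are still claims of length $t$, so a second recursive call $\Batch(t,|\cQ|)$ for the augmented language $\cL_t^{|\cQ|}[\Psi_{\mathsf{IPP}}]$ is needed. Also, sending all midpoints in the clear, as your first paragraph suggests, is affordable only once $k$ drops below a threshold $k_{\mathsf{base}}$.

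Consequently the verifier-time recurrence is not your single chain $V(t,k)\le V(t/\tau,k^{\downarrow})+\poly$. It is the two-branch recurrence
\[
\Vtime(t,k)=\Vtime(t/\lambda,\,k\lambda)+\Vtime(t,\,k/\lambda)+\poly(\lambda,S).
\]
View its recursion tree as $\nwarrow$/$\downarrow$ moves on the grid $(\log_\lambda t,\log_\lambda k)$. It then has about $\binom{2\rho}{\rho}\le 4^{\rho}$ nodes for $\rho=\log_\lambda T$, a Dyck-path count. That count, not a degree of order $\secpar\log k$, is the source of the time restriction. Balancing $4^{\rho}$ against $\poly(\lambda)$ gives $\lambda=2^{\Theta(\sqrt{\log T})}$ and $\Vtime=2^{O(\sqrt{\log T})}\cdot\poly(S)$. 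This is $\poly(n)$ exactly when $\log T=O(\log^2 n)$. Your choice $\tau=n$ is fine inside the correct recurrence, since then $\rho=O(\log n)$ and $4^{\rho}=n^{O(1)}$.

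Your own accounting signals the problem. A single chain with additive $\poly(n)$ cost per level would give a $\poly(n)$ verifier for every $T\le 2^{\poly(n)}$, and nothing in your argument actually uses $T=n^{O(\log n)}$. Similarly, a per-level term $|\FF|^{\Omega(1)}$ is not $\poly(n)$ when $|\FF|\ge\poly(T)$; the verifier may pay only $\mathrm{polylog}\,|\FF|$ per field operation.

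Finally, you need neither unambiguous \IP{}s nor transcript checksums. Determinism of the machine makes the midpoint matrix the unique witness, which is what lets the paper batch \TS{} statements directly.
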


More generally, we prove the following theorem.

\begin{theorem}[Our General Doubly-Efficient $\IP$, Informal]
    \label{thm:informal-deip-gen}
    There exists an $\IP$ for every language that is computable in time $T$ and space $S$, where the prover runs in time $\poly(T)$ and the verifier runs in time $\poly(S, 2^{\sqrt{\log T}})$.
\end{theorem}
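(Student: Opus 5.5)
We prove \Cref{thm:informal-deip-gen} directly, following the recursive structure of Reingold--Rothblum--Rothblum but without a black-box batching step. Fix a time-$T$, space-$S$ machine and view its computation as a sequence of $T$ configurations, each of size $O(S)$. Let $\cL_t^k$ be the language of $k$-tuples of pairs of configurations $(u_i,v_i)$ such that $v_i$ is reached from $u_i$ in exactly $t$ steps. Deciding $x\in\cL$ reduces to one instance of $\cL_T^1$.

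We would build the protocol for $\cL_t^k$ by recursion on $t$. To prove $k$ statements about $t$-step computations, the prover sends the intermediate configurations at $\tmid$-step intervals. This turns the claim into $k\cdot t/\tmid$ statements about $\tmid$-step computations, an instance of $\cL_{\tmid}^{\kmid}$ with $\kmid=k t/\tmid$.

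Left alone, $k$ would keep growing. So after each expansion we apply a \emph{reduction} step that compresses $\kmid$ statements to $\kred\ll\kmid$ statements of the same time bound. Encode the $\kmid$ claimed configurations as a matrix $\Mx$, take its low-degree extension over a field $\FF$, and have the verifier pick random points. Using a GKR-style sum-check, whose circuit is a checksum circuit of low depth over the rows, the claim that every row satisfies the relation becomes a claim about a few random linear combinations of rows, \ie a few new instances. The key requirement is \emph{unambiguity}. Any deviation by the prover in sending intermediate configurations must be caught with high probability. This is ensured because the configurations are determined, so a cheating prover must lie about some $\tmid$-step instance, and the reduction preserves falsity with probability $1-\GKRbound$ for a suitably large field.

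The parameters are then chosen as follows. With $\tmid=t/2^{\sqrt{\log T}}$, the recursion has depth about $\sqrt{\log T}$. Each reduction lets the verifier pay $\poly(S,\kmid)$ time, and we keep $k\le 2^{O(\sqrt{\log T})}$ throughout by reducing back to $\kred=\poly(\secpar,\log k)$ after each expansion. Across all levels the verifier's time is then $\sqrt{\log T}\cdot\poly(S,2^{\sqrt{\log T}})$. The soundness error sums over the levels, which is fine for $\secloc=O(\log T)$. The prover only ever computes configurations and low-degree extensions of $\poly(T)$-size objects, so it runs in $\poly(T)$. At the base case $t=1$, the verifier checks the $k$ one-step transitions directly in time $\poly(S,k)$. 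Taking $S=\poly(n)$ and $T=n^{O(\log n)}$ gives $2^{\sqrt{\log T}}=2^{O(\log n)}=\poly(n)$, which recovers \Cref{thm:informal-deip}.

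The main obstacle is the reduction step. It must shrink $\kmid$ down to something polynomial in $\log\kmid$ rather than merely by a constant factor, with verifier cost $\poly(\kred, S)$ and not $\kmid\cdot S$. It must also do this while keeping the resulting instances of the same form, namely pairs of (encoded) configurations checkable by the same recursion. Our first attempt would be to define $\cL_t^k$ over low-degree-encoded configurations from the start, so that random linear combinations of valid rows remain meaningful instances. Concretely, we would allow ``configurations'' to be arbitrary vectors in $\FF^{O(S)}$, and define the $t$-step relation via the arithmetized transition function of degree $d$. Linear combinations then become points on low-degree curves through the rows, and the degree grows by $d$ per level. Bounding that degree growth over $\sqrt{\log T}$ levels is the crux. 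It is what caps the runtime at $2^{\sqrt{\log T}}$, and so the time bound at $n^{O(\log n)}$.
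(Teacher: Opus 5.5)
\textbf{There is a genuine gap, and it is the step you yourself flag as the main obstacle:} compressing $\kmid$ claims about $\tmid$-step computations into $\mathrm{poly}(\sigma,\log k)$ new claims of the same form. Random linear combinations of rows, or points on a low-degree curve through them, do not preserve membership, because the multi-step relation is not linear.

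Write the relation for $s=\tmid$ steps as $v=F^{s}(u)$, where $F$ is the arithmetized one-step map of degree $\delta$. Its degree in $u$ is then $\delta^{s}$. So the degree compounds per \emph{time step}, not per recursion level.

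Take a curve $\gamma$ of degree $\kmid-1$ through true instances. The polynomial $\gamma_v(z)-F^{s}(\gamma_u(z))$ vanishes at the interpolation points but not identically. Hence the new instance at a random $z$ is false even when every original instance is true, and completeness fails. Repairing this would require handling a polynomial of degree about $\kmid\,\delta^{s}$.

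Likewise, a \GKR sum-check over a low-depth ``checksum circuit over the rows'' only reduces to evaluations of the low-degree extension of $\Mx$. It never checks the transition relation. A circuit that does check it has depth $\Omega(\tmid)$, which is exactly the cost the recursion is meant to avoid.

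More broadly, the primitive you posit is a far stronger batching statement than anything known: it reduces $\kmid$ instances to polylogarithmically many at cost independent of $\tmid$. Without linearity you are pushed back to sampling- or proximity-based reduction. Then your observation that a cheating prover ``must lie about some $\tmid$-step instance'' does not help: one false row out of $\kmid$ creates no distance and is invisible to any sublinear sample.

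The paper's proof supplies exactly the mechanism for manufacturing that distance, and it never forms new instances from combinations of configurations.

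\begin{itemize}
\item In $\Batch(t,k)$ the prover commits to the midpoint matrix $\Mx\in\{0,1\}^{k\times S(\lambda+1)}$ only through column-wise Reed--Solomon syndromes $\chi=\cksum_d(\Mx)$. These determine at most one $\Mx^*$ within $\Deltac$-distance $d$ of the honest $\Mx$.
\item A recursive call $\Batch(t/\lambda,k\lambda)$ on the committed midpoint statements yields a low-depth predicate $\Cmid$. With high probability $\Cmid$ is false on $\Mx^*$ whenever those statements are not all valid.
\item The final output predicate checks the boundary-column checksums against $\bm x$. Given that, a false batch leaves the honest $\Mx$ $\Deltac$-$d$-far from the set of matrices that match $\chi$ and satisfy $\Cmid$ and $\C$.
\item This distance is precisely what the row-\IPP needs (parallel-repeated \GKR to a \textsf{PVAL} claim, followed by row reduction). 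It outputs about $24\secloc k/d$ \emph{actual} rows $\mathcal{Q}$ together with a low-depth predicate $\CIPP$ on $\Mx[\mathcal{Q},:]$.
\item That is why the language is augmented to $\cL_t^k[\C]$, and why the second recursive call is $\Batch(t,|\mathcal{Q}|)$ carrying $\CIPP$.
\end{itemize}

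So $k$ shrinks only by a factor of $d$ per reduction, and the recursion is two-dimensional, with children $(t/\lambda,k\lambda)$ and $(t,\approx k/d)$. The $2^{\sqrt{\log T}}$ bound comes from balancing $\mathrm{poly}(\lambda,S)$ work per node against the roughly $\binom{2\log_\lambda T}{\log_\lambda T}\le 4^{\log_\lambda T}$ (Catalan-type) nodes of the recursion tree, with $\lambda=2^{\sqrt{\log T}}$. It does not come from any degree growth.
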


Our construction is significantly simpler than the constructions in both of these prior works.  In particular, the construction is direct.  As opposed to previous works, which use a batch unambiguous $\IP$ as a black box, we directly batch ${\sf DTISP}$ computations. We believe that this direct construction has the potential to lead to further improvements, while constructions that use batch (unambiguous) $\IP$ as a black box are subject to barriers. Specifically, to get a doubly efficient $\IP$ for languages computable in time $T = n^{\omega(\log n)}$ and polynomial space using a batch (unambiguous) $\IP$ as a black box, one would need a ``rate-1'' batch (unambiguous) $\IP$, i.e., one where the communication complexity for  proving $k$ statements is $(1+o(1))\cdot{\sf cc}$, where  ${\sf cc}$ is the communication complexity of proving a single statement. 
Constructing such a rate-1 batch (unambiguous) $\IP$ seems challenging.
On the other hand, we believe that our protocol points to a more promising route for future improvements. 

\subsection{Additional Related Work}

Our protocol builds on two sub-protocols from prior works.  The first is the doubly efficient $\IP$ for bounded depth circuits from \cite{JACM:GolKalRot15} which we mentioned above. 
The second is an ``instance reduction'' protocol from \cite{STOC:ReiRotRot16,RRR18,TCC:RotRot20}, which roughly says that if $x_1,\ldots,x_k\in\cL$ is far from being true (i.e., one needs to change, say $d$, of the statements $x_i$ to make it true), then there is a so-called ``instance reduction'' protocol that reduces checking that $x_1,\ldots,x_k\in \cL$ to checking that $\tilde{O}(k/d)$ of these instances are in $\cL$ (with an additional linear test on these $\tilde{O}(k/d)$ instances). Such an instance-reduction protocol is special case of an interactive proof of proximity ($\IPP$) \cite{STOC:RotVadWig13}, which is an efficient protocol that, given an input $x$ far from satisfying some predicate $P$, i.e., one needs to change at least $d$ coordinates in $x$ to satisfy $P$, reduces
verifying that  $P(x) = 1$ to  verifying that $P'(x_S) = 1$, where $P'$ is a related  predicate and $S$ is a subset of
indices of size $\tilde{O}(|x|/d)$.

We also mention that there is a large body of work, starting with \cite{BrassardChaumCrepeau88,FOCS:Micali94}, on constructing computationally sound proofs, where soundness is guaranteed to
hold only against computationally bounded cheating provers. We do not elaborate on these works
here, since they are less relevant for our work.

\section{Technical Overview}
\label{sec:overview}

\newcommand{\cf}[0]{\mathsf{cf}}
\newcommand{\Set}[1]{\left\{ #1 \right\}}

In this section, we give an overview of our approach for  proving membership in a  $\TS(T, S)$ language, 
where the verifier is given a Turing machine $M$ and two configurations $\cf_0$ and  $\cf_T$ of size bounded by $S$, and it wishes to verify whether running $M$ for $T$ steps from $\cf_0$ reaches $\cf_T$.

One natural idea is for the prover to send intermediate configurations: let $\lambda$ be a fixed parameter,\footnote{We will later discuss how to set this parameter. } and the prover sends 
\[\cf_{T/\lambda}, \cf_{2T/\lambda}, \ldots, \cf_{(\lambda-1)T/\lambda}\] 
as the $\lambda$ intermediate configurations after $\frac{T}{\lambda}, \frac{2T}{\lambda},\ldots, \frac{(\lambda-1)T}{\lambda}$ steps, respectively. This reduces the task of verifying a $T$-time statement ($\cf_0 \overset{T}{\to} \cf_T$) to the task of verifying $\lambda$ many $T/\lambda$-time statements \[\cf_{i\cdot T/\lambda} \overset{T}{\to} \cf_{(i+1)\cdot T/\lambda}~\mbox{ for all }~0 \leq i < \lambda.
\]

Prior works, starting with \cite{STOC:ReiRotRot16}, and continuing with \cite{RRR18,TCC:RotRot20,FOCS:BGHK25}, showed that verifying $k$ statements $x_1, \ldots, x_k$ is easier than verifying each of them separately.  Roughly speaking, this is based on the following initial observation: if we are guaranteed that a constant fraction of $x_1, \ldots, x_k$ are false, then it suffices to randomly sample a small subset of $x_1, \ldots, x_k$ and perform verification on that subset. Following this observation,  previous works on batching (unambiguous) $\NP$ statements \cite{RRR18,TCC:RotRot20} and batching (unambiguous) interactive proofs \cite{STOC:ReiRotRot16,  FOCS:BGHK25} manage to cleverly reduce the number of instances by a constant factor, even when initially only a single claim is false, by running sub-protocols that add sufficiently many constraints on the corresponding  witnesses $(w_1,\ldots,w_k)$, or on the transcripts in the case of (unambiguous) $\IP$, such that with these additional constraints many of the statements become false.  They then repeat this recipe until they are left with a constant number of instances, which the verifier can verify on its own or via an interactive proof. 

In order to obtain a doubly efficient interactive proof for $\TS(T, S)$, previous works, as well as this work, first break down the statement into $\lambda$ statements, each in $\TS(T/\lambda, S)$. They then apply a batching protocol for these $\lambda$ statements, reducing the task to verifying a constant number of statements in $\TS(T/\lambda, S)$. These works then break down each of the remaining statements in $\TS(T/\lambda, S)$ into $\lambda$ statements, each in $\TS(T/\lambda^2, S)$, and apply the batching protocol again. This recipe is repeated until $T=O(1)$ (or until it is polynomial), in which case  the verifier can verify the statements on its own. 


What differentiates this work from previous works is the batching protocol used.  Previous works used a batch protocol for unambiguous {\em interactive proofs}, which is conceptually complex and brings a lot of technical difficulties. This work, on the other hand, directly  batches $\TS(T, S)$ statements. Our batching protocol recursively reduces both $k$ and $T$, and is surprisingly simple.

\paragraph{Core Ingredient: Interactive Proof of Proximity}

Before describing our protocol, we first introduce an important ingredient: an Interactive Proof of Proximity ($\IPP$). 

An $\IPP$ is a proof system that combines interactive proofs with property testing: a sublinear-time verifier, who can only query a few bits of a huge input $x$, interacts with a prover to decide whether $x$ satisfies some property $\Pi$ or is far (in Hamming distance) from every string that satisfies property $\Pi$. 
Completeness requires that if $x$ satisfies property $\Pi$, then the honest prover convinces the verifier to accept with probability~$1$. Soundness requires that if $x$ is far from satisfying $\Pi$ in Hamming distance, no prover strategy will make the verifier accept with more than a  small probability. 
Previous work \cite{TCC:RotRot20} constructs an $\IPP$ for any low-depth property with communication complexity $\tilde{O}(d)$ and query complexity $\tilde{O}(n/d)$, where $d$ is the distance threshold corresponding to the soundness guarantee. Furthermore, the $\IPP$ verifier makes \emph{non-adaptive} queries: it never queries the instance $x$ until the end of the interaction, at which point it samples a random set of query indices $Q$ (depending on its randomness in previous rounds and this end phase), and the final decision is a low-depth predicate that depends only on $x_Q$. 

This work uses a generalization of $\IPP$ from \cite{FOCS:BGHK25},  called row-$\IPP$, which treats the huge input $x$ as a large matrix $M$ with many rows. The verifier can only query a few rows in $M$, and soundness holds for any $M$ with a large row-distance from the property $\Pi$, i.e., for any $M$ for which every $M'$ that satisfies $\Pi$ differs from $M$ in many rows. Row-$\IPP$ is known for parameters similar to those of standard $\IPP$; specifically, for $n \times m$ matrices and distance parameter~$d$, \cite{FOCS:BGHK25} adapts the $\IPP$ from \cite{TCC:RotRot20} into a row-$\IPP$ for any low-depth property with communication complexity $\tilde{O}(d m)$ and query complexity $\tilde{O}(n/d)$. The row-$\IPP$ verifier also makes \emph{non-adaptive} queries: the final decision is a low-depth predicate that depends on a few rows in $M$. 

\paragraph{Batching TISP: A 4-Step Construction.}

Our protocol $\mathsf{Batch}(T, k)$ for proving $k$ $\TS(T, S)$ statements is as follows.
Assume that we are proving $k$ statements $(x_1, \ldots, x_k)$, where 
\[x_i : \cf_0^i \overset{T}{\to} \cf^i_{T}
\]
is a time-$T$ computation. An immediate idea is to apply the famous GKR protocol \cite{JACM:GolKalRot15} to verify the correctness of all $k$ $\TS(T, S)$ statements. However, the verification time of the GKR protocol depends on the \emph{depth} of the computation, which is $O(T \poly(S) \log k)$ for the union of $k$ $\TS(T, S)$ statements: The main overhead for applying GKR is $T$. Our main motivation for the following steps is to reduce $T$ of the $\TS$ statements we need to verify. 

We let $w_i$ be the $\lambda$ intermediate configurations, i.e.,
\[
w_i=(\cf^i_{T/\lambda}, \cf^i_{2T/\lambda}, \ldots, \cf^i_{(\lambda-1)T/\lambda}).
\] 

Consider the following $k$-by-$\lambda$ matrix, denoted by $M$, where the $i$-th row consists of $w_i$, and to simplify exposition, we also include in the $i$-th row the statement $x_i = (\cf^i_0, \cf^i_{T})$ in the first and last columns, respectively:

\[
M = \begin{pmatrix}
\cf^1_0      & \cf^1_{T/\lambda}      & \cf^1_{2T/\lambda}      & \cdots & \cf^1_{(\lambda-1)T/\lambda}      & \cf^1_{T} \\
\cf^2_0      & \cf^2_{T/\lambda}      & \cf^2_{2T/\lambda}      & \cdots & \cf^2_{(\lambda-1)T/\lambda}      & \cf^2_{T} \\
\vdots       & \vdots                    & \vdots                     & \ddots & \vdots                               & \vdots            \\
\cf^k_0      & \cf^k_{T/\lambda}      & \cf^k_{2T/\lambda}      & \cdots & \cf^k_{(\lambda-1)T/\lambda}      & \cf^k_{T}
\end{pmatrix}.
\]

The matrix $M$ defines $k\lambda$ $\TS(T/\lambda, S)$ statements, by taking each pair of consecutive elements in a row as the start and end configurations. We denote the $k\lambda$ statements defined by $M$ as ${\bm x}_M$.

Let $\lambda, d$ be two global parameters. The protocol $\mathsf{Batch}(T, k)$ is as follows:

\begin{enumerate}
    \item \textbf{Sending Checksums.} The prover computes checksums of~$M$, denoted by $\cksum$, such that any two matrices that are consistent with $\cksum$ have row-distance at least~$(2d+1)$ (i.e., they differ in at least~$(2d+1)$ rows).
    It sends $\cksum$ to the verifier.
 \item \textbf{Reducing $T$.} The prover and verifier engage in $\mathsf{Batch}(T / \lambda, k \lambda)$ over ${\bm x}_M$. However, we delay the verification of $\mathsf{Batch}(T / \lambda, k \lambda)$: the verifier does not access ${\bm x}_M$ directly; instead, it outputs a low-depth predicate $\psi$ about ${\bm x}_M$ such that $\psi({\bm x}_M) =0$ if ${\bm x}_M$ is not correct. Define $\Psi$ as $\Psi(M) = \psi({\bm x}_M)$.

    \begin{remark}
      The purpose of these two steps is to create distance:  Either every $M'$ that is consistent with $\cksum$ is $d$-far from $M$, or there exists an $M'$ consistent with $\cksum$ that is $d$-close to $M$ for which we will show $\Psi(M')=0$.
    \end{remark}
   
    \item \textbf{IPP.} The prover and verifier engage in a row-$\IPP$ proving that $M$ is close in row-distance to satisfying the following property (checkable by a low-depth circuit):
    \begin{itemize}
        \item $M$ is consistent with $\cksum$;
        \item $\Psi(M) = 1$.
    \end{itemize}
    The distance parameter of this row-IPP is $d$, and the verifier at the end needs to query $k / d$ rows $M[\cS, :]$ in $M$. Note that $M[\cS, :]$ are the $\lambda$ intermediate configurations of $\Set{x_i: i \in \cS}$. The output of the IPP verifier is a low-depth predicate $\C$ about $M[\cS, :]$. 
    \item \textbf{Reducing $k$.} The prover and verifier engage in $\mathsf{Batch}(T, k/d)$ over $k/d$ $\TS(T, S)$ statements $\Set{x_i: i \in \cS}$, with the goal of checking that $M[\cS, :]$, defined by the $k/d$ statements $\Set{x_i: i \in \cS}$, satisfies $\C$, and that the $k/d$ statements $\Set{x_i: i \in \cS}$ are valid. We recurse over a batched protocol $\mathsf{Batch}(T, k/d, \C)$ that checks a low-depth predicate $\C$ on the \emph{middle configurations} and follows exactly the same recursive construction that we are describing. 
    In this technical overview, we forget $\C$ for simplicity, and only say we want to check if all of $\Set{x_i: i \in \cS}$ are valid.
\end{enumerate}

The base case is when $T = O(1)$ or $k = O(1)$. 
\begin{itemize}
    \item \textbf{$T$ is small}. In this case, all $k$ $\TS(T, S)$ statements can be verified by a low-depth circuit. Thus, the prover and verifier can engage in the GKR protocol, which returns a low-degree extension check on the statements.
    \item \textbf{$k$ is small}. In this case, the prover sends the entire $M$ in the clear, and the prover and verifier engage in $\mathsf{Batch}(T/\lambda, k\lambda)$. 
\end{itemize}

The above is the full description of our construction. We analyze it and provide concrete parameters below. 

\paragraph{Protocol Analysis: Win-Win Argument}

In what follows, we say that the   ``unique'' witness  (or true configurations) corresponding to the $i$-th statement $x_i=(\cf^i_0,\cf^i_{T})$ consists of the middle configurations obtained by computing honestly from $\cf^i_0$ (even if the statement $x_i$ is false). Thus, $M$ is well-defined even if some statements are false.
Our analysis goes through the four steps in the protocol.

\paragraph{Step 1: Sending Checksums.} In the first step of the protocol, the prover sends checksums, denoted by $\cksum$, for all the columns of~$M$. The size of each column of $\cksum$ is set to be $\tilde{O}(d)$ so that one can (uniquely) decode any deviation in at most $d$ rows.  It is convenient to think of $d=\lambda$ since we will indeed set these two parameters to be equal. The total size of $\cksum$ is $\tilde{O}(d) \cdot |w_i| = \tilde{O}(d\cdot \lambda\cdot S)$, where $S$ is the size of each configuration. 

We distinguish between two cases, both of which take us a step closer to catching the cheating prover:

\begin{itemize}
    \item {\bf Case 1:} One needs to change more than $d$ rows of the unique witnesses $M$ to be consistent with $\cksum$.  In this case,
    $M$ already has a large row-distance to the property proven by row-$\IPP$ in step 3. This is an easy   case as we can use the soundness guarantee of the row-$\IPP$. In this case step 2 of the protocol is not necessary since we already have the distance needed for the IPP soundness. 
    \item {\bf Case 2:}  One needs to change fewer than $d$ rows of the unique witnesses $M$ to be consistent with $\cksum$. In this case, we cannot directly say that the distance condition for row-$\IPP$ soundness is satisfied.
    However, this case still offers us a win:  $\cksum$ uniquely determines the witnesses sent by the prover, since for each column, $\cksum$ uniquely decodes up to $d$ deviations.  In other words, if we haven't created distance then the prover has committed to a matrix $M'$ that is $d$-close to $M$ through $\cksum$.
\end{itemize}

 Note that the  committed  matrix $M'$ is a fixed $k$-by-$\lambda$ matrix that contains $k \lambda$ configurations, corresponding to $k \lambda$ statements ${\bm x}_{M'}$, where each statement corresponds to  a $T/\lambda$-time computation. We can assume that $M'$ and $M$ have the same  starting and ending configurations in each row, since the verifier can efficiently check this. 
This implies that one of these $k \lambda$ statements in ${\bm x}_{M'}$ must be false. 

\paragraph{Step 2: Reducing $T$.} The purpose of this step is to create the distance needed for the IPP if we are in Case 2 in the last step.
At step 2, the prover and verifier run the protocol $\mathsf{Batch}(T/\lambda, k \lambda)$. Recall that the verifier does not access the instance, but only outputs a predicate $\psi$ about this instance. In our soundness analysis, we regard this $\mathsf{Batch}(T/\lambda, k \lambda)$ as executed w.r.t. ${\bm x}_{M'}$, the $k\lambda$ statements defined by $M'$. Therefore, by soundness of $\mathsf{Batch}(T/\lambda, k \lambda)$, with high probability $\psi({\bm x}_{M'}) = 0$, i.e. $\Psi(M') = 0$.

\paragraph{Step 3: IPP.} At step 3, the prover and verifier engage in a row-$\IPP$ proving that $M$ is close in row-distance to the property defined by the $\cksum$ from step 1 and $\Psi$ from step 2. Recall that in step 1 we distinguished between two cases: $M$ is close (in row distance) to $\cksum$ or not. We next claim that in both cases $M$ is far (in row distance) from the property that row-$\IPP$ is proving:

\begin{itemize}
    \item \textbf{Case 1:} $M$ is already far from the satisfying checksums.
    \item \textbf{Case 2:} The only possible matrix close to $M$ and satisfying the checksums is $M'$. However, $\Psi(M') = 0$. 
\end{itemize}

Therefore, by soundness of row-IPP, with high probability, the verifier obtains a set $\cS$ and predicate $\C$ such that $\C(M[\cS, :]) = 0$.

\paragraph{Step 4: Reducing $k$.} We are left with checking $\C(M[\cS, :])=1$. If we don't need to check $\C$ but need only check whether $\Set{x_i: i \in \cS}$ are correct, then simply running $\mathsf{Batch}(T, k/d)$ is sufficient. In our actual construction, we instead check if $\C$ is satisfied by the intermediate configurations $M[\cS, :]$ uniquely determined by $\Set{x_i : i \in \cS}$. This turns out to be extremely close to checking correctness of $\Set{x_i: i \in \cS}$, and we recursively build $\mathsf{Batch}(T, k, \C)$ for the above condition with the same 4-step construction.

\paragraph{Efficiency analysis}

In the construction above, the $\mathsf{Batch}(T, k)$ protocol runs a $\mathsf{Batch}(T/\lambda, k\lambda)$ protocol and a $(T, k/d)$ protocol with additional $\tilde{O}(\lambda\cdot d\cdot  S)$ communication and $\poly(\lambda, d, S)$ verification time. Taking $\lambda = d$, the verification time satisfies the following transition (below we use notation $O_{\lambda, S}$ to hide $\poly(\lambda, S)$ terms): 
\begin{itemize}
    \item $\Vtime_{1, k} = O_{\lambda, S}(1)$;
    \item $\Vtime_{T, 1} = \Vtime_{T/\lambda, \lambda} + O_{\lambda, S}(1)$;
    \item $\Vtime_{T, k} = \Vtime_{T/\lambda, k\lambda} + \Vtime_{T, k/\lambda} + O_{\lambda, S}(1)$.
\end{itemize}

Standard calculation shows that $\Vtime_{T, 1}$ equals $O_{\lambda, S}(C_{\log_{\lambda} T})$, where $C_{\ell} \approx 2^{2\ell}/\poly(\ell)$ is the $\ell$-th Catalan Number. Taking $\lambda = 2^{\sqrt{\log T}}$ gives us the desired $\Vtime_{T, 1} = O_{\lambda, S}(2^{\sqrt{\log T}})$, and in particular $\Vtime_{T, 1} = \poly(n)$ when $T = n^{O(\log n)}$.


\section{Preliminaries}
We adhere to the conventions in \cite{FOCS:BGHK25},
hence many definitions and lemmas in this section are taken verbatim from that work.
\begin{itemize}
    \item Lower case letters $a,b$  mean scalars,
          while bolded lower case letters $\bm a, \bm b$ mean vectors.
          Upper case bold letters like $\bm A, \bm B$ are matrices.
    \item For an integer $n$, we denote by $[n]$ the set $\{1,2,\ldots,n\}$.
          When it is clear from context, we use $[0, n]$ to denote the set $\{0,1,\ldots,n\}$.
          We let $\FF$ denote a finite field.
          $\GF(2)$ denotes the finite field with 2 elements.
    \item
          For a vector $\bm u \in \FF^n$,
          and a subset $\cS \subset [n]$,
          $\bm u\vert_\cS \in \FF^{\abs{\cS}}$ denotes the subvector of $\bm u$ indexed by $\cS$.
          For a sequence of vectors $\set{\bm u_i}_{i \in \I}$, where each $\bm u_i\in \FF^n$,
          the notation $(\bm u_i)_{i \in \I} \in \FF^{n \times |\I|}$ represents the matrix whose columns are the vectors $\bm u_i$ for every $i \in \I$.
    \item
          Given a matrix $\bm A = (\bm a_1,\ldots,\bm a_n) \in \FF^{m \times n}$,
          $\bm a_j \in \FF^m$ denotes its $j$-th column.
          We also use $\bm A[i, :]$ to denote the $i$-th row of $A$.  
          We use $\bm A[:, 0:j]$ or $\bm A[:, :j]$
          to denote the submatrix of $\bm A$ consisting of the first $j$ columns, and use $\bm A[:, -j:]$ to denote the submatrix of $\bm A$ consisting of the last $j$ columns.
           For a subset $\cS \subset [m]$,
          $\bm A[\cS,:]$ denotes the submatrix of $\bm A$ consisting of rows indexed by $\cS$. 
    \item
          $\Delta(\bm u,\bm v)$ is the (absolute) Hamming distance between the vectors $\bm u$ and $\bm v$.
          Denote by $\Delta(\bm u)$ the vector $\bm u$'s \emph{Hamming weight},
          defined to be the number of non-zero elements in $\bm u$.
          For any $d \in \NN$,
          two vectors $\bm u, \bm v$ are \emph{$d$-close} (in Hamming distance) if $\Delta(\bm u,\bm v) \le d$, and \emph{$d$-far} otherwise.
          On a linear space $\cU\subset \FF^a$,
          let $\Delta(\cU) \coloneqq \min_{\bm u \in \cU, \bm u \neq \bm 0}\Delta(\bm u)$.
    \item
          Given a Boolean circuit $\cV$,
          $\size(\cV)$ is the number of gates in the circuit,
          and $\depth(\cV)$ is the depth of the circuit.
    \item
          Given a Turing machine $\cM$,
          we use $\lrag{\cM}$ to denote its constant-size description.
          Denote by $\TS(T, S)$ the class of languages decidable by a Turing machine in time $T(n)$ and space $S(n)$.
\end{itemize}

\subsection{Finite Fields and Distances on Matrices}
All finite fields $\FF$ considered in this work are always \emph{constructible} in the following sense.
\begin{definition}[Constructible Field Ensemble]
    We say a field ensemble $\FF = (\FF_n)_{n \in \NN}$ is \emph{constructible} if every element in $\FF_n$ has an
    $O(\log{\abs{\FF_n}})$-bit representation,
    and addition, multiplication, and inverses can be computed in $\polylog(\abs{\FF_n})$ time given the representations.
\end{definition}
It is well known that for every $S = S(n)$,
constructible field ensembles $\FF = (\FF_n)$ with $\abs{\FF_n} = \Theta(S)$ exist.
Moreover,
we make the (mild) assumption that addition, multiplication and inverses can be computed in $\tO(\log \abs{\FF_n})$ time,
where $\tO$ omits $\poly\log\log(\abs{\FF_n})$ factors.
This implies that degree-$d$ polynomials in $\FF[X]$ can be evaluated in $\tO(d\log\abs{\FF_n})$ time (with $\tO$ omitting $\polylog(d, \log \abs{\FF_n})$ factors).
These properties are satisfied in fields that support FFT. (See table $8.6$ in \cite{vzGG13})

An important metric defined on matrices, denoted by \emph{$\Delta_c$}, is as follows. 
\begin{definition}[$\Delta_c$-distance]
    \label{def:Deltac-dist}
    Let $\FF$ be a finite field.
    For matrices $\bm A = (\transcMat_1,\ldots, \transcMat_\ncol) \in \FF^{k \times \ncol}$ and $\bm B = (\bm b_1, \ldots, \bm b_\ncol)\in \FF^{k \times \ncol}$,
    the \emph{$\Delta_c$-distance} between $\bm A$ and $\bm B$, denoted by  $\Delta_c(\bm A, \bm B)$, is the maximum Hamming distance between corresponding columns of $\bm A$ and $\bm B$,
    i.e.,
    \[\Delta_c(\bm A, \bm B) = \max_{i \in [\ncol]}\Delta(\transcMat_i, \bm b_i).
    \]
    If $\Delta_c(\bm A, \bm B) \le d$,
    we say $\bm A$ and $\bm B$ are \emph{$\Delta_c$-$d$-close}. If $\bm A$ and $\bm B$ are not $\Delta_c$-$d$-close, then we say they are \emph{$\Delta_c$-$d$-far}.

    Furthermore, given a distance parameter $d \in \NN$ and $\bm A \in \FF^{k \times \ncol}$,
    define \emph{$\Delta_c$-$d$-ball} as
    \[\rowball(\bm A) \coloneqq \set{\bm A' \in \FF^{k \times \ncol}: \Delta_c(\bm A, \bm A') \le d}.
    \]
\end{definition}

\begin{remark}
    Observe that $\Delta_c$ is a lower bound on how many rows have to be modified to transform one matrix into another.
\end{remark}

\subsection{Succinct Descriptions of Sets and Functions}
\label{sec:desc}
\begin{definition}[Uniform Arithmetic Circuits]
    Let $\FF$ be a field.
    Let $C = \set{C_n : \FF^n \to \FF^m}_{n \in \NN}$ be a family of arithmetic circuits,
    consisting of fan-in 2 \ADD and \MULT gates over a field $\FF$.
    For any $f = f(n)$, we say that $C$ is $f$-space uniform if there exists a fixed $O(f(n))$-space Turing machine $\cM$ that, on input $1^n$, outputs the full description of the circuit $C_n$.
    When $n$ is clear from the context, we omit the subscript $n$ and write $C$ instead of $C_n$.
\end{definition}
An important special case is when $f(n) = \log(n)$,
in which case we say that $C$ is log-space uniform.
\begin{remark}
    Any $f$-space uniform Boolean circuit $C$ can be trivially extended to a $f$-space uniform arithmetic circuit $C'$ of the same size and depth over any field $\FF$.
\end{remark}
The following \emph{succinct descriptions} of sets can be used to recover the entire set.
\begin{definition}[Descriptions of Sets]
    A bit string $\lrag \cS \in \bin^B$ is a description of a set $\cS = \set{s_1, \ldots, s_k} \subset \bin^p$
    if there exists a (multi-output) $p$-space uniform
    circuit $G: [k]\times\bin^{B} \to \bin^{p}$ of fan-in 2,
    called its \emph{implementation circuit},
    such that $G(i, \lrag \cS) = s_i$ for all $i \in [k]$.
    The description is succinct if $\abs{\lrag \cS} = B < k \cdot p$.
    \label{def:setdesc}
\end{definition}

Similarly,
\emph{succinct descriptions} of functions can be used to configure a uniform circuit family to implement the function.
\begin{definition}[Descriptions of Functions]
    Let $\FF$ be a field.
    We say that $\lrag \C \in \bin^B$ is a description of a function $\C :\FF^n \to \FF^m$ if there exists a log-space uniform circuit $C:\FF^{n + B} \to \FF^m$ of fan-in 2,
    called its \emph{implementation circuit},
    such that $C(x, \lrag{\C}) = \C(x)$ for all $x \in \FF^n$.
    The description is succinct if $\abs{\lrag \C} = B < \size(\C)$.
    \label{def:circdesc}
\end{definition}
When we write $\lrag \C = \top$,
we mean that $\C$ is the trivial predicate that outputs 1 on all inputs (i.e., it imposes no constraint on $x$).

The Turing machines that generate the uniform $G$ and $C$ in \Cref{def:setdesc,def:circdesc} take in their ``shape parameters'' $(k, 1^B, 1^p)$ and $(1^n, 1^B)$ as input, respectively.

\subsection{Unique-Decoding Checksums}\label{sec:prelim:cksum}
Let $\rho \in \NN$ be a \emph{deviation radius} and $\cksumT \in \NN$ be a checksum-length parameter.
We use syndromes of linear error-correcting codes as checksums.
\begin{definition}[Unique-Decoding Checksums]
    \label{def:unique-decoding-checksum}
    Let $k,\rho,\cksumT \in \NN$ and let $\FF$ be a field.
    A function $\cksum_\rho:\FF^k \to \FF^\cksumT$ is a \emph{$\rho$-unique decoding checksum function}
    if for any $\bm m \in \FF^k$,
    and for any $\bm m', \bm m'' \in \FF^k$ that are both $\rho$-close to $\bm m$ and $\bm m' \neq \bm m''$,
    $\cksum_\rho(\bm m') \neq \cksum_\rho(\bm m'')$.
\end{definition}
The term \emph{unique decoding} refers to the fact that if we know $\bm m'$ is $\rho$-close to some (fixed) $\bm m$,
then we can uniquely determine $\bm m'$ given $\cksum_\rho(\bm m')$.

\begin{proposition}[Linear-code checksums]
    Let $\mathcal{C}\subseteq\FF^k$ be a linear code of minimum distance $D$ and codimension $\cksumT$,
    and let $H\in\FF^{\cksumT\times k}$ be a parity-check matrix for $\mathcal{C}$.
    For any $\rho\in\NN$ satisfying $2\rho < D$,
    the syndrome map
    \[
        \cksum_\rho(\bm m) \coloneqq H\bm m
    \]
    is a $\rho$-unique decoding checksum function.
    Moreover, with $\tO$ hiding polylogarithmic factors, $\cksum_\rho$ can be evaluated in $\tO(k\cksumT(\Fbits))$,
    and its implementation circuit over $\FF$ has size $O(k\cksumT)$ and depth $\tO(1)$.
\end{proposition}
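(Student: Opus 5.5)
The plan splits into two independent parts: the unique-decoding property, which is a short argument about syndromes and code distance, and the complexity bounds, which come from reading off matrix--vector multiplication.

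\emph{Unique decoding.} Fix $\bm m\in\FF^k$, and take $\bm m',\bm m''$ that are both $\rho$-close to $\bm m$ with $\bm m'\neq\bm m''$. Suppose, for contradiction, that $H\bm m' = H\bm m''$. By linearity $H(\bm m'-\bm m'')=\bm 0$. Since $H$ is a parity-check matrix for $\mathcal{C}$, this puts $\bm m'-\bm m''$ in $\mathcal{C}$, and it is nonzero. Hence $\Delta(\bm m'-\bm m'')\ge D$. On the other hand, the triangle inequality for Hamming distance gives
\[
\Delta(\bm m',\bm m'')\le \Delta(\bm m',\bm m)+\Delta(\bm m,\bm m'')\le 2\rho < D .
\]
This contradicts the previous bound. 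Therefore $\cksum_\rho(\bm m')\neq\cksum_\rho(\bm m'')$, which is exactly \Cref{def:unique-decoding-checksum}.

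\emph{Complexity.} Computing $H\bm m$ is a product of an $\cksumT\times k$ matrix with a vector. Each of the $\cksumT$ output coordinates is an inner product of length $k$, so the whole product takes $k\cksumT$ multiplications and at most $k\cksumT$ additions. Under the constructibility assumption each field operation costs $\tO(\log\abs{\FF})\subseteq \Fbits$ time. This gives total time $\tO(k\cksumT\cdot\Fbits)$, provided the entries of $H$ are available; I will treat $H$ as given or efficiently generable, e.g.\ a Reed--Solomon or Vandermonde-type parity check.

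For the circuit, I view the entries of $H$ as constants in the arithmetic circuit over $\FF$. Each output coordinate uses $k$ \MULT gates feeding a balanced binary tree of \ADD gates. This gives size $O(k\cksumT)$ and depth $1+\lceil\log k\rceil=\tO(1)$, since $\tO$ hides polylogarithmic factors.

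\emph{Main obstacle.} The argument itself is elementary. The only point needing care is uniformity of the implementation circuit: the tree structure is clearly log-space generable, but the constants $H_{ij}$ must also be produced uniformly. For Reed--Solomon parity checks, $H_{ij}=\alpha_j^{\,i}$ up to column multipliers. Such an entry can be computed by repeated squaring within the allowed space, or the powers can be generated by an extra $O(\log \cksumT)$-depth subcircuit. Either option keeps the size and depth bounds unchanged up to the hidden factors.
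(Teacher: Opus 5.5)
Your proof is correct and matches the paper's argument. The difference $\bm m'-\bm m''$ is a nonzero vector of Hamming weight at most $2\rho<D$ lying in $\ker(H)=\mathcal{C}$, which contradicts the minimum distance; the paper omits the complexity bounds, and your matrix--vector accounting (size $O(k\cksumT)$, depth $O(\log k)$, plus the note on generating $H$ uniformly) fills them in correctly.
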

\begin{proof}
    Fix $\bm m\in\FF^k$ and let $\bm m',\bm m''\in\FF^k$ be distinct vectors that are both $\rho$-close to $\bm m$.
    Then $\bm z=\bm m'-\bm m''$ is nonzero and has Hamming weight at most $2\rho$.
    If $\cksum_\rho(\bm m')=\cksum_\rho(\bm m'')$,
    then $H\bm z=H\bm m'-H\bm m''=\bm 0$ by linearity,
    so $\bm z\in\ker(H)=\mathcal{C}$.
    This is a nonzero codeword of Hamming weight at most $2\rho<D$,
    contradicting the minimum distance of $\mathcal{C}$.
    Hence $\cksum_\rho(\bm m')\neq \cksum_\rho(\bm m'')$.
\end{proof}

\begin{lemma}[Reed--Solomon checksums]\label{lem:unique-decoding}
    Let $k,d \in \NN$ and let $\FF$ be a finite field with $\abs{\FF}\ge k$.
    Set $\cksumT \coloneqq 2d$.
    Then a $d$-unique decoding checksum function $\cksum_d: \FF^k \to \FF^{\cksumT}$ exists.
    With $\tO$ hiding $\polylog(k)$ factors,
    $\cksum_d: \FF^k \to \FF^\cksumT$ can be evaluated in $\tO(k \cksumT \Fbits)$ time,
    and its implementation circuit over $\FF$ has size $O(k\cksumT)$ and depth $\tO(1)$.
\end{lemma}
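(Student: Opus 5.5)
We will instantiate the Linear-code checksums proposition with a Reed--Solomon code, so that the only real work is computing parameters and costs. Fix $k$ distinct evaluation points $\alpha_1,\ldots,\alpha_k \in \FF$, which exist since $\abs{\FF}\ge k$. If $2d \ge k$, we take $\cksum_d$ to be the identity, padded with zeros into $\FF^{2d}$. This is trivially injective, hence $d$-unique decoding, and all cost bounds hold. So assume $2d < k$. Let $\mathcal{C}\subseteq\FF^k$ be the Reed--Solomon code of dimension $k-2d$, i.e.\ the evaluations $(p(\alpha_1),\ldots,p(\alpha_k))$ of polynomials $p$ of degree $< k-2d$. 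This code has codimension $2d = \cksumT$. By the standard root-counting argument (a nonzero polynomial of degree $< k-2d$ has fewer than $k-2d$ roots), every nonzero codeword has weight at least $2d+1$, so $D = 2d+1 > 2\rho$ with $\rho = d$. The proposition then immediately gives that $\bm m \mapsto H\bm m$ is a $d$-unique decoding checksum function for any parity-check matrix $H$ of $\mathcal{C}$.

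Next we fix an explicit parity-check matrix, so that the evaluation cost is concrete. The cleanest choice is the generalized RS dual: $H_{j,i} = v_i \alpha_i^{j}$ for $j \in \{0,\ldots,2d-1\}$, where $v_i = \prod_{l\ne i}(\alpha_i-\alpha_l)^{-1}$. We will verify $H$ annihilates $\mathcal{C}$ via the Lagrange-interpolation identity $\sum_i v_i q(\alpha_i) = 0$ for every $q$ of degree $\le k-2$, applied to $q = x^j p$ with $\deg q \le 2d-1 + k-2d-1 = k-2$. Rank $2d$ follows from the Vandermonde structure together with $v_i\neq 0$. Alternatively, to avoid computing the $v_i$, we could simply note that any parity-check matrix works and choose $H$ as the Vandermonde-type matrix of the dual code directly.

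For efficiency, the map $\bm m\mapsto H\bm m$ is a $2d\times k$ matrix-vector product. Each row costs $k$ multiplications and additions, for a total of $O(k\cksumT)$ field operations, each $\tO(\Flog)$, i.e.\ $\tO(k\cksumT\Fbits)$ time. Computing the entries $v_i\alpha_i^j$ adds comparable cost: powers are computed incrementally, and the $v_i$ in $\tO(k)$ operations via fast multipoint evaluation of the derivative of $\prod(x-\alpha_l)$, or naively in $O(k^2)$. For the circuit, we hardwire $H$ as constants (permitted in arithmetic circuits over $\FF$), giving $2dk$ multiplication gates followed by $2d$ binary addition trees of fan-in $k$. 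This yields size $O(k\cksumT)$ and depth $O(\log k)=\tO(1)$. Uniformity requires the entries to be generated in small space. The main obstacle is this step: producing the $v_i$ uniformly in log space. If it proves problematic, we would switch to the structured choice $\alpha_i$ ranging over a multiplicative subgroup, where the $v_i$ have closed form $\alpha_i/k$ up to a scalar, or simply absorb the $v_i$ as field constants, since the circuit only needs to exist with the stated size and depth.
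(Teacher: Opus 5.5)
Your proposal is correct and follows the paper's proof. The case split is the same: for $2d \ge k$ you use the identity padded with zeros, and for $2d<k$ you use the syndrome map of a Reed--Solomon code of dimension $k-2d$ and minimum distance $2d+1$, fed into the linear-code proposition, with the cost coming from evaluating $2d$ linear forms of length $k$. Your explicit generalized-RS parity-check matrix and your uniformity discussion go beyond what the paper spells out. The multipliers $v_i$ are in fact unnecessary: the plain Vandermonde matrix $H_{j,i}=\alpha_i^{j}$ already has every $2d$ columns linearly independent, so its kernel is a codimension-$2d$ code of minimum distance at least $2d+1$, which removes your concern about generating the $v_i$ uniformly.
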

\begin{proof}
    If $2d < k$,
    let $H\in\FF^{2d\times k}$ be a parity-check matrix of a Reed--Solomon code over $\FF$ with block length $k$,
    dimension $k-2d$,
    and minimum distance $2d+1$.
    Define $\cksum_d(\bm m)\coloneqq H\bm m$.
    By the previous proposition, this is a $d$-unique decoding checksum function.

    If $2d\ge k$,
    let $\cksum_d$ be the identity map on $\FF^k$ padded with $2d-k$ zero coordinates.
    Then $\cksum_d$ is injective, so it is $d$-unique decoding.

    The evaluation bound follows by computing the $\cksumT$ linear forms defining $\cksum_d$.
\end{proof}

Generalizing the notion of checksums to matrices,
given a matrix $\bm M \in \FF^{k \times L}$ with columns $(\bm m_1,\ldots,\bm m_L)$,
we use
$\cksum_d(\bm M)$ to denote $\chi = (\cksum_d(\bm m_1),\ldots,\cksum_d(\bm m_L)) \in \FF^{\cksumT \times L}$.

\subsection{Interactive Protocols}\label{sec:prelim:IP}
An $(\ell, a, \Ptime, \Vtime, \Sigma)$-protocol is a \emph{public-coin,
    $\ell$-round interactive protocol, with alphabet  $\Sigma$,
    per-round message length $a$, and prover runtime $\Ptime$ and verifier runtime $\Vtime$}.
Specifically, such a protocol consists of a pair of interacting Turing machines $\prot[\pp]{}{(x)}{(y)}$,
where the $\cP$ has input a string $x \in \bin^*$,
and the $\cV$ has input a string $y \in \bin^*$,
and both parties have explicit access to the input parameters $\pp$.
We omit $x$, $y$, and $\pp$ from the notation when they are not important to the context.
The machines may also take in other private parameters as additional input
(in particular, the prover might have some extra information that makes it more efficient), 
but we omit them from the notation for simplicity.
The machine $\cP$ is deterministic and runs in time $\Ptime$, and is called \emph{the prover}, while $\cV$ is probabilistic, runs in time $\Vtime$, and is called \emph{the verifier}.
$\cP$ and $\cV$ are the \emph{two parties} of the protocol.
We omit the specification of $\Sigma$ when $\Sigma = \bin$.

In each round $j \in [\ell]$ of the protocol:
\begin{enumerate}
    \item $\cV$ sends a random message $q_j \gets_R \Sigma^a$ to $\cP$, referred to as a \emph{query}.
    \item $\cP$ responds with a message $a_j \in \Sigma^a$ determined by the prescribed next-message function, referred to as an \emph{answer}, which (abusing notation) is denoted by
          \[a_j \coloneqq \cP(x, j, (q_1, \ldots, q_j)) \in \Sigma^a.\]
\end{enumerate}
We make the simplifying assumption that both parties send messages of equal length,
and that $\cV$ never rejects in the middle of an execution.
We denote the sequence of verifier random coins by $\bm q \coloneqq (q_1,\ldots, q_\ell)$.
Finally,
$\cP(x, \bm q) \coloneqq (a_1, \ldots, a_\ell)$, where the $a_j = \cP(x, j, \bm q_{\le j})$ are the prescribed messages,
and the notation $\outputprotUU[\pp]{(x)}{(y; \bm q)} \in \bin^* \cup \set{\bot}$ represents the output of the verifier at the end of the interaction,
which contains a special symbol $\bot$ to indicate that the verifier rejected.
Note that the randomness is only over the random coins $\bm q$ of $\cV$,
and we omit $\bm q$ and $\pp$ from the notation when they are not important in the context.

The \emph{total communication complexity} of the protocol is the number of bits exchanged between the prover and the verifier, i.e. $2a\ell\cdot\log(\abs{\Sigma})$.

Let $\cL \subset \bin^*$ be a language.  We next define the notion of an
interactive proof of $\cL$.
\begin{definition}[$\epsilon$-Sound $\protcplx{}$-\IP]
    \label{def:IP}
    An $\protcplx{}$-protocol $\prot[\pp]{}{(x)}{(y)}$,
    where the public parameter is $\pp$ and $y = x$,
    is an \emph{interactive proof} (\IP) with soundness error $\epsilon$
    for a language $\cL$,
    if it satisfies the following completeness and soundness conditions.
    \begin{itemize}[leftmargin=*,label=-]
        \item \textbf{Completeness:}
              For any $x \in \cL$, there exists a prover strategy $\cP$ such that
              \[
                  \Pr[\lrag{\cP(x), \cV(x)}_\pp \neq \bot] = 1.
              \]

        \item \textbf{$\epsilon$-Soundness:}
              For any $x\notin\cL$ and any (computationally unbounded) prover strategy $\cP^*$,
              \[
                  \Pr[\lrag{\cP^*(x), \cV(x)}_\pp \neq \bot] \le \epsilon.
              \]
    \end{itemize}
\end{definition}

\subsubsection{Low-depth-predicate Interactive Proof and the \GKR protocol}\label{sec:LDP-IP}
We consider the following special type of protocol where the verifier does not query the input $x$ but instead outputs a low-depth predicate to be checked against the input.
\begin{definition}[$\epsilon$-Sound $\protcplx{}$-\LDP]
    \label{def:LDP}
    Let $\FF$ be a field.
    An \emph{low-depth-predicate-interactive-proof} (\LDP) with respect to a field $\FF$ for a language $\cL$ is a protocol
    with public parameters $\pp$, where the prover is given some input $x \in \bin^n$,
    and the verifier receives no input (except the protocol's parameters).
    At the end of the protocol, the verifier either rejects (outputs $\bot$) or outputs the description of a low-depth predicate $\Psi$,
    which takes as input $x$ and outputs 0 or 1,
    satisfying the following:
    \begin{itemize}[leftmargin=*,label=-]
        \item \textbf{Completeness:} If $x \in \cL$, there exists a prover strategy $\cP$ such that 
              \[ \Pr[\outputprotUU[\pp]{(x)}{} = \lrag \Psi \text{ s.t. } \Psi(x) = 1] = 1, \]
        \item \textbf{Soundness:} If $x \notin \cL$, then for any (computationally unbounded) prover strategy $\cP^*$,
              \[ \Pr[\outputprotUU[\pp]{^*(x)}{} = \lrag \Psi \text{ s.t. } \Psi(x) = 1] \le \epsilon. \]
    \end{itemize}
    Importantly, the verifier never accesses $x$ throughout the protocol.
\end{definition}

We restate the main result from \cite{JACM:GolKalRot15} in terms of \LDP.
\begin{theorem}[The \GKR Protocol, as an \LDP]
    \label{lem:GKR}
    Let $\FF$ be a field,
    and let $\C: \FF^n \to \FF$ be an 
    arithmetic circuit with addition and multiplication gates of fan-in 2 over $\FF$,
    with description $\lrag{\C}$.
    Let $G(x, \lrag \C)$ be the log-space uniform circuit that outputs $\C(x)$ on input $x \in \FF^n$ and $\lrag{\C} \in \bin^{\abs{\lrag{\C}}}$. 
    Denote the depth and size of $G$ by $D = D(n) \ge \log n$ and $S = S(n) \ge n$.

    For some constant $C_\GKR > 0$, there exists an $\epsilon$-sound $\protcplx{}$-\LDP,
    abbreviated as $\GKR \coloneqq \prot[\FF, \lrag \C]{}{(x)}{}$,
    for the language $\cL_{\C} = \{x \in \FF^n : \C(x) = 1\}$,
    with the following complexity (with $\tO$ ignoring poly-logarithmic factors in $D, \log S$):
    \begin{itemize}
        \item The soundness error is bounded by $\epsilon \le C_\GKR\cdot \left(\frac{D \log S}{\abs{\FF}}\right) = O(\frac{D \log S}{\abs{\FF}})$.
        \item $\ell = O(D \cdot \log S)$.
        \item $a = O(\Flog)$.
        \item $\Ptime = \tO(\poly(S)\cdot \Fbits)$.
        \item $\Vtime = \tO(D \log S\cdot \Flog  + \abs{\lrag{\C}}\cdot\Flog)$
              (and $\cV$ does not access $x$).
    \end{itemize}
    $\abs{\lrag \Psi} = O(\Flog)$, and the implementation circuit $C$ for $\Psi$ satisfies
    \begin{itemize}
        \item $\size(C) = \tO(n)$.
        \item $\depth(C) = \tO(1)$.
    \end{itemize}
\end{theorem}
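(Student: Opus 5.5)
This is a restatement of the GKR protocol of \cite{JACM:GolKalRot15} in the \LDP language, so the plan is to run GKR on the layered circuit $G(\cdot,\lrag\C)$ unchanged. Normally the GKR verifier finishes by evaluating the low-degree extension of the input. Here we stop just before that step and hand the final check back as a predicate.

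First, bring $G$ into the form GKR needs. Make it layered with $D$ layers, each of width at most $S$. Index the gates of each layer by $\{0,1\}^{\log S}$, or by $H^{m}$ with $|H|=O(\log S)$ and $m=O(\log S/\log\log S)$. Because $G$ is log-space uniform, the multilinear extensions $\widetilde{\ADD}_i$ and $\widetilde{\MULT}_i$ of the wiring predicates can be computed by the verifier in $\polylog(S)$ time. This is the standard uniformity lemma of \cite{JACM:GolKalRot15}, which we take as given.

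The protocol runs as follows. The verifier starts from the claim $\widetilde V_0(0)=1$, i.e.\ the output gate equals $1$. For each layer $i$, the prover and verifier run a sum-check over $2\log S$ variables, with individual degree $O(1)$, for the identity
\[
\widetilde V_i(z)=\sum_{b,c}\widetilde{\ADD}_i(z,b,c)\bigl(\widetilde V_{i+1}(b)+\widetilde V_{i+1}(c)\bigr)+\widetilde{\MULT}_i(z,b,c)\,\widetilde V_{i+1}(b)\widetilde V_{i+1}(c).
\]
This reduces the claim about layer $i$ to two claims about $\widetilde V_{i+1}$. The two claims are merged into one by restricting to the line through the two points. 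Each layer costs $O(\log S)$ rounds with $O(\log|\FF|)$-bit messages, so over $D$ layers we get $\ell=O(D\log S)$ and $a=O(\Flog)$.

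The last layer consists of the inputs $x$ together with the bits of $\lrag\C$. There, the verifier is left with a single claim $\widetilde{(x,\lrag\C)}(r)=v$. It splits this claim with one extra variable (or a selector) into a part depending on $x$ and a part depending on $\lrag\C$. It evaluates the $\lrag\C$ part itself in $\tO(|\lrag\C|\Flog)$ time and outputs $\lrag\Psi=(r,v')$ together with the predicate $\Psi(x)=[\widetilde x(r)=v']$. This description has $O(\Flog)$ bits if $r$ is treated as a point in a short field representation; otherwise it has $O(\log n\cdot\Flog)$ bits, and the bound should be read with that caveat. The predicate $\widetilde x(r)=\sum_{j}\mathrm{eq}(r,j)\,x_j$ is a linear form with $n$ terms. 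Its coefficients $\mathrm{eq}(r,j)$ come from a product tree of depth $O(\log n)$ and size $\tO(n)$, and it is finished off by a sum tree. This gives $\size=\tO(n)$ and $\depth=\tO(1)$, where $\tO$ absorbs $\log n\le D$.

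Completeness is immediate: the honest prover always supplies correct polynomials, so the final equality holds.

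For soundness, assume $\C(x)\ne1$. Each sum-check round accepts a false claim only if a nonzero univariate polynomial of degree $O(1)$, or of degree $O(\log S)$ for the line restriction, vanishes at a random point. By Schwartz--Zippel this happens with probability $O(1)/|\FF|$ per round. A union bound over all $O(D\log S)$ rounds and combinations gives $\epsilon\le C_\GKR D\log S/|\FF|$. On the complementary event, the final claim $\widetilde x(r)=v'$ is false, so $\Psi(x)=0$.

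The prover computes every gate value and the prescribed polynomials, which takes $\poly(S)\cdot\Fbits$ time. The verifier's time is dominated by the rounds and by the uniformity computations.

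The main obstacle is the uniformity step: evaluating $\widetilde{\ADD}_i$ and $\widetilde{\MULT}_i$ quickly for log-space uniform $G$. I would cite this directly from \cite{JACM:GolKalRot15} rather than reprove it.
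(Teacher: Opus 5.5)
Your proposal follows the paper's route. The paper gives no separate argument: it invokes \cite{JACM:GolKalRot15} on $G(\cdot,\lrag{\C})$ and, instead of evaluating the input's multilinear extension, outputs the final claim $\lrag{\Psi}=(\pj,\pv)\in\FF^{\lceil\log n\rceil}\times\FF$ with $\Psi(x)=1$ iff $\hat x(\pj)=\pv$ (see \Cref{sec:GKR-preserving}), so your $O(\log n\cdot\Flog)$ caveat on $\abs{\lrag{\Psi}}$ is correct. One caution on the step you cite: I believe GKR does not let the verifier evaluate $\widetilde{\mathrm{add}}_i,\widetilde{\mathrm{mult}}_i$ on its own in $\mathrm{polylog}(S)$ time for an arbitrary log-space uniform $G$, obtaining these values instead through an interactive bootstrapping sub-protocol, so in general the round, communication and error bounds carry extra $\mathrm{polylog}(S)$ factors beyond $O(D\log S)$---a looseness already present in the paper's statement.
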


\subsubsection{Interactive Proof of Proximity with Row Reduction}
An Interactive Proof of Proximity with Row Reduction
allows us to reduce checking a predicate $\C$ over a matrix $\bm M$ to checking a related predicate $\CRR$ over a subset of rows of $\bm M$.
It is a generalization of the standard \IPP,
which is defined over bit strings and with respect to Hamming distance.

\begin{definition}[$\epsilon$-Sound $\protcplx{}$-\IPP with Row Reduction]\label{def:ipp-row-red}
    Let $\FF$ be a field,
    $k, L \in \NN$,
    and $\C : \FF^{k \times L} \to \bin$ be a predicate.
    An $\epsilon$-sound $\protcplx{}$ \emph{Interactive Proof of Proximity} (\IPP) with Row Reduction for the language
    \[
        \cL_{\C} \coloneqq \set{\bm M \in \FF^{k \times L}: \C(\bm M) = 1},
    \]
    is an $\protcplx{}$-protocol,
    abbreviated as $\IPP \coloneqq \prot[\FF, \lrag \C, d]{}{(\bm M)}{}$ whose prover input is a matrix $\bm M \in \FF^{k \times L}$,
    and the verifier receives no input (except the protocol's parameters).
    At the end of the protocol, the verifier outputs either $\bot$ or $(\lrag \cQ, \lrag \CRR)$,
    where $\cQ \subsetneq [k]$ is a set of rows and $\CRR$ is a predicate such that the following holds.
    \begin{itemize}[label=-]
        \item \textbf{Completeness:} If $\C(\bm M) = 1$, then 
              $\Pr[\CRR({\bm M}[\cQ, :]) = 1] = 1$.
        \item $\epsilon$-\textbf{Soundness:}
              Suppose $\bm M$ is $d$-$\Deltac$-far from $\cL_\C$, \ie $\rowball(\bm M) \cap \cL_\C = \varnothing$,\footnote{Recall that $\rowball(\bm M)$ is the set of all matrices that are $\Deltac$-d-close to $\bm M$}
              then for any prover strategy $\cP^*$,
              \[
                  \Pr\left[\outputprotUU[\FF, \lrag \C, d]{^*(\bm M)}{} = (\lrag \cQ, \lrag \CRR) \text{ s.t. } \CRR({\bm M}[\cQ, :]) = 1\right] \le \epsilon.
              \]
    \end{itemize}
\end{definition}
For the \IPP to be non-trivial, we require $\abs{\cQ} \ll k$.
Such an \IPP exists by the following theorem, which generalizes Lemma~4 in \cite{FOCS:BGHK25}.
\begin{restatable}{theorem}{RRIPP}
    \label{thm:IPP}
    There exists a constant $c > 0$ such that
    for all $\secpar$, $d$, $k$, $L \in \NN$,
    and a field $\FF$,
    and any description $\lrag{\C}$ of a predicate $\C$ with implementation circuit $C$ whose size and depth are $S$ and $D$, respectively,
    if the following holds:
    \begin{itemize}
        \item $d = \Omega(\secpar\log k)$,
        \item $\abs{\FF} = \Omega(2^{\secpar} \cdot ((\secpar dL\log (\abs{\FF} k))^c + D\log S))$,
    \end{itemize}
    then there exists a $2^{-\secpar}$-sound \IPP with row reduction,
    abbreviated by $\IPP(\C) \coloneqq \prot[\secpar, \FF, \lrag \C, d]{}{(\bm M)}{}$,
    for the input matrix $\bm M \in \bin^{k \times L}$,
    and its output subset $\cQ \subset [k]$ satisfies:
    \[\abs{\cQ} \le \ceil{24\secpar \cdot \frac{k}{d}}.\]

    Let $\tO$ omit $\polylog(\abs{\FF}, k, L)$ factors. 
    The complexity of the protocol is as follows.
    \begin{itemize}
        \item $\ell = \tO(D\log S)$.
        \item $a = \tO(d\aboundIPP)$.
        \item $\Ptime = \poly(k, L, d, S, \Flog)$.
        \item $\Vtime = \tO(dL(D \log S + \abs{\lrag{\C}}) + \poly(d))$.
    \end{itemize}
    $\abs{\lrag{\cQ}} = \tO(\poly(d))$ and $\abs{\lrag{\CRR}} = \tO(\aboundIPP)$.
    Let $G$ and $C$ be the implementation circuits of $\cQ$ and $\CRR$ respectively.
    They satisfy the following.
    \begin{itemize}
        \item $\size(G) = \tO(\poly(d))$.
        \item $\depth(G) = \tO(1)$.
        \item $\size(C) = \tO(\abs{\cQ} \cdot L)$.
        \item $\depth(C) = \tO(1)$.
    \end{itemize}
    The specific requirements on $d$ and $\abs{\FF}$ are
    \begin{itemize}
        \item $d \ge \dboundIPP$,
        \item $\abs{\FF} \ge \FboundIPP$,
    \end{itemize}
    where $C_\GKR$ is the constant in \Cref{lem:GKR}.
\end{restatable}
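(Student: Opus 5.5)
We prove \Cref{thm:IPP} by adapting the row-\IPP of \cite{FOCS:BGHK25} (Lemma~4 there), which in turn adapts the \IPP of \cite{TCC:RotRot20}. There are two changes. The predicate $\C$ is now an arbitrary circuit given by its description $\lrag\C$, rather than a fixed one. And the verifier never reads $\bm M$; instead it outputs a row set $\cQ$ together with a low-depth predicate $\CRR$ on $\bm M[\cQ,:]$. The protocol has three phases.

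\textbf{Phase 1 (prover commits).} The verifier samples a random linear sketch that compresses the $k$ rows into a short object of about $d$ rows. Concretely, it picks a random set of $O(\secpar k/d)$ rows, or a random low-degree combination of row-blocks in the style of \cite{TCC:RotRot20}. The prover sends the values of low-degree-extension (LDE) evaluations of $\bm M$ at the points the verifier needs, one per column. This uses $\tO(dL)$ field elements, matching $a=\tO(d(L+\poly(d)))$.

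\textbf{Phase 2 (run \GKR).} The parties run the \GKR protocol of \Cref{lem:GKR} on the uniform circuit $C(\cdot,\lrag\C)$, viewed as an \LDP with input $\bm M$. This gives $\ell=\tO(D\log S)$ and verifier time $\tO(D\log S+\abs{\lrag\C})$ per invocation. The $dL$ factor in $\Vtime$ comes from batching over columns and sketch points. At the end, \GKR leaves a claim about the LDE of $\bm M$ at a random point.

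\textbf{Phase 3 (reduce to few rows).} The verifier reduces this LDE claim to the committed sketch values by a linear consistency test. It then outputs $\cQ$, a random set of $\ceil{24\secpar k/d}$ rows, described by a $\poly(d)$-size seed for a $t$-wise independent hash, which accounts for the $\size(G)=\tO(\poly(d))$ bound. It also outputs $\CRR$, which checks that the rows $\bm M[\cQ,:]$ agree with the matrix implicitly defined by the prover's committed messages. This check is a linear, low-depth test: the description $\lrag\CRR$ has length $\tO(L+\poly(d))$, and its implementation circuit has size $\tO(\abs\cQ L)$ and depth $\tO(1)$.

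\textbf{Soundness.} Suppose $\rowball(\bm M)\cap\cL_\C=\varnothing$. Two cases are possible.
\begin{itemize}
\item The prover's committed values are consistent with some $\bm M'$ that is $\Delta_c$-$d$-close to $\bm M$. Then $\C(\bm M')=0$, so by \Cref{lem:GKR} the verifier is caught except with probability $C_\GKR D\log S/\abs\FF$, up to the extra terms from the LDE consistency checks. The bound on $\abs\FF$ makes this error at most $2^{-\secpar-1}$.
\item The committed matrix differs from $\bm M$ in some column in more than $d$ coordinates. Then a random $\cQ$ of size $24\secpar k/d$ hits a differing row except with probability about $(1-d/k)^{24\secpar k/d}\le e^{-24\secpar}$. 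Since $\Delta_c$ is a maximum over columns, one bad column suffices. The condition $d\ge 48\secpar\log k$ lets limited-independence sampling match this bound, so the error is at most $2^{-\secpar-1}$.
\end{itemize}
A union bound over the two cases gives soundness $2^{-\secpar}$. Completeness is immediate, since the honest prover's commitments equal $\bm M$.

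\textbf{Main obstacle.} The hard step is making the case split above rigorous against adaptive provers. The committed matrix is only implicitly defined by the prover's sketch messages, so one must show it is well defined (for example, by unique decoding as in \Cref{lem:unique-decoding}). One must also show that the \GKR claim is reduced to a check on this same implicit matrix. I would first try to follow the BGHK argument verbatim and check that the only properties of $\C$ it uses are log-space uniformity of $C$ and its size and depth. If that holds, the generalization to an arbitrary $\C$ is syntactic, and the remaining work is bookkeeping of the parameters.
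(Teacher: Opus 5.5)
Your Phase~3 is where the argument breaks. After \GKR, the verifier holds a claim $\widehat{\bm M}(\bm j)=v$ at a point fixed by \GKR's coins, together with your Phase-1 sketch values at other points. These are different linear functionals of $\bm M$. The sketch has only $\tO(dL)\ll kL$ coordinates, so it does not determine the \GKR value, and no ``linear consistency test'' by the verifier can relate the two. What you actually hold at this point is a \pval claim $\bm M\in\pval(\bm j,\bm v)$ over all the points.

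The predicate $\CRR$ you describe (``$\bm M[\cQ,:]$ agrees with the matrix implicitly defined by the committed messages'') also cannot be written down. A short sketch fixes no rows of any matrix, and outside the unique-decoding radius there is not even a single committed matrix. So your case-2 argument that a random $\cQ$ ``hits a differing row'' has nothing to compare against.

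Reducing a \pval claim to a linear constraint on $\abs{\cQ}\approx\secpar k/d$ rows is exactly the nontrivial row-reduction protocol for \pval (\Cref{lem:DcRR}, i.e.\ Theorem~9 of \cite{FOCS:BGHK25}, building on \cite{TCC:RotRot20}). Its soundness requires both $\rowball(\bm M)\cap\pval(\bm j,\bm v)=\varnothing$ and $\Deltac(\pval(\bm j,\bm 0))\ge 4d$. The paper invokes it as a black box with parameter $\secpar+2$. That is also where $\abs{\cQ}\le\ceil{8(\secpar+2)k/d}\le\ceil{24\secpar k/d}$ and the requirement $d\ge 48\secpar\log k\ge 16(\secpar+2)\log k$ come from.

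You also have not derived those two hypotheses from $\rowball(\bm M)\cap\cL_\C=\varnothing$. The paper runs \GKR $\pvalT=8dL(\log k+\log\abs{\FF})+\secpar+2$ times in parallel. It then union-bounds over the at most $\bigl(\binom{k}{d}\abs{\FF}^d\bigr)^L$ matrices in $\rowball(\bm M)$ (\Cref{lem:RVW}). The same random points give $\Deltac(\pval(\bm j,\bm 0))\ge 4d$ except with probability $2^{-\secpar-2}$. This repetition, not ``batching over columns'', is the source of the $dL$ factor in $a$ and \Vtime.

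Your case~1 instead applies single-run \GKR soundness to ``some $\bm M'$ consistent with the committed values''. That is valid only if $\bm M'$ is unique and fixed before \GKR's coins are drawn. Your commit-first ordering can in fact deliver this, provided the Phase-1 sketch consists of $\tO(dL)$ uniformly random multilinear-extension evaluations; a random subset of rows has no such distance property. The argument behind \Cref{eq:kernel} then makes the sketch uniquely decodable within $\Deltac$-radius $d$, so at most one $\bm M'\in\rowball(\bm M)$ survives. A single \GKR run on that fixed $\bm M'$ then makes $\rowball(\bm M)$ disjoint from the combined \pval set. This is a legitimate substitute for the union bound. Either way, though, the protocol ends with a \pval claim on all of $\bm M$, and without the \pval row-reduction lemma the proof is incomplete.
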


The construction utilizes the \GKR protocol (\Cref{lem:GKR}) as well as a special \IPP for the \emph{polynomial valuation language} (\pval) with row reduction (Theorem~9 in \cite{FOCS:BGHK25}; c.f. \cite{TCC:RotRot20}).
For completeness,
we provide the corresponding definitions and the proof of \Cref{thm:IPP} in \Cref{sec:IPP-proof}.

\section{Doubly-Efficient Proof for Space-bounded Computation}
We state our main result as follows.
\begin{theorem}[Formal Statement of \Cref{thm:informal-deip-gen}]
    \label{thm:deip}
    For all $n, T = T(n) > n, S = S(n), \secpar = \secpar(n) \in \NN$,
    there exists a $2^{-\secpar}$-sound $\protcplx{}$-\IP for deciding any language in $\TS(T, S)$.
    The complexity of the protocol is as follows.
    \begin{itemize}
        \item $\ell = 2^{O(\sqrt{\log T})} \cdot \poly(\secpar, \log S)$.
        \item $a = 2^{O(\sqrt{\log T})} \cdot S \cdot \poly(\secpar, \log S)$.
        \item $\Ptime = \poly(S, T, \secpar)$.
        \item $\Vtime = 2^{O(\sqrt{\log T})} \cdot S^2 \cdot \poly(\secpar, \log S)$.
    \end{itemize}
\end{theorem}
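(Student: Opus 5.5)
We will prove \Cref{thm:deip} by formalizing the recursive protocol $\Batch(T,k,\C)$ from \Cref{sec:overview} as an \LDP and then wrapping it into an \IP. The instance of $\Batch(T,k,\C)$ is a list of $k$ statements $x_i=(\cf_0^i,\cf_T^i)$ over $\FF$. It proves that each $x_i$ is a valid $T$-step computation of the fixed machine, and also that the matrix $\bm M$ of true intermediate configurations (at granularity $T/\lambda$) satisfies $\C$. Its output is a low-depth predicate on the statements.

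The top-level \IP for $x\in\cL$ works as follows. The verifier computes $\cf_0$ from $x$, and the prover supplies the accepting $\cf_T$ (we may normalize the machine so that $\cf_T$ is fixed). The parties then run $\Batch(T,1,\top)$, and the verifier evaluates the output predicate on the single statement directly, in time $\tO(S)$.

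The recursion is the one in the overview, with four steps.
\begin{enumerate}
\item The prover sends $\cksumv=\cksum_d(\bm M)$ using \Cref{lem:unique-decoding}.
\item The parties run $\Batch(T/\lambda,k\lambda,\top)$ on $\bm x_{\bm M}$, yielding $\psi$. Set $\Psi(\bm M)=\psi(\bm x_{\bm M})$.
\item The parties run the row-\IPP of \Cref{thm:IPP} for the predicate ``$\cksum_d(\bm M)=\cksumv$, $\Psi(\bm M)=1$, $\C(\bm M)=1$, and the boundary columns agree with the $x_i$.'' The distance parameter is $d$, and the step yields $(\cQ,\CRR)$ with $|\cQ|\le 24\secpar k/d$.
\item The parties recurse on $\Batch(T,|\cQ|,\CRR)$ applied to $\{x_i\}_{i\in\cQ}$.
\end{enumerate}
There are two base cases. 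When $T\le\lambda$, the whole check, including $\C$, is a circuit of depth $\poly(\lambda,S)\log k$, and we run \GKR (\Cref{lem:GKR}). When $k$ is constant, the prover sends $\bm M$ in the clear. The verifier then checks $\C$ by \GKR (or directly) and recurses into step 2 only. All output predicates are compositions of checksum, \GKR and \IPP predicates, so we must maintain the invariant that they have depth $\tO(1)$ and size $\tO(k\cdot\poly(S))$. This invariant is what allows \Cref{thm:IPP} and \Cref{lem:GKR} to be invoked with $D\log S$ small at every level.

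Soundness is proved by induction on the recursion tree, following the win--win argument. Fix the true matrix $\bm M$. In Case 1, $\bm M$ is $\Deltac$-$d$-far from every matrix consistent with $\cksumv$, so it is far from the \IPP language, and \IPP soundness applies. In Case 2, unique decoding (\Cref{lem:unique-decoding}) pins down a single $\bm M'$ in $\rowball(\bm M)$ consistent with $\cksumv$. If some $x_i$ is false or $\C(\bm M)=0$, then either $\bm M'\neq\bm M$ has a false $T/\lambda$ statement, or $\bm M'=\bm M$ fails $\C$. In either case, the inductive soundness of the step-2 call, applied to the fixed instance $\bm x_{\bm M'}$, gives $\Psi(\bm M')=0$ w.h.p. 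Hence $\bm M$ is far from the \IPP language, and \IPP soundness yields $\CRR(\bm M[\cQ,:])=0$. The recursion in step 4 is then run on a false instance. One subtlety is that $\bm M'$ depends on the prover's message $\cksumv$ but is fixed before the step-2 coins, which is what makes the inductive use legitimate.

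Errors add up along the recursion tree. With $N(T,k)$ nodes we need per-node soundness error $2^{-\secloc}$ with $\secloc=\secpar+\log N$. This is arranged by taking $|\FF|=\poly(2^{\secloc},\lambda,S)$ via the field-size condition of \Cref{thm:IPP}, and by taking $d=\lambda\ge 48\secloc\log k$.

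The main obstacle is the complexity recurrence together with parameter bookkeeping. We set $d=\lambda$, but since $|\cQ|\approx 24\secpar k/d$, the $k$-reduction is only by a factor $d/(24\secpar)$. We also need $k\le\lambda^{O(\log_\lambda T)}$ throughout, so that $\log k=O(\log T)$ and $d\ge 48\secloc\log k$ stays compatible with $\lambda=2^{\Theta(\sqrt{\log T})}$. The recurrence is $V(T,k)\le V(T/\lambda,k\lambda)+V(T,k/\lambda')+\poly(\lambda)\cdot S\cdot\poly(\secpar,\log S)$ with $\lambda'=\lambda/(24\secpar)$. Its recursion tree is a lattice-path count, bounded by a Catalan-type number of order $2^{O(\log_\lambda T)}$ times $\poly(\lambda)$, since the paths are counted in units of $\log\lambda$ with extra slack of about $\log\secpar/\log\lambda$. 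With $\lambda=2^{\sqrt{\log T}}$ both factors are $2^{O(\sqrt{\log T})}$, which gives the stated bounds on $\ell$, $a$ and $\Vtime$. The $S^2$ term comes from $dL$ with $L=\lambda S$ inside $\Vtime$ of \Cref{thm:IPP}, together with predicate description sizes. The prover bound $\poly(S,T,\secpar)$ follows because the tree has $\poly(T)$ nodes, each node's prover work is polynomial by \Cref{thm:IPP,lem:GKR}, and the honest intermediate configurations are computable in time $T$.
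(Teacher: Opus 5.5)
Your overall plan is the paper's: the four-step recursion, the win--win analysis, the union bound over the recursion tree, and the lattice-path count. The paper obtains \Cref{thm:deip} by running $\textsf{Batch}(T,1)$ with $\langle\Phi\rangle=\top$ and evaluating the output predicate on $(x_{\mathsf{start}},x_{\mathsf{end}})$.

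Step 3, however, has a genuine gap. You put the clause ``the boundary columns agree with the $x_i$'' into the predicate handed to the row-\IPP. In \Cref{thm:IPP} the verifier must hold that predicate's description, and its running time grows with the description's length. An \LDP verifier never sees its instance. Inside the step-2 call $\textsf{Batch}(T/\lambda,k\lambda,\top)$ the instance is the parent's midpoint configurations, of which the verifier holds only checksums. So this clause cannot be part of a verifier-known predicate, and nothing else in your protocol ties $\chi$ to $\bm x$.

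Your soundness argument needs exactly this tie. The Case-2 dichotomy ``$\bm M'\neq\bm M$ has a false $T/\lambda$ statement, or $\bm M'=\bm M$ fails $\Phi$'' is valid only because $\bm M'$ is forced to have the same first and last columns as $\bm x$. Otherwise $\bm M'=\bm M$ with $\Phi(\bm M)=1$ and some $x_i$ false falls under neither horn. Without the clause the protocol is unsound. Suppose one end configuration in the batch is false. A prover that runs honestly on the true midpoint matrix passes steps 1--3. Step 4 catches it only if $\mathcal{Q}$, of size about $24\sigma k/d\ll k$, happens to contain that row.

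The paper instead defers this check to the output predicate. $\Psi_{\textsf{Batch}}$ verifies $\mathsf{cksum}_d(\bm x[:,0:S])=\chi[:,0:S]$ and $\mathsf{cksum}_d(\bm x[:,-S:])=\chi[:,-S:]$. The soundness proof (event $F_2$) uses unique decoding to show that a boundary mismatch between the decoded matrix and $\bm x$ makes this check fail. The \IPP predicate $\Phi_{\mathsf{reduce}}$ then checks only the checksum, the step-2 predicate and $\Phi$.

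To make unique decoding cover the end column even when many $x_i$ are false, center the $\Delta_c$-$d$-ball at a different matrix: true interior midpoints, but $\bm x$'s own end configurations. If $\mathcal{Q}$ contains a false row, the reduced batch is false anyway. Otherwise this matrix agrees with the true midpoint matrix on the rows in $\mathcal{Q}$, so the \IPP output predicate failing there still makes the reduced batch false.

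A smaller issue is that $d=\lambda$ is too small for arbitrary $\sigma$. \Cref{thm:IPP} needs $d\ge 48\sigma_{\mathsf{loc}}\log k$. Also, $k$ shrinks only by a factor of about $\lambda/(24\sigma)$, which is no reduction once $\sigma\gtrsim\lambda$. The paper takes $d=96\sigma_{\mathsf{loc}}\lambda\log(Tk)$, which gives $\lvert\mathcal{Q}\rvert<k/\lambda$ with all dependence on $\sigma$ polynomial.
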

Letting $T = n^{O(\log n)}$ in the above theorem, we obtain a doubly efficient interactive proof system for every $\mathsf{PSPACE}$ language decidable in time $n^{O(\log n)}$. 
\begin{corollary}[Formal Statement of \Cref{thm:informal-deip}]
    \label{corl:main}
    For all $n, T = T(n)$ such that $n < T < n^{O(\log n)}$, $S = S(n) \in \poly(n), \secpar = \secpar(n) \in \NN$,
    there exists a $2^{-\secpar}$-sound $\protcplx{}$-\IP for deciding any language in $\TS(T, S)$.
    The complexity of the protocol is as follows.
    \begin{itemize}
        \item $\ell = \poly(n, \secpar)$.
        \item $a = S \cdot \poly(n, \secpar)$.
        \item $\Ptime = \poly(S, T, \secpar)$.
        \item $\Vtime = S^2 \cdot \poly(n, \secpar)$.
    \end{itemize}
\end{corollary}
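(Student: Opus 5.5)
This corollary follows by specializing \Cref{thm:deip} to the stated range of $T$ and $S$. No new protocol is needed: I take the $2^{-\secpar}$-sound \IP for $\TS(T,S)$ given by \Cref{thm:deip}, keep its prover and verifier unchanged, and only re-express its four complexity bounds in terms of $n$.

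The key computation is the bound on $2^{O(\sqrt{\log T})}$. Since $T < n^{c\log n}$ for some constant $c$, we have $\log T \le c\log^2 n$. Hence $\sqrt{\log T} \le \sqrt{c}\,\log n$, and
\[
2^{O(\sqrt{\log T})} \le 2^{O(\sqrt{c}\log n)} = n^{O(1)} = \poly(n).
\]
This is exactly where the range $T = n^{O(\log n)}$ is used.

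Next, since $S \in \poly(n)$, we have $\log S = O(\log n)$, so $\poly(\secpar,\log S) \le \poly(n,\secpar)$. Substituting both bounds into the complexity list of \Cref{thm:deip} gives:
\begin{itemize}
    \item $\ell = \poly(n)\cdot\poly(\secpar,\log S) = \poly(n,\secpar)$;
    \item $a = \poly(n)\cdot S\cdot\poly(\secpar,\log S) = S\cdot\poly(n,\secpar)$;
    \item $\Vtime = S^2\cdot\poly(n,\secpar)$;
    \item $\Ptime = \poly(S,T,\secpar)$, which is unchanged.
\end{itemize}
The hypothesis $T > n$ required by \Cref{thm:deip} is assumed directly in the corollary, and soundness $2^{-\secpar}$ carries over verbatim.

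The only point needing care is that the constant hidden in $O(\log n)$ in the exponent of $T$ becomes a constant in the exponent of $\poly(n)$. This is fine because that constant is fixed for the language, as is the constant in $S\in\poly(n)$. There is no genuine obstacle here; all the difficulty lies in \Cref{thm:deip} itself.
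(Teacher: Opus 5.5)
Your proposal is correct and follows the paper's own derivation. The paper obtains the corollary simply by plugging $T = n^{O(\log n)}$ into \Cref{thm:deip}, so that $2^{O(\sqrt{\log T})} = 2^{O(\log n)} = \poly(n)$, and it absorbs the $\poly(\secpar,\log S)$ factors into $\poly(n,\secpar)$ using $S \in \poly(n)$; you spell out the same substitution in more detail.
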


\subsection{\LDP for the Batch Language}
We prove our main theorems by constructing a \emph{Low-Depth-Predicate}-\IP, or \LDP (as defined in \Cref{sec:LDP-IP}, \Cref{def:LDP}), for verifying a batch of claims for deterministic time-$t$ computations, which we call transition claims.
Recall that an \LDP is a special protocol where the verifier never accesses the input throughout the protocol execution,
and only outputs a description of a low-depth predicate $\Psi$ about the input $\bm x$.
We note that our approach deviates from prior works 
\cite{STOC:ReiRotRot16,FOCS:BGHK25}, which constructed doubly efficient interactive proofs by first constructing protocols for verifying a batch of unambiguous interactive proofs.

Formally, fix a Turing machine $\cM$ with time complexity $T(n)$ and space complexity $S(n)$.
For any $t \le T(n)$, let the language $\cL_{t}$ consist of all pairs $(x_\tstart, x_\tend) \in \bin^{S(n)} \times \bin^{S(n)}$ such that $\cM$ transitions from configuration $x_\tstart$ to configuration $x_\tend$ in exactly $t$ steps.
Given a batch size parameter $k = k(n)\in\mathbb{N}$, define the batch language 
\[\cL_t^k \coloneq\{((x_{i, \tstart}, x_{i, \tend}))_{i \in [k]}\mid\forall i\in [k]~(x_{i, \tstart}, x_{i, \tend})\in\cL_t\}.\]
\begin{theorem}
    For all $n \in \NN$, $S = S(n), T = T(n)$, $t = t(n) \le T, k = k(n), \secpar = \secpar(n) \in \NN$, 
    there exists an upper bound
    \[
    \Lambda(T, k) \in 2^{O\left(\sqrt{ \log \binom{2\log T + \log k}{\log T} }\right)},
    \]
    
    and an $\epsilon$-sound $\protcplx{}$-\LDP, denoted as $\Batch(t, k)$, for the batch language $\cL_t^k$, where 
    \begin{itemize}
        \item $\epsilon = 2^{-\secpar}$.
        \item $\ell = \Lambda(T, k) \cdot \poly(\secpar)$.
        \item $a = \Lambda(T, k) \cdot S \cdot \poly(\secpar)$.
        \item $\Ptime = \poly(T, k, S, \secpar)$,
        \item $\Vtime = \Lambda(T, k)\cdot S^2\cdot \poly(\secpar)$.
    \end{itemize}
    Furthermore,
    the output $\lrag \Psi$ defines a low-depth predicate $\Psi$ on $\bm x$,
    whose implementation circuit $C_\Psi$ satisfies
    \begin{itemize}
        \item $\size(C_\Psi) = (k+\Lambda(T, k)) \cdot S \cdot \poly(\secpar)$.
        \item $\depth(C_\Psi) = \Lambda(T, k) \cdot \poly(\secpar)$.
    \end{itemize}
    Finally, the description length $\abs{\lrag \Psi} = O(\Lambda(T, k) \cdot \poly(\secpar) \cdot S)$.
    \label{thm:batch}
\end{theorem}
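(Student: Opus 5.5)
We prove \Cref{thm:batch} by constructing a generalized protocol $\Batch(t,k,\lrag{\C})$ and inducting on it. This protocol is an \LDP for the language of batches $\bm x=((x_{i,\tstart},x_{i,\tend}))_{i\in[k]}$ such that every claim is in $\cL_t$ \emph{and} the matrix $\bm M$ of honest middle configurations satisfies a given low-depth predicate $\C$. Here the $\lambda-1$ middle configurations of each row are computed honestly from $x_{i,\tstart}$. The theorem is the special case $\lrag\C=\top$. We fix $\lambda=d=2^{\Theta(\sqrt{\log\binom{2\log T+\log k}{\log T}})}$, rounded so that $d\ge\dboundIPP$. We take $\abs{\FF}=2^{\Theta(\secpar)}\cdot\poly(T,k,S)$, large enough for both \Cref{lem:GKR} and \Cref{thm:IPP}. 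Because $\log\abs{\FF}$ then enters only polylogarithmically, it is absorbed into $\poly(\secpar)$ together with logarithmic factors in $T,k,S$.

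The induction is on the pair $(t,k)$ ordered by the potential $N(t,k)=\binom{2\log_\lambda t+\log_\lambda k}{\log_\lambda t}$, which counts lattice paths of recursive calls. There are two base cases. When $t\le\lambda$, the statement ``all $k$ claims hold and $\C(\bm M)=1$'' is computed by a circuit of depth $\tO(t\cdot\poly(\secpar))+\depth(\C)$, and we run \GKR (\Cref{lem:GKR}) directly. When $k\le k_{\mathsf{base}}$, the prover sends $\bm M$ in the clear; the verifier evaluates $\C$ itself and recurses on $\Batch(t/\lambda,k\lambda)$. The general step follows the four steps of the overview:
\begin{enumerate}
\item The prover sends $\chi=\cksum_d(\bm M)$ (\Cref{lem:unique-decoding}), of size $O(d\lambda S)$.
\item The parties run $\Batch(t/\lambda,k\lambda,\top)$ on $\bm x_{\bm M}$, producing a predicate $\psi$; set $\Psi(\bm M)=\psi(\bm x_{\bm M})$.
\item They run $\IPP$ (\Cref{thm:IPP}) with distance $d$ on the predicate
\[\Phi'(\bm M)=[\cksum_d(\bm M)=\chi]\wedge\Psi(\bm M)\wedge\C(\bm M)\wedge[\text{first and last columns of }\bm M\text{ match }\bm x],\]
obtaining a row set $\cQ$ and a predicate $\CRR$.
\item They recurse on $\Batch(t,\abs{\cQ},\CRR)$ over the sub-batch $\bm x[\cQ]$.
\end{enumerate}
The output predicate is the conjunction of the checks on $\bm x$ carried out in step 3 and step 4. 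Since each row's first and last columns are part of $\bm x$, this conjunction is a predicate on $\bm x$ alone.

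Soundness comes from the win--win argument, applied to the honest matrix $\bm M$ whenever $\bm x\notin$ the language. If $\bm M$ is $\Deltac$-$d$-far from $\{\bm M':\cksum_d(\bm M')=\chi\}$, then \IPP soundness applies directly. Otherwise, unique decoding determines a single $\bm M'$ within distance $d$ of $\bm M$ that is consistent with $\chi$. There are two sub-cases. If $\bm M'$ has the wrong endpoints or $\C(\bm M')=0$, the \IPP soundness argument still applies. If instead $\bm M'$ has the right endpoints and satisfies $\C$ while some claim fails, then some consecutive pair in $\bm M'$ is a false $(t/\lambda)$-claim. In that case the induction hypothesis gives $\Psi(\bm M')=0$ except with probability $\epsilon(t/\lambda,k\lambda)$, so $\bm M$ is again far from $\cL_{\Phi'}$. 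In every case the \IPP gives $\CRR(\bm M[\cQ,:])=0$ with high probability. Here $\bm M[\cQ,:]$ is exactly the honest middle matrix of $\bm x[\cQ]$, so the sub-batch is a no-instance and the recursive call rejects. Two points need care. First, the soundness analysis of step 2 must be done for the fixed $\bm M'$, and $\bm M'$ is determined by $\chi$ before step 2's coins are drawn. Second, the errors add up along the recursion tree: we set the per-node security to $\secpar+\log N(T,k)$, which costs only $\poly(\secpar)$ factors.

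For efficiency, substituting $\abs{\cQ}\le 24\secpar k/d$ into the recursion gives
\[V(t,k)\le V(t/\lambda,k\lambda)+V(t,24\secpar k/d)+\poly(d,\lambda,\secpar)\,S^2.\]
Taking $d=24\secpar\lambda$, the recursion tree is the lattice-path count $N$, so $\Lambda\approx N(T,k)^{o(1)}\poly(\lambda)$. The parameters $\ell$, $a$ and $\abs{\lrag\Psi}$ obey the same recursion. Two costs need separate bounds. The \IPP verifier cost in step 3 grows with $D\log S+\abs{\lrag{\Phi'}}$, and $\Phi'$ contains $\Psi$ and $\C$ from deeper recursion levels. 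Each recursive call's output depth and description length also compose, via the implementation circuits of \Cref{lem:GKR} and \Cref{thm:IPP}. I expect the main obstacle to be showing that these do not compound multiplicatively across levels. The specific risk is that the depth of $\Psi$ feeds into the verifier time at the next level up, and that the size of $\CRR$ is $\tO(\abs{\cQ}L)$, which is linear rather than constant. My first attempt is an inductive invariant: $\depth\le\Lambda(t,k)\poly(\secpar)$, $\abs{\lrag\Psi}\le\Lambda(t,k)\poly(\secpar)S$ and $\size\le(k+\Lambda)S\poly(\secpar)$. I would verify this invariant by checking that the additive recursion for $\Lambda$ preserves these linear forms. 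The choice $\lambda=2^{\Theta(\sqrt{\log N})}$ is what balances the $\poly(\lambda)$ per-node cost against the number of nodes.
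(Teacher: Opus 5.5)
Your outline has the same structure as the paper's proof: midpoint matrix, checksum commitment, recursive call on $\xmid$, row-reduction \IPP, recursion on $\bm x[\cQ,:]$ carrying the \IPP's output predicate, and the lattice-path node count. But there is a genuine gap in how you tie the committed matrix to the input $\bm x$.

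You place the conjunct $[\text{first and last columns of }\bm M\text{ match }\bm x]$ inside the \IPP predicate $\Phi'$. The \IPP of \Cref{thm:IPP} runs \GKR on $\Phi'$ and takes $\lrag{\Phi'}$ as a public parameter, so this conjunct requires the verifier to know $\bm x$. An \LDP verifier never sees its input, and this is not a formality here. In the recursive call $\Batch(t/\lambda,k\lambda)$, the input is $\xmid$, which the verifier knows only through $\chi$. Your sentence that the output predicate is ``the conjunction of the checks on $\bm x$ carried out in step 3'' does not repair this. The output predicate sees only $\bm x$ and whatever the verifier hard-codes, and you never say what $\bm x$'s boundary should be compared against.

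Nor can the check simply be dropped. Suppose the only false row is $i$, because $x_{i,\tend}$ is wrong. The cheating prover commits via $\chi$ to the matrix $\bm M'$ that equals $\bm M$ except that its $(i,\lambda)$ entry is the true endpoint. Every transition in $\bm M'$ is correct, so $\bm M'$ satisfies the other conjuncts of $\Phi'$, and the prover can run steps 2--4 honestly as if $\bm M'$ were the true matrix. Unless $i$ lands in the small set $\cQ$, we have $\bm M[\cQ,:]=\bm M'[\cQ,:]$, and the verifier accepts. Your small-$k$ base case has the same hole: nothing there ties the transmitted matrix to $\bm x$.

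The missing idea is to enforce boundary consistency outside the \IPP, in the output predicate, via the checksum. This is how the paper does it.
\begin{itemize}
\item $\Creduce$ checks only $\cksum_d(\bm M)=\chi$, $\Cmid(\xmid(\bm M))=1$ and $\C(\bm M)=1$.
\item The output predicate $\CBatch$ hard-codes the boundary slices of $\chi$ and checks $\cksum_d(\bm x[:,0:S])=\chi[:,0:S]$ and $\cksum_d(\bm x[:,-S:])=\chi[:,-S:]$.
\item By unique decoding, this disposes of the wrong-endpoint case deterministically. The decoded $\bm M^*\in\rowball(\bm M)$ and $\bm x$ have boundary columns that are both $d$-close to those of $\bm M$, which equal those of $\bm x$. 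So if the boundaries differ, their checksums differ, and the output predicate rejects with probability $1$.
\item The \IPP argument is then needed only when $\bm M^*$ already has the correct boundary, which is exactly your remaining sub-case.
\item In \SmallBatch, the output predicate $\CSmallBatch$ instead hard-codes the transmitted midpoint statements and checks that their boundary states agree with $\bm x$.
\end{itemize}
With this change, the rest of your outline lines up with the paper's argument.
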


With \Cref{thm:batch}, our proof of \Cref{thm:deip} is straightforward. 

\begin{proof}[Proof of \Cref{thm:deip}]
    This follows from \Cref{thm:batch}.
    \begin{enumerate}
        \item Let $x$ be the input to $\cM$. $\cP$ first simulates $\cM(x)$ for $T$ steps and stores its tableau $\bm \tau \in \bin^{T \times S}$ to be used as auxiliary information to the sub-protocols.
        \item Both parties select the appropriate parameters, 
        then run the protocol given by \Cref{thm:batch},
        denoted as $\Batch(T, 1)$,
        which outputs a predicate description $\lrag \Psi$ to the verifier. 
        \item $\cV$ lets $\bm x = (x_{\tstart}, x_{\tend})$,
        where $x_{\tstart}$ denotes the start state with $x$ as input,
        and $x_{\tend}$ denotes the accept state of $\cM$.
        \item $\cV$ accepts iff $C_\Psi(\bm x, \lrag \Psi) = 1$.
    \end{enumerate}
    An \LDP implies a standard \IP when the verifier $\cV$ has access to the input $\bm x$ in the clear.

    Plugging in $k = 1$ into the bound $\Lambda(T, k)$, we have
    \[
    \Lambda(T, 1) = 2^{O\left(\sqrt{\log \binom{2\log T}{ \log T}}\right)} = 2^{O\left(\sqrt{\log T}\right)}.
    \]
    The resulting \IP has identical round-complexity $\ell$
    and per-round communication complexity $a$ as $\Batch(T, 1)$, and
    \begin{itemize}
        \item
            $\Ptime' = O(TS) + \Ptime_\LDP = \poly(S, T, \secpar)$,
        \item 
            $\Vtime' = O(\size(C_{\Psi})) + \Vtime_\LDP = 2^{O\left(\sqrt{\log T}\right)} \cdot S^2\cdot\poly(\secpar)$.
    \end{itemize}
    The additional terms are due to the prover's simulation of $\cM$ and the verifier's evaluation of the circuit $\Psi$.
\end{proof}

\section{Low-Depth-Predicate Interactive Proof $\Batch(t, k)$}
In this section, we prove \Cref{thm:batch}.
\paragraph{Road map of our construction}
Our protocol in \Cref{thm:batch} is constructed by recursion.
Let $\lambda \in \NN$ be some parameter.
Recall that the parties try to verify $\bm x \in \cL_t^{k}$.
Let us denote the protocol for checking this as $\Batch(t, k)$.
We shall try to reduce this to constructing some $\Batch(t', k')$ where either (1) $k' \ll k$ and $t' = t$, or (2) $k' = k$ and $t' \ll t$.\footnote{For (2), we ended up reducing to some $(t', k')$ where $k' > k$ and $t' < t$, as in \Cref{fig:tk-grid-midpoint}. This is not ideal, but enough for the recursion to terminate because the two recursive calls still define a partial order on the grid of $(t, k)$.}
We denote these two reductions by $R_1$ and $R_2$, 
as illustrated in \Cref{fig:tk-grid}.
If both $R_1$ and $R_2$ can be performed doubly efficiently, then we are done,
since we can just apply them sufficiently many times and reduce to the case where both $k$ and $t$ are small,
and output the final low-depth predicate defined on only those instances. 
\begin{figure}[ht]
    \centering
    \begin{tikzpicture}[x=1.5cm, y=1.2cm]
        \foreach \t in {0,-1,-2,-3} {
            \foreach \k in {0,-1,-2,-3} {
                \node[circle, fill=black, inner sep=1.5pt] (P\t\k) at (\t, \k) {};
            }
        }
        
        \node[below left] at (P-1-1) {$(t, k)$};
        \node[below left] at (P-1-2) {$(t, k')$};
        \node[below left] at (P-2-1) {$(t', k)$};
        
        \draw[->, thick] (0.5, 0.5) -- (-3.5, 0.5) node[left] {$t$ decreases};
        \draw[->, thick] (0.5, 0.5) -- (0.5, -3.5) node[below] {$k$ decreases};
        
        \draw[->, dashed, blue, thick] (P-1-1) -- (P-1-2) node[midway, right, xshift=2pt] {$R_1$};
        \draw[->, dashed, red, thick] (P-1-1) -- (P-2-1) node[midway, above, yshift=2pt] {$R_2$};

    \end{tikzpicture}
    \caption{Grid of protocol dimensions $\Batch(t, k)$. Moving left decreases time $t$, and moving down decreases batch size $k$. The goal is to reduce to the case where both $t$ and $k$ are small.}
    \label{fig:tk-grid}
\end{figure}

When $t$ is tiny,
the $\GKR$ protocol applied to the circuit $\C(\bm x)$,
which verifies that all adjacent states are consistent with the transition rules of $\cM$ (see \Cref{clm:base-case}),
is of depth $\tO(t)$,
so we immediately obtain a doubly efficient \LDP.

The tricky case is when $t$ is not tiny, as $\depth(\C) = \Omega(t)$,
and the \GKR verifier is no longer efficient.
The good news is that we can reduce the number of instances $k$,  
by a careful application of an $\IPP$ with row reduction (Definition~\ref{def:ipp-row-red}). 
Before applying this protocol,
the verifier needs to prepare a claim $\Phi_\IPP$ with $\Deltac$-distance to $\bm x$ when $\bm x \notin \cL_t^{k}$.
Naively,
we can simply run $\Batch(t, k)$ to obtain the claim $\Phi_\IPP$
--- our \LDP is designed to output a false claim when $x \notin \cL_t^k$!
However,
this is a catch-22 as we are exactly trying to construct $\Batch(t, k)$.
Fortunately,
there is another \LDP that we can use to obtain such a claim when $x \notin \cL_t^k$,
which will become clear when we consider the other boundary case --- when $k$ is tiny,
hence let us consider this case.

When $k$ is too tiny for instance reduction and $t$ is still large, the prover can reduce $t$ by $\lambda$ by sending $\lambda$ many equally spaced ``midpoints'' between the length-$t$ path of $x_{\tstart}$ and $x_{\tend}$ for each length-$t$ transition statement $(x_{\tstart}, x_{\tend})$,
and both parties then need to apply the protocol $\Batch(t/\lambda, k \cdot \lambda)$.
Even though this \emph{increases} $k$ by a factor of $\lambda$,
we are still making progress because in $\Batch(t/\lambda, k \cdot \lambda)$, we can reduce the $(k \cdot \lambda)$-dimension by applying an $\IPP$ with row reduction again.

Going back to the case of general $(t, k)$,
we can apply a similar idea when we try to generate $\Deltac$ distance.
After putting the $k$ statements on $k$ rows, 
the prover introduces $\lambda - 1$ midpoint states for each statement,
hence creating the midpoint matrix $\Mx$ (\Cref{def:midpoint-matrix}).
However,
if the prover sends this matrix in the clear,
prohibitively many --- $\Omega(k \cdot \lambda)$ --- states have to be sent.
The good news is that
the verifier can enforce a joint constraint $\Phi_\IPP$ on the rows of $\Mx$,
exactly because they can reduce to $\Batch(t/\lambda, k \cdot \lambda)$ on $\xmid \in \bin^{(k \cdot \lambda) \times (2S)}$,
where each row in $\xmid$ 
corresponds to a $t/\lambda$-length transition statement specified in $\Mx$.
(Note that the symbol $\nwarrow$ refers to the input dimensions $(t/\lambda, k \cdot \lambda)$, which is situated in the upper-right corner of $(t, k)$ in \Cref{fig:tk-grid-midpoint}.)
\begin{figure}[ht]
    \centering
    \begin{tikzpicture}[x=3cm, y=1.2cm]
        \foreach \t in {0,-1,-2} {
            \foreach \k in {0,-1,-2} {
                \node[circle, fill=black, inner sep=1.5pt] (P\t\k) at (\t, \k) {};
            }
        }
        
        \node[below left] at (P-1-1) {$\Batch(t, k), \bm x$};
        \node[below left] at (P-1-2) {$\Batch(t, k/d), \xred$};
        \node[below left] at (P-20) {$\Batch(\frac{t}{\lambda}, k \cdot \lambda), \xmid$};
        
        \draw[->, thick] (0.5, 0.5) -- (-3.5, 0.5) node[left] {$t$ decreases};
        \draw[->, thick] (0.5, 0.5) -- (0.5, -3.5) node[below] {$k$ decreases};
        
        \draw[->, dashed, blue, thick, bend right=15] (P-1-1) -- (P-1-2) node[midway, right, xshift=2pt] {$R_1$};
        \draw[->, dashed, red, thick, bend left=15] (P-1-1) -- (P-20) node[midway, above, yshift=2pt] {$R_2$};

    \end{tikzpicture}
    \caption{Grid of the sub-protocol dimensions that we actually recur to, and their respective batches of statements. The two arrows represent the two reduced protocols that we actually construct recursively.}
    \label{fig:tk-grid-midpoint}
\end{figure}
Therefore,
for some distance parameter $d = d(n)$,
they proceed with the following two steps:
\begin{enumerate}
    \item The prover only ``commits'' to these states by only sending the checksums $\chi$ of every column. 

    These checksums are constructed from a distance-$\tO(d)$ error correcting code,
    and force the prover to cheat on many rows if it cheats, as in previous literature.
    \item Both parties then run $\Batch(t/\lambda, k \cdot \lambda)$ on $\xmid$.
\end{enumerate}
The verifier then obtains a claim $\Phi_\IPP$ that is false on $\xmid$ (equivalently, $\Mx$) whenever $\xmid \notin \cL_{t/\lambda}^{k \cdot \lambda}$.
With this additional constraint,
the parties proceed with an $\IPP$ with row reduction for the language $\mathcal{L}_{\Phi_\IPP}$ on prover input $\Mx$. 
This protocol outputs a subset of roughly $(k/d)$ rows, $\cQ \subset [k]$, as well as a linear constraint $\CIPP$ over the sub-matrix $\Mred = \Mx[\cQ,:]$.
(Note that $\downarrow$ refers to $(t, k/d)$, which is situated beneath $(t, k)$ in \Cref{fig:tk-grid-midpoint}.)
Let $\xred$ be the $\abs{\cQ}$ length-$t$ statements specified by $\Mred$'s leftmost and rightmost columns.
\Cref{lem:DcRR} guarantees that if the original $\bm x \notin \cL_t^{k}$, 
then either $\xred \notin \cL_t^{k/d}$ or $\CIPP(\Mred) = 0$.
Owing to this additional linear constraint $\CIPP$,
which is defined on the reduced midpoint matrix $\Mred$,
instead of just checking $\xred \in \cL_t^{k/d}$,
we need to additionally check that the corresponding $\Mred$ satisfies $\CIPP(\Mred) = 1$.
This turns out to be straightforward by having $\Batch(t, k/d)$ perform this check on top of checking $\xred \in \cL_t^{k/d}$ throughout.

Finally, a minor technicality is that we have to ensure the resulting low-depth predicate is indeed defined over $\bm x$,
and not just over a related string.
This happens when we call $\Batch$ on dimensions other than $(t, k)$,
which we handle by redefining predicates over $\bm x$ as follows:
\begin{enumerate}
    \item When $k$ is large,
    $\Batch(t, k/d)$ returns a low-depth predicate over $\xred \in \bin^{(k/d) \times (2S)}$.

    The actual output predicate over $\bm x$ first selects $\xred = \bm x[\cQ, :]$ from the input $\bm x$ and checks that $\xred$ satisfies the low-depth predicate returned by $\Batch(t, k/d)$.

    \item Conversely, when $k$ is tiny,
    $\Batch(t/\lambda, k \cdot \lambda)$ returns a low-depth predicate over the midpoint statements $\xmid \in \bin^{(k \cdot \lambda) \times (2S)}$.

    The output predicate over $\bm x$ has the midpoint statements $\xmid$ hard-coded,
    and verifies that $\bm x$ is consistent with $\xmid$ on the boundaries and $\xmid$ indeed satisfies the given low-depth predicate.
\end{enumerate}
To sum up the case of general $(t, k)$,
our goal of constructing $\Batch(t, k)$ is reduced to constructing two sub-protocols (\Cref{fig:tk-grid-midpoint}):
\begin{enumerate}[label=(\alph*)]
    \item Calling $\Batch(t/\lambda,k \cdot \lambda)$ on $\xmid$.
    \item Calling $\Batch(t, k/d)$ on $\xred$.
\end{enumerate}
In addition to these recursive calls, 
the other steps, 
including a call to the $\IPP$ with row reduction (\Cref{thm:IPP}),
incur a fixed $\poly(\lambda, d, S)$ cost for the verifier.
To optimize the efficiency guarantees, we set \[
\lambda \coloneqq \ceil{2^{\sqrt{\log \binom{2\log T + \log k}{\log T}}}},d, \basek \in \poly(\lambda).\] 
Here we give some intuition for the efficiency bound for the case when $k = 1$ and $\lambda = 2^{\sqrt{\log T}}$.
The full analysis is deferred to \Cref{sec:complexity}.

Let us focus on the running time of the verifier, which is given by the recurrence $\mathsf{Vtime}(t, k) = \Vtime(t/\lambda,k\cdot \lambda) + \Vtime(t,k/\lambda) + \poly(S, \lambda)$. 
For simplicity,\footnote{The actual set of parameters is different; see \Cref{alg:main}.} we derive a closed-form upper bound on $\mathsf{Vtime}(T, 0)$ when $\baset = \basek = d = \lambda$ and $T = \lambda^\rho$.
Write $t = \lambda^i$ and $k = \lambda^j$, and let $\mathsf{Vtime}'(i, j) = \mathsf{Vtime}(\lambda^i, \lambda^j)$. 
Then we can verify that $\mathsf{Vtime}'$ satisfies the following: if $i = 0$ (the base case), $\Vtime'(i, j) = \tO(\lambda)$.
Otherwise, if $j = 0$, $\Vtime'(i, j) = \Vtime'(i-1, j)+\poly(S,\lambda)$, 
and if $j \ge 1$, $\Vtime'(i, j) = \Vtime'(i-1, j+1) + \Vtime'(i, j-1) + \mathsf{poly}(S, \lambda)$. 
Consider the recursion tree for $\mathsf{Vtime'}$ rooted at node $(\rho, 0)$. Our goal is to upper bound the number of nodes in this recursion tree, which would yield an upper bound on $\Vtime(T, 0)$. 

In this recursion tree, every node $(i, j)$ that has more than one child has exactly two children, which correspond to two moves that we can make: a $\nwarrow$-move from $(i, j)$ to $(i-1, j+1)$, and a $\downarrow$-move from $(i, j)$ to $(i, j-1)$. Starting at node $(\rho, 0)$, we must make exactly $\rho$ $\nwarrow$-moves to reach a node $(0, j)$, i.e., the base case. Since after making $\rho$ $\nwarrow$-moves we are at node $(0, \rho)$, and we can never reach a node $(i, j)$ such that $j < 0$, this implies that we can make at most $\rho$ $\downarrow$-moves. Suppose at some step we have made exactly $a$ $\nwarrow$-moves and $b$ $\downarrow$-moves; then we are at node $(\rho-a, a-b)$, and since we never reach a node $(i, j)$ such that $j < 0$, this implies that $a$ and $b$ must satisfy $a-b\ge 0$. The number of paths starting from $(\rho, 0)$ that make $\rho$ $\nwarrow$-moves and $\rho$ $\downarrow$-moves and at every step satisfy $a-b\ge 0$, is exactly the number of Dyck paths of size $\rho$, which is given by the $\rho$th Catalan number $C_\rho = {2\rho\choose \rho}/(\rho+1)$. Since the number of nodes in the recursion tree is at most the number of such paths, we have that ${2\rho\choose \rho}/(\rho+1) \cdot \poly(\lambda, S) \le 2^{2\rho} \cdot \poly(S)$ is an upper bound on $\Vtime(T, 0)$.
Since $\lambda = 2^{\sqrt{\log T}}$, $\rho = \log_\lambda T = \sqrt{\log T}$, 
and thus $\Vtime(T, 0) = 2^{O(\sqrt{\log T})} \cdot \poly(S)$.

\subsection{Our Construction}
We reduce the number of transition claims $k$ by iteratively applying the \IPP with row reduction (\Cref{thm:IPP}). 
To ensure soundness when sub-sampling to a smaller batch of transitions, we augment the batch language with a predicate $\C$.
This predicate $\C$ enforces auxiliary constraints --- such as consistency between the midpoints and the succinct checksums --- that ensure the smaller batch remains false if the original batch was false.
To implement this, we introduce the \emph{midpoint matrix}.
\begin{definition}[Midpoint Matrix]
    Let $\bm x = ((x_{i, \tstart}, x_{i, \tend}))_{i \in [k]}$ be an instance of $\mathcal{L}_t^k$ for some $t\ge\lambda$, where $\lambda\in\mathbb{N}$ is some splitting parameter. For all $i\in [k]$, $j\in \{0, \ldots, \lambda\}$, let $x_{i, j}$ denote the configuration that $\mathcal{M}$ reaches after $(\frac{t}{\lambda})\cdot j$ time steps when starting from configuration $x_{i,0} = x_{i, \mathsf{start}}$. Define the \emph{midpoint matrix} ${\bm M} = {\bm M}_{\bm x}\in\{0, 1\}^{k\times (S(\lambda+1))}$ as follows: for all $i\in [k]$, $j\in\{0, \ldots, \lambda\}$, ${\bm M}[i, j] = x_{i, j}$. 
    \label{def:midpoint-matrix}
\end{definition}
And the augmented language is
\[
\cL_{t}^{k}[\C]\coloneq\{{\bm x} \in \bin^{k \times 2S}\mid \bm x \in \cL_t^k\land (t\ge\lambda\implies\C({\bm M}_{\bm x}) = 1)\}.
\]
Initially, we set $\lrag {\C_0} = \top$ to denote the fact that there is no additional constraint (and this predicate always outputs 1).

We show that $\cL_{t}^{k}[\C]$ admits a doubly efficient \LDP, 
which we call $\Batch(t, k)$.
\begin{theorem}[Protocol $\Batch(t, k)$, an \LDP for the augmented language]
    \label{thm:main-LDP}
    For $n\in\mathbb{N}$, let $S = S(n)$, $T = T(n) > n$, and $\secpar = \secpar(n)$. Suppose $t = t(n) \le T$ and $k = k(n)$.
    Let $C$ be the implementation circuit for a predicate $\C:\bin^{k \times (S(\lambda + 1))} \to \bin$.

    For some upper bound
    \[
    \Lambda(T, k) \in 2^{O\left(\sqrt{\log \binom{2\log T + \log k}{\log T}}\right)},
    \]
    there exists an $\epsilon$-sound $\protcplx{}$-\LDP (\Cref{alg:main}),
    abbreviated as $\Batch(t,k) \coloneqq \prot[\BatchParam]{}{(\bm x)}{}$ for the language $\cL_{t}^{k}[\C]$.
    We additionally assume $\cP$ has access to the entire tableau of the computation $\bm \tau \in \bin^{T \times S}$ as auxiliary input.
    The complexities of the protocol are as follows.
    Note that the bounds here are loose and use the fact that $t \le T$. 
    \begin{itemize}
        \item $\epsilon \le 2^{-\secpar}$.
        \item $\ell = (\Lambda(T, k) + \depth(C) \log \size(C)) \cdot \poly(\secpar)$.
        \item $a = \Lambda(T, k) \cdot S \cdot \poly(\secpar)$.
        \item $\Ptime = \poly(S, k, T, \secpar, \size(C))$.
        \item $\Vtime = (\Lambda(T, k) + \depth(C) \log\size(C) + \abs{\lrag \C}) \cdot S^2 \cdot \poly(\secpar) $.
    \end{itemize}
    Furthermore,
    the output $\lrag \Psi$ defines a low-depth predicate $\Psi$ on $\bm x$,
    whose implementation circuit $C_\Psi$ satisfies
    \begin{itemize}
        \item $\size(C_\Psi) = (k+\Lambda(T, k)) \cdot \poly(\secpar) \cdot S$.
        \item $\depth(C_\Psi) = \Lambda(T, k) \cdot \poly(\secpar)$.
    \end{itemize}
    Finally, $\abs{\lrag \Psi} = \Lambda(T, k) \cdot \poly(\secpar) \cdot S$.
\end{theorem}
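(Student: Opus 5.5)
We prove \Cref{thm:main-LDP} by well-founded induction on the pair $(t,k)$, using the partial order in which both recursive calls, to $(t/\lambda, k\lambda)$ and to $(t,\kred)$ with $\kred=\lceil 24\secpar k/d\rceil$, are strictly smaller. This holds because $t$ strictly decreases in the first call and $k$ strictly decreases (for $k>\basek$) in the second. The protocol $\Batch(t,k)$ follows the four-step recipe of \Cref{sec:overview} and has three cases.

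\textbf{Base case $t<\lambda$.} The verifier runs \GKR (\Cref{lem:GKR}) on the circuit checking that each row of $\bm x$ is a valid $t$-step transition of $\cM$. This circuit has depth $\tO(t\cdot\poly(S))$, and since $t<\lambda$ the resulting cost is absorbed into $\Lambda$. Note that $\C$ is vacuous here by the definition of $\cL_t^k[\C]$.

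\textbf{Small-$k$ case $k\le\basek$, $t\ge\lambda$.} The prover sends $\Mx$ in the clear, which costs $k\lambda S$ communication. The verifier checks $\C(\Mx)$ via \GKR, which contributes the $\depth(C)\log\size(C)$ and $\abs{\lrag\C}$ terms. The parties then recurse on $\Batch(t/\lambda,k\lambda)$ with predicate $\top$ on $\xmid$. The output predicate hardcodes $\Mx$ and checks both boundary consistency with $\bm x$ and the recursive predicate.

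\textbf{General case.} The steps are as follows.
\begin{enumerate}
\item The prover sends $\chi=\cksum_{d}(\Mx)$ using \Cref{lem:unique-decoding}.
\item The parties run $\Batch(t/\lambda,k\lambda)$ with predicate $\top$ on $\xmid$, obtaining $\Psi_\texp$.
\item The parties run $\IPP$ (\Cref{thm:IPP}) with distance $d$ on $\Mx$ for the predicate $\Phi_\IPP$, which asserts $\cksum_d(\Mx)=\chi$, $\Psi_\texp(\xmid_{\Mx})=1$, and $\C(\Mx)=1$. This yields $(\cQ,\CIPP)$.
\item The parties run $\Batch(t,\abs{\cQ})$ on $\xred=\bm x[\cQ,:]$ with predicate $\CIPP$.
\end{enumerate}
The output predicate selects the rows $\cQ$ of $\bm x$ and applies the recursive predicate. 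This is valid because $\Mx[\cQ,:]$ is exactly the midpoint matrix of $\xred$, so $\CIPP$ on $\Mred$ is a predicate of the form required by the augmented language. Completeness is immediate from the completeness of each component.

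\textbf{Soundness.} Suppose $\bm x\notin\cL_t^k[\C]$ and let $\Mx$ be the true midpoint matrix. There are two cases.
\begin{itemize}
\item If $\rowball(\Mx)$ contains no matrix consistent with $\chi$, then $\Mx$ is $d$-far from $\cL_{\Phi_\IPP}$ regardless of $\Psi_\texp$.
\item Otherwise, unique decoding gives a single $\Mx'\in\rowball(\Mx)$ with $\cksum(\Mx')=\chi$. Here I must argue that $\Phi_\IPP(\Mx')=0$ except with small probability. If $\C(\Mx')=0$ we are done. Otherwise, since the honest midpoints of $\bm x$ differ from $\Mx'$, or $\bm x$ itself is false, some statement in $\xmid_{\Mx'}$ is false. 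The recursive soundness applied to the \emph{fixed} string $\xmid_{\Mx'}$ then gives $\Psi_\texp(\xmid_{\Mx'})=0$.
\end{itemize}
The subtle point is that $\Mx'$ is determined by $\chi$ before step 2 begins, so the recursive \LDP's soundness applies to a fixed input. One caveat is that $\Mx'$ may agree with $\Mx$ on its boundary columns while $\bm x$ differs from them. To handle this, I include boundary agreement with the true $\bm x$ in $\Phi_\IPP$, by making the first and last columns of $\Mx$ equal to $x_{i,\tstart}$ and $x_{i,\tend}$ as given. In either case $\Mx$ is $d$-far, and IPP soundness yields $\CIPP(\Mred)=0$. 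Consequently $\xred\notin\cL_t^{\abs\cQ}[\CIPP]$, and the recursion finishes the argument. The soundness errors add up over the recursion tree, so I set $\secloc=\secpar+\log(\#\text{nodes})$ and use a field of size $\abs\FF\ge\Fboundstmt$-type.

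\textbf{Complexity.} This is where I expect the main obstacle. Each node costs $\poly(\lambda,d,S,\secpar)$ plus IPP terms depending on $\depth(\Phi_\IPP)$, and $\depth(\Phi_\IPP)$ itself contains $\depth(C_{\Psi_\texp})$. I therefore need to check that depth grows additively rather than multiplicatively down the recursion, so that $\depth(C_\Psi)=\Lambda\cdot\poly(\secpar)$ by induction. The node count is bounded by the lattice-path argument with $d,\basek,\baset=\poly(\lambda)$. With $t=\lambda^i$ and $k$ measured in powers of $\lambda$ (since $d/24\secpar\ge\lambda$), the paths from $(\log_\lambda T,\log_\lambda k)$ number at most $\binom{2\log_\lambda T+\log_\lambda k}{\log_\lambda T}$. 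Choosing $\lambda=2^{\sqrt{\log\binom{2\log T+\log k}{\log T}}}$ balances $\poly(\lambda)$ against the path count, giving $\Lambda(T,k)$. The remaining verification of the bounds on $\ell$, $a$, $\Ptime$, and $\abs{\lrag\Psi}$ is routine summation over the recursion tree.
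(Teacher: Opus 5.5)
Your overall architecture matches the paper's: well-founded induction, a checksum commitment, recursion on $\xmid$ with soundness applied to the decoded matrix fixed by $\chi$, the \IPP, recursion on $\xred$ with $\CIPP$, and a union bound over the recursion tree. There is, however, a genuine gap at exactly the caveat you flagged.

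\textbf{The gap.} In the general case your output predicate is only $\Cred(\bm x[\cQ,:])$, so nothing ties the boundary columns of the committed matrix $\Mx'$ to $\bm x$. Your remedy, putting boundary agreement with $\bm x$ into $\Phi_{\IPP}$, cannot be implemented. In an \LDP the verifier never reads $\bm x$, yet it must know $\lrag{\Phi_{\IPP}}$ to run the \IPP, whose inner \GKR needs the circuit description. Redefining the boundary columns of $\Mx$ as $x_{i,\tstart},x_{i,\tend}$ only moves the reference matrix. It does not constrain the decoded $\Mx'$, which may differ from $\Mx$ on boundary entries in up to $d$ rows per column.

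\textbf{An attack.} As you describe it, the protocol is unsound. Take $\C=\top$, $k>\basek$, $t\ge\lambda$, and let $\bm x$ be all true except that row $i$ has a wrong end state.
\begin{enumerate}
\item The prover commits via $\chi$ to the honest path $\Mx'$. This differs from your $\Mx$ only in row $i$'s last block, so $\Mx'\in\rowball(\Mx)$.
\item The prover plays every sub-protocol honestly on $\Mx'$. Then $\xmid_{\Mx'}$ is entirely true, so completeness gives $\Psi_{\texp}(\xmid_{\Mx'})=1$ and hence $\Phi_{\IPP}(\Mx')=1$.
\item \IPP completeness then gives $\CIPP(\Mx'[\cQ,:])=1$.
\item Whenever $i\notin\cQ$, $\xred$ is a true batch whose midpoint matrix is $\Mx'[\cQ,:]$. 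The honest recursive call therefore yields $\Cred(\xred)=1$, and the verifier accepts.
\end{enumerate}
Steps (1)--(3) do not depend on $i$, so choosing $i$ to minimize $\Pr[i\in\cQ]$ gives acceptance probability at least $1-\max\abs{\cQ}/k=1-O(\secloc/d)$. This is precisely where your sentence ``since the honest midpoints of $\bm x$ differ from $\Mx'$, or $\bm x$ itself is false, some statement in $\xmid_{\Mx'}$ is false'' fails.

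\textbf{The missing idea.} Enforce boundary agreement in the \emph{final output predicate}, the one place allowed to read $\bm x$. The paper's $\CBatch$ checks $\cksum_d(\bm x[:,0:S])=\chi[:,0:S]$ and $\cksum_d(\bm x[:,-S:])=\chi[:,-S:]$ in addition to $\Cred(\bm x[\cQ,:])=1$. The boundary columns of $\Mx'$ are columnwise $d$-close to those of $\Mx$, which under your convention equal those of $\bm x$. Unique decoding then forces either equal boundaries or a checksum mismatch, and the mismatch makes $\Psi(\bm x)=0$.

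With this check, the paper's argument goes through. Its events are $F_1$: $\C(\Mx')=1$; $F_2$: the boundaries agree with $\bm x$; $F_3$: $\xmid_{\Mx'}$ is true. These cannot all hold when $\bm x\notin\cL_t^k[\C]$. Then $\neg F_2$ is caught by the output check, and $\neg F_3$ by recursive soundness on the fixed input $\xmid_{\Mx'}$. Finally, $\neg F_1$, or failure of $\Psi_{\texp}$, is caught by \IPP soundness plus recursion on $\xred$.

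Your convention of taking $x_{i,\tend}$ as the last column of $\Mx$ is actually the right one here. It makes $\bm x[:,-S:]$ trivially $d$-close to $\Mx[:,-S:]$, so unique decoding applies to the end column. The extra check costs $\tO(kdS)$ gates, $\tO(1)$ depth and $O(dS\log\abs{\FF})$ description bits, all within the theorem's bounds.
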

\Cref{thm:batch} follows from \Cref{thm:main-LDP} by letting $\lrag{\C} = \top$,
defined to be the predicate that always outputs 1,
in which case $\cL_t^k[\C] = \cL_t^k$ and the terms involving $C$ and $\abs{\lrag{\C}}$ are all $O(1)$.

\paragraph{The protocol}
\label{sec:construction}
For a summary of notation, see \Cref{table:symbols}.
\begin{table}[htbp]
    \centering
    \begin{tabular}{|c|c|c|}
    \hline
    \textbf{Symbol} & \textbf{Meaning} & \textbf{Remark} \\
    \hline
    $S$ & TM space bound & \NA\\
    \hline
    $k$ & Batch size & \NA \\
    \hline
    $t$ & Transition length & $\le T$\\
    \hline
    $(x_{i, \tstart}, x_{i, \tend})$ & $i$-th transition claim & $i \in [k]$\\
    \hline
    $\bm x = ((x_{i, \tstart}, x_{i, \tend}))_{i \in [k]}$ & Input batch & $\bm x \in \bin^{2Sk}$\\
    \hline
    $\lambda$ & Splitting parameter & $\Lambda(T, k)$\\
    \hline
    $x_{i,j}$ & TM state after $(\frac{t}{\lambda}) \cdot j$ steps from $x_{i, \tstart}$ & $j \in [0, \lambda]$\\
    \hline
    $\bm M = (x_{i,j})_{i \in [k], j \in [0, \lambda]}$ & Midpoint matrix & $x_{i, 0}=x_{i, \tstart}, x_{i, \lambda}=x_{i, \tend}$\\
    \hline
    $\secpar$ & Soundness parameter & \NA \\
    \hline
    $\FF$ & A binary field & specified in \Cref{alg:main}\\
    \hline
    $\chi = \cksum_d(\bm M)$ & Checksum of the midpoint matrix & $\chi \in \FF^{\cksumT \times S(\lambda+1)}$\\
    \hline
    $\tmid = t/\lambda$ & Length of sub-transitions & \NA\\
    \hline
    $\kmid = k\lambda$ & Expanded batch size & \NA\\
    \hline
    $\xmid = (x_{i,j})_{(i, j) \in [k] \times [0, \lambda - 1]}$ & Expanded batch & $\xmid \in \bin^{\kmid \times 2S}$\\
    \hline
    $\kred = |\cQ|$ & Reduced batch size & $\approx k/d$\\
    \hline
    $\xred = \bm x [\cQ, :]$ & Reduced batch & $\xred \in \bin^{\kred \times 2S}$\\
    \hline
    \end{tabular}
    \caption{\textbf{Summary of Symbols.} }
    \label{table:symbols}
\end{table}
Let $\lambda = \poly(n)$ be a parameter.
Without loss of generality, we assume $t = \lambda^\logt$ for some $\logt \in \NN$.
Let
\[
\recnodes \coloneqq
\binom{2\log_\lambda T+\log_\lambda(k+1)+O(1)}
      {\log_\lambda T+O(1)}
\]
denote the recurrence-tree node bound used in the soundness and complexity analyses, and set
\[
\secloc \coloneqq \secpar+\ceil{\log(4\recnodes)}+2.
\]
For some absolute constants $c_0, c_1$ to be analyzed in \Cref{sec:field-size},
we let $\abs{\FF}$ be the smallest binary field such that $\abs{\FF} \ge \Fboundstmt$.

For the base case when $k < \basek$,
apply the following \SmallBatch protocol.
\begin{lemma}[Base case protocol for small batch size]
\label{lem:base-case}
    For $n\in\mathbb{N}$, let $S = S(n)$, $T = T(n) > n$, and $\secpar = \secpar(n)$. Suppose $t = t(n) \le T$, $k < \basek$, and $\lambda = \Lambda(T, k)$.
    Let $C$ be the implementation circuit for the predicate $\C:\bin^{k \times (S(\lambda + 1))} \to \bin$.

    There exists an $\epsilon$-sound $\protcplx{}$-\LDP,
    abbreviated as $\SmallBatch \coloneqq \prot[\BatchParam]{}{(\bm x)}{}$ for the language $\cL_{t}^{k}[\C]$.
    We additionally assume $\cP$ has access to the entire tableau of the computation $\bm \tau \in \bin^{T \times S}$ as auxiliary input.
    The complexities of the protocol are identical to the ones stated in \Cref{thm:main-LDP}.
\end{lemma}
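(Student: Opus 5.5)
The plan is to build \SmallBatch{} exactly as the overview's ``$k$ is small'' base case describes, and it splits into two cases according to $t$.

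\textbf{Case $t<\lambda$.} The augmented language $\cL_t^k[\C]$ reduces to $\cL_t^k$. Membership can be decided by an arithmetic circuit that takes $\bm x$ and a hard-coded description of $\cM$'s transition function. This circuit recomputes the $t$ steps from each $x_{i,\tstart}$ and compares the result with $x_{i,\tend}$. It has size $\poly(t,k,S)$ and depth $\tO(t\cdot \polylog S)$, which is at most $\poly(\lambda,\secpar,\log S)$. We run \GKR{} (\Cref{lem:GKR}) on it and output the resulting predicate $\lrag\Psi$ directly. Its soundness error is $O(D\log S/\abs{\FF})$, which is at most $2^{-\secpar}$ by the field-size bound $\Fboundstmt$. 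The complexities are then polynomial in $\lambda,\secpar,S$, which fits inside $\Lambda(T,k)\cdot\poly(\secpar)\cdot S^2$.

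\textbf{Case $t\ge\lambda$.} Since $k<\basek=\poly(\lambda)$, the prover sends the full midpoint matrix $\bm M_{\bm x}$ in the clear. This costs $k(\lambda+1)S=\poly(\lambda)\cdot S$ bits, so no checksums or \IPP{} are needed. The verifier then does three things.
\begin{enumerate}
\item It runs \GKR{} on the circuit $C$ with $\bm M$ as the input the verifier holds. Because the verifier knows $\bm M$, it can evaluate the final low-degree-extension claim itself, at cost $O(\size(C))$ if done naively. A cleaner option is to fold that claim into the output predicate instead, which keeps $\Vtime$ within $(\depth(C)\log\size(C)+\abs{\lrag\C})\cdot S^2\poly(\secpar)$ and avoids paying $\size(C)$.
\item It forms $\xmid$ from consecutive columns of $\bm M$ and recursively runs $\Batch(t/\lambda,k\lambda)$ with $\lrag{\C}=\top$ (the $\nwarrow$-move). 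This returns $\lrag{\Psi'}$ over $\xmid$.
\item It outputs $\Psi$ over $\bm x$. This predicate has $\bm M$ hard-coded and checks two things: that the first and last columns of $\bm M$ equal $x_{i,\tstart}$ and $x_{i,\tend}$, and that $\Psi'(\xmid)=1$. Its depth is $\depth(C_{\Psi'})+O(\log S)$ and its size is $\size(C_{\Psi'})+O(kS)$. The description length grows by $\abs{\bm M}=\poly(\lambda)S$.
\end{enumerate}

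\textbf{Completeness.} The honest prover sends the true $\bm M_{\bm x}$. Both \GKR{} and the recursive call are then complete, and the boundary checks hold.

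\textbf{Soundness.} Let $\bm M^*$ be the matrix the prover actually sends. If its boundary columns disagree with $\bm x$, the output $\Psi$ rejects deterministically. If they agree but $\bm x\notin\cL_t^k[\C]$, then one of two things holds. Either some row of $\bm M^*$ contains a false sub-transition, so $\xmid^*\notin\cL_{t/\lambda}^{k\lambda}$ and the recursive soundness of $\Batch$ applies. Or every sub-transition is true, which forces $\bm M^*=\bm M_{\bm x}$ and hence $\C(\bm M^*)=0$, so \GKR{} soundness applies. A union bound over the two events gives the error. The soundness is to be handled inside the global induction over the recursion tree (on the partial order of $(t,k)$), with per-node error $2^{-\secloc}$. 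Since the tree has at most $\recnodes$ nodes, the total error is at most $2^{-\secpar}$.

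\textbf{Main obstacle.} The difficult part is checking that the recursive call's complexity fits the stated bounds with the same $\Lambda(T,k)$. This requires the counting bound on the $(t,k)$ recursion tree, i.e.\ the lattice-path/Catalan count $\recnodes$. I plan to treat the call to $\Batch(t/\lambda,k\lambda)$ through the shared induction hypothesis used in \Cref{thm:main-LDP} and defer the counting to \Cref{sec:complexity}. Here I only verify that the local cost is $\poly(\lambda,\secpar)\cdot S^2$ plus the $C$-dependent terms.
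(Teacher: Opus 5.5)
Your construction is the paper's \SmallBatch{}. You send $\bm M$ in the clear, recurse on $\Batch(t/\lambda,k\lambda)$ with predicate $\top$, and output a predicate with $\bm M$ hard-coded that checks the boundary columns against $\bm x$ and checks $\Cmid(\xmid)=1$. Your soundness case split is exactly the paper's $F_1/F_2/F_3$ argument. The $t<\lambda$ case you add is handled in the paper by step (0) of $\Batch$, before \SmallBatch{} is ever reached.

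The one real difference is how $\C(\bm M^*)=1$ is enforced.
\begin{itemize}
\item The paper has the verifier evaluate $C$ on $\bm M^*$ directly. This makes the $\neg F_1$ branch error-free, giving $\epsilon(t,k)\le\epsilon(t/\lambda,k\lambda)$ at these nodes. The cost is $\size(C)\cdot\Fbits$ verifier time. That term is not covered by the lemma's stated $\Vtime$ bound for an arbitrary $\C$. The paper's summary recurrence absorbs it because, in the actual recursion, the predicate reaching \SmallBatch{} is either $\top$ or an \IPP{} output predicate of size $\tO(\basek S\lambda)$.
\item Your \GKR{} variant pays an extra $2^{-\secloc}$ of error per \SmallBatch{} node and $O(\depth(C)\log\size(C))$ extra rounds. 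The extra error is harmless: each node's error budget is $2^{-\secloc}$, and the field size already covers $\depth(C)\log\size(C)$ for every predicate that reaches such a node. In exchange, you meet the stated $\Vtime$ bound for a general $C$.
\end{itemize}

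One small correction to your accounting. The saving comes from running \GKR{} at all, not from folding its output into $\Psi$. The residual claim $\widehat{\bm M^*}(\bm j)=\bm v$ concerns the hard-coded matrix. It costs $\tO(k\lambda S)$ to check whether the verifier does it immediately or $\Psi$ does it later, independently of $\size(C)$.
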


We present the two protocols, \SmallBatch and \Batch,
in \Cref{alg:base,alg:main}, respectively.
In \SmallBatch,
    the prover computes the midpoint matrix,
$\bm M \in \bin^{k \times S(\lambda+1)}$, 
but given that $k$ is small,
the prover can afford to send it in the clear to the verifier.
They then make the recursive call to $\Batch(t/\lambda, k \cdot \lambda)$ on the expanded batch $\xmid$.
At the end, the verifier performs the resulting checks explicitly on $\bm M$.
The general $\Batch(t, k)$ protocol in \Cref{alg:main} uses the auxiliary circuits defined in \cref{clm:base-case,clm:small-batch,clm:reduction-case,clm:batch}.

\begin{algorithm}[htbp]
  \setstretch{1.2}
  \caption{Protocol $\Batch(t, k) = \prot[S, T, t, k, \lrag{\C}]{}{(\bm x)}{}$ for $\cL_t^{k}[\C]$.}
  \label{alg:main}
    \textbf{Input Parameters:} $t, k, T \in \NN$,
    $\lrag \C$ is the description of a predicate $\C : \bin^{k \times S (\lambda + 1)} \to \bin$ with implementation circuit $C$.\\
    \textbf{Input Batch:} $\bm x \in \bin^{2S \cdot k}$.\\
    \textbf{Derived Parameters:} $\lambda = \Lambda(T, k)$, $\recnodes$ and $\secloc$ as above, $\baset = \lambda$, $d = \dbound$, $\basek = d^2$,
    $\FF$ is the smallest binary field such that $\abs{\FF} \ge \Fboundstmt$ for constants $c_0, c_1$ analyzed in \Cref{sec:field-size}. These parameters are \emph{global parameters}: they only depend on $(T, k)$ in the root call to $\Batch(T, k)$.\\
    \textbf{Verifier Output:} $\lrag \Psi \in \bin^*$, describing a predicate $\Psi: \bin^{2S \cdot k} \to \bin$.
    \begin{enumerate}[label=(\arabic*)]
        \setcounter{enumi}{-1}
        \item \textbf{Base Case Handling}

        \begin{algorithmic}
            \If{$t < \baset$}
                \State Let $\Cbase$ be the low-depth circuit that checks $\bm x \in \cL_t^k$ (\Cref{clm:base-case}).
                \State Apply $\GKR\printparam{\FF, \lrag \Cbase}$ (\Cref{lem:GKR}) and \Return its output.
            \ElsIf{$k < \basek$}
                \State Call $\SmallBatch\printparam{\BatchParam}$ (\Cref{lem:base-case}) and \Return its output.
            \EndIf
        \end{algorithmic}

        \item \textbf{Midpoint Expansion \& Checksum} 
        
        $\cP$ finds the matrix $\bm M \in \bin^{k \times S(\lambda + 1)}$ containing all the midpoints from the tableau of $\cM$.\\
        $\cP$ sends the checksum $\chi = \cksum_d(\bm M)$ to $\cV$.
        \label{ln:tableau}

        \item \textbf{Recursive Call on the Expanded Batch} 
        
        Let $\xmid$ be the expanded batch of claims, of dimension $(\tmid, \kmid) \coloneqq (\frac{t}{\lambda}, k\lambda)$ in $\bm M$.\\
        Call $\Batch(\tmid, \kmid) = \prot[\FF, n, \tmid, \kmid, \top]{}{(\xmid)}{}$ on the expanded instance $\xmid$ to ensure that all intermediate states are locally consistent,
        to obtain $\lrag \Cmid$.
        \label{ln:call-exp}

        \item \textbf{Interactive Proof of Proximity with Row Reduction} 
        
        Let $\Creduce$ be the circuit that verifies $\cksum_d(\bm M) = \chi$, $\Cmid(\xmid(\bm M)) = 1$ and $\C(\bm M)=1$.\\
        Apply $\IPP\printparam{\secloc, \FF, d, \lrag \Creduce}$ (\Cref{thm:IPP}) to reduce checking $\Creduce(\bm M)$ to $\CIPP(\bm M[\cQ, :])$.
        \label{ln:RR}

        \item \textbf{Recursive Call on the Reduced Batch \& Post-Processing} 

        Let $\xred \coloneqq \bm x[\cQ, :]$.\\
        Call $\Batch(t, \kred) = \prot[\FF, n, t, \abs{\cQ}, \lrag \CIPP]{}{(\xred)}{}$ on $\xred = \bm x[\cQ, :]$ to obtain $\lrag{\Cred}$\\
        \textbf{return} $\lrag{\CBatch}$, the description of the circuit in \Cref{clm:batch} that checks $\cksum_d(\bm x[:,0:S]) = \chi[:,0:S], \cksum_d(\bm x[:, -S:])=\chi[:,-S:]$ and that $\Cred(\bm x[\cQ, :]) = 1$.
        \label{ln:red}
    \end{enumerate}
\end{algorithm}

\begin{algorithm}[htbp]
  \setstretch{1.2}
  \caption{Protocol $\SmallBatch = \prot[S, T, t, k, \lrag \C]{_\SmallBatch}{(\bm x)}{}$ for $k < \basek$.}
  \label{alg:base}
      \textbf{Input Parameters:} $t, k, T \in \NN$, 
      $\lrag \C$ is the description of a predicate $\C : \bin^{k \times S (\lambda + 1)} \to \bin$ with implementation circuit $C$.
    \textbf{Input Batch:} $\bm x \in \bin^{2S \cdot k}$.\\
    \textbf{Derived Parameters:} $\lambda = \Lambda(T, k)$, $\recnodes$ and $\secloc$ as above, $\baset = \lambda$, $d = \dbound$, $\basek = d^2$, $\FF$ is the smallest binary field such that $\abs{\FF} \ge \Fboundstmt$ for constants $c_0, c_1$ analyzed in \Cref{sec:field-size}.\\
    \textbf{Verifier Output:} $\lrag \Psi \in \bin^*$, describing a predicate $\Psi: \bin^{2S \cdot k} \to \bin$.\\
    \textbf{Assumption:} $k < \basek$ is small.
    \begin{enumerate}[label=(\arabic*)]
        \item \textbf{Midpoint Expansion}
        
        $\cP$ finds the matrix $\bm M \in \bin^{k \times S(\lambda + 1)}$ containing all the midpoints and sends $\bm M$ to $\cV$.

        \item \textbf{Recursive Call on the Expanded Batch} 
        
        Call $\Batch( \tmid, \kmid) = \prot[\FF, n, \tmid, \kmid, \top]{}{(\xmid)}{}$ on the expanded instance $\xmid$ to obtain $\lrag{\Cmid}$.

        \item \textbf{Explicit Checks and Post-processing}
        \label{ln:CRR-basek}

        $\cV$ checks $\CCmid(\xmid, \lrag{\Cmid}) = 1$ and $\C(\bm M)=1$. It \textbf{rejects} otherwise.\\
        $\cV$ constructs $\xmid$ from $\bm M$. \\
        \textbf{return} $\lrag{\CSmallBatch}$, the description of the circuit in \Cref{clm:small-batch} that verifies the boundaries of $\xmid$ agree with $\bm x$ and that $\Cmid(\xmid) = 1$.
    \end{enumerate}
\end{algorithm}

\begin{proposition}[The input predicate $\Cbase$ for $\cL_t^{k}$]
\label{clm:base-case}
    Consider parameters $n, t = t(n), k = k(n), S = S(n) \in \NN$, and a batch of statements $\bm x \in \bin^{k \times (2S)}$.
    The predicate $\Cbase$ with description $\lrag\Cbase = (t, k)$ verifies that $\bm x \in \cL_t^{k}$.
    The implementation circuit $\CCbase$ for $\Cbase$ satisfies
    \begin{itemize}
        \item $\size(\CCbase) = O(t \cdot k \cdot S)$.
        \item $\depth(\CCbase) = O(t + \log k + \log S)$.
    \end{itemize}
\end{proposition}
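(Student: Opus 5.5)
The plan is to realize $\CCbase$ as $k$ parallel copies of a single-statement checker, each of which simulates $\cM$ from $x_{i,\tstart}$ for exactly $t$ steps and compares the result with $x_{i,\tend}$. The outputs are then combined by an AND tree. The description $\lrag\Cbase = (t,k)$ suffices, because the whole circuit is determined by the fixed machine $\cM$ together with the shape parameters $t, k, S$.

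\textbf{Step 1 (one transition step).} Since $\cM$ has constant-size description, each cell of the next configuration depends only on a constant-size window of the current configuration (the cell, its neighbours, and the head/state markers). A configuration is encoded in $S$ bits, with the standard local encoding in which each cell records its symbol and whether the head is present with which state. Under this encoding the map $x \mapsto \mathsf{next}(x)$ is computed by $S$ constant-size gadgets running in parallel. This gives a transition layer of size $O(S)$ and depth $O(1)$.

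\textbf{Step 2 (stacking and aggregation).} For each $i\in[k]$ I stack $t$ transition layers starting from $x_{i,\tstart}$, which costs size $O(tS)$ and depth $O(t)$. I then compare the output bitwise with $x_{i,\tend}$ using $S$ equality gates followed by an AND tree of depth $O(\log S)$. Taking all $k$ rows in parallel and then a final AND tree of depth $O(\log k)$ gives total size $O(tkS)$ and depth $O(t+\log S+\log k)$. Over $\FF$, AND becomes multiplication and equality becomes $1 + a + b$ in characteristic $2$, so the arithmetic version has the same size and depth, consistent with the earlier remark on extending Boolean circuits to arithmetic ones.

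\textbf{Step 3 (uniformity and correctness).} Correctness is by induction on the number of steps: the $s$-th layer outputs the configuration reached after $s$ steps. Hence row $i$ outputs $1$ iff $(x_{i,\tstart},x_{i,\tend})\in\cL_t$, and the final AND tree outputs $1$ iff $\bm x\in\cL_t^k$. For log-space uniformity, a machine given $(t,k,S)$ and a gate index can compute that gate's layer, row, and cell by arithmetic on $O(\log(tkS))$-bit counters, and then read off its wiring from the fixed transition gadget.

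I expect the main obstacle to be the locality claim in Step 1. With a naive encoding, head movement would require a global search for the head position. The local encoding avoids this, at the cost of a constant-factor blowup in $S$ that is absorbed into the bounds.
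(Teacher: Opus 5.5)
Your proposal is correct and follows the paper's proof: a size-$O(S)$, constant-depth one-step transition circuit stacked $t$ times, an equality check against the end configuration of size $O(S)$ and depth $O(\log S)$, and $k$ parallel copies combined at an additional $O(\log k)$ depth. Your remarks on the local configuration encoding, the characteristic-$2$ arithmetization, and log-space uniformity fill in details that the paper leaves implicit.
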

\begin{proposition}[The output predicate $\CBatch$ of $\Batch$]
    \label{clm:batch}
    Consider parameters $n, k, S = S(n), d = d(n), \lambda = \lambda(n) \in \NN$,
    $\cksumT = 2d$,
    and a boundary checksum slice $\cksumv_{\mathsf{bdry}} \in \FF^{\cksumT \times 2S}$.
    Let $\cQ \subset [k]$ be a subset of indices, and let its implementation circuit be $G$.
    Given an instance $\bm x \in \bin^{k \times (2S)}$ of $\cL_t$,
    a predicate description $\lrag \Cred$ for $\Cred : \bin^{\abs{\cQ} \times (2S)} \to \bin$ whose implementation circuit is $\CCred$,
    the predicate $\CBatch$ with description $\lrag\CBatch = (\cksumv_{\mathsf{bdry}}, \lrag \cQ, \lrag \Cred)$ (where $\Cred : \bin^{\abs{\Q} \times (2S)} \to \bin$) checks the following.
    \begin{enumerate}
        \item $\cksum_d(\bm x) = \cksumv_{\mathsf{bdry}}$.
        \item Expand $\langle\cQ\rangle$ using $G$ and compute $\xred \coloneqq \bm x[\cQ, :]$.
        \item Verify that $\Cred(\xred) = 1$.
    \end{enumerate}
    Note that $\abs{\lrag \CBatch} = \abs{\lrag \Cred} + \abs{\lrag \cQ} + O(\cksumT S\Flog) = \abs{\lrag \Cred} + \abs{\lrag \cQ} + O(dS\Flog)$.
    With $\tO$ hiding $\polylog(n, S k, d, \lambda)$ factors,
    the implementation circuit $\ensuremath{C_{\Batch}}\xspace$ for $\ensuremath{\Psi_{\Batch}}\xspace$ satisfies
    \begin{itemize}
        \item $\mathsf{size}(\CCBatch) = \tO(k\cksumT S +\abs{\cQ}(\size(G) + S) + \size(\CCred))$.
        \item $\mathsf{depth}(\CCBatch) = \tO(1) + \depth(G) + \depth(\CCred)$.
    \end{itemize}
\end{proposition}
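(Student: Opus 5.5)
The plan is to build $\CCBatch$ explicitly as three sub-circuits run in parallel on the input $(\bm x, \lrag\CBatch)$. Their outputs are combined by an AND gate. We then add up sizes and take the maximum of depths. The description $\lrag\CBatch$ is parsed as the concatenation $(\cksumv_{\mathsf{bdry}}, \lrag\cQ, \lrag\Cred)$. Its length bound is immediate: $\cksumv_{\mathsf{bdry}}$ has $\cksumT\cdot 2S$ field elements of $O(\Flog)$ bits each, which gives $O(dS\Flog)$ since $\cksumT=2d$.

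\textbf{Checksum check.} Invoke \Cref{lem:unique-decoding}. For each of the $2S$ boundary columns of $\bm x$ (a vector in $\FF^k$, lifting bits to field elements), the syndrome $\cksum_d$ is computed by a circuit of size $O(k\cksumT)$ and depth $\tO(1)$. Over all columns this costs size $O(k\cksumT S)$. Comparing with $\cksumv_{\mathsf{bdry}}$ takes $2S\cksumT$ equality tests, which is again $O(k\cksumT S)$ size. These tests are combined by an AND tree of depth $O(\log(S d))=\tO(1)$. The circuit is log-space uniform because the Reed--Solomon parity-check entries are computable in small space from indices.

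\textbf{Selection of $\xred$.} For each $j\in[\abs{\cQ}]$, run the implementation circuit $G(j,\lrag\cQ)$ to obtain the index $i_j$. This costs $\abs{\cQ}\cdot\size(G)$ size and $\depth(G)$ depth. Then select row $\bm x[i_j,:]$ with a multiplexer over the $k$ rows, for instance $\bigvee_i [i=i_j]\wedge \bm x[i,:]$. Done naively, this costs $O(kS)$ per row, so $O(\abs{\cQ}kS)$ in total, which exceeds the claimed $\tO(\abs{\cQ}(\size(G)+S))$.

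This is the step I expect to be the main obstacle. My first attempt would be to exploit that $\tO$ hides polylog factors and route the selection through a sorting or permutation network. Such a network costs $\tO((k+\abs{\cQ})S)$ size and $\tO(1)$ depth, and that cost is absorbed into the $\tO(k\cksumT S)$ term already present. Alternatively, one can note that the $k$-dependence is dominated by $k\cksumT S$ whenever $\abs{\cQ}\le \cksumT$. Either route makes the bound consistent, with depth $\depth(G)+\tO(1)$.

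\textbf{Predicate check and totals.} Feed $\xred$ and $\lrag\Cred$ into $\CCred$, which costs $\size(\CCred)$ and $\depth(\CCred)$. Summing the three parts gives the stated size bound. For depth, the checksum branch runs in parallel with the other two, while selection and the $\Cred$ evaluation are sequential. This yields $\tO(1)+\depth(G)+\depth(\CCred)$. Finally, uniformity holds because each part is log-space uniform given its shape parameters, and composition preserves this.
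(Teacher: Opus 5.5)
Your proposal is correct via the sorting-network route. It uses the same three-part decomposition as the paper: boundary-column syndromes compared against $\cksumv_{\mathsf{bdry}}$; $\abs{\cQ}$ parallel copies of $G$ followed by row selection; and a call to $\CCred$. As in the paper, you add the sizes and run the checksum branch in parallel with the sequential $G$-then-$\CCred$ branch.

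You differ from the paper only at the row-selection step, and there you are more careful than the paper. The paper simply asserts that computing $\xred=\bm x[\cQ,:]$ takes $O(\abs{\cQ}S)$ gates. That count cannot be right when $\lrag\cQ$ is an input to the implementation circuit and any row may be selected. The map $(\bm x, i_1,\dots,i_{\abs{\cQ}})\mapsto \bm x[\cQ,:]$ depends on every bit of $\bm x$, so any fan-in-2 circuit for it has $\Omega(kS)$ gates. The stated bound survives only because this cost is absorbed by the $\tO(k\cksumT S)$ term, which is exactly what your sorting-network route delivers.

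To make that route concrete:
\begin{itemize}
\item Tag row $i$ with key $(i,0)$ and query slot $j$ with key $(i_j,1)$, where $i_j=G(j,\lrag\cQ)$.
\item Sort all entries with Batcher's log-space uniform network.
\item Copy each row's payload into the query entries that immediately follow it; a segmented prefix scan handles repeated indices.
\item Sort the query entries back by $j$.
\end{itemize}
This costs $\tO((k+\abs{\cQ})S)$ size and polylogarithmic depth. A bare permutation network would additionally need its switch settings computed from the $i_j$'s, so sorting is the cleaner choice.

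You should drop the alternative ``$\abs{\cQ}\le\cksumT$'' route, because it fails exactly where the proposition is used. In the non-base case $k\ge\basek=d^2$, so $\abs{\cQ}$ can be as large as $\ceil{24\secloc k/d}\gg 2d$. The naive $O(\abs{\cQ}kS)$ multiplexer would then be quadratic in $k$. That would break the downstream recurrence $\size(t,k)=\size(t,\kred)+\tO(k\cdot\poly(d)S)$ on which the final circuit-size bound relies.
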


\begin{proposition}[The input predicate $\Creduce$ for the \IPP]
    \label{clm:reduction-case}
    Consider parameters $n, S = S(n), k = k(n), d = d(n), \lambda = \lambda(n)\in \NN$.
    Given a checksum $\cksumv \in \FF^{\cksumT \times (S(\lambda+1))}$, a predicate description $\lrag \C$ for $\C: \FF^{k \times (S(\lambda+1))} \to \bin$ with implementation circuit $C$, 
    a matrix $\bm M \in \bin^{k \times (S(\lambda+1))}$, 
    and a predicate description $\lrag{\Cmid}$ for $\Cmid: \bin^{(k \lambda) \times (2S)} \to \bin$ with implementation circuit $\CCmid$,
    the predicate $\Creduce$ with description $\lrag\Creduce = (\cksumv, \lrag \C, \lrag{\Cmid}, d, \lambda)$ performs the following on input $\bm M$:
    \begin{enumerate}
        \item Reads out $\xmid = ((\xmid_{i,j}, \xmid_{i,j + 1}))_{i, j \in [k] \times [0, \lambda - 1]} \in \bin^{(k \cdot \lambda) \times (2 S)}$ from $\bm M$ and verifies that $\Cmid(\xmid) = 1$.
        \item Verifies that $\cksum_d(\bm M) = \cksumv$.
        \item Verifies that $\C(\bm M) = 1$.
    \end{enumerate}
    With $C$ denoting $\C$'s implementation circuit and $\tO$ hiding $\polylog(n, S, k, d, \lambda)$ factors, the implementation circuit $\CCreduce$ for $\Creduce$ satisfies
    \begin{itemize}
        \item $\size(\CCreduce) = \tO(k\cksumT S\lambda +\size(C) + \size(\CCmid))$.
        \item $\depth(\CCreduce) = \tO(\max(\depth(C), \depth(\CCmid)))$.
    \end{itemize}
\end{proposition}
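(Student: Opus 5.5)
We establish \Cref{clm:reduction-case} by writing down $\CCreduce$ explicitly as the AND of three sub-circuits, one for each listed check. We then bound the size and depth of each sub-circuit separately. The conjunction costs an extra $O(1)$ gates and $O(\log 3)$ depth.

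\textbf{Step 1: the $\Cmid$ check.} The map $\bm M \mapsto \xmid$ is pure wiring. Row $(i,j)$ of $\xmid$ is the pair of $S$-bit blocks $(\bm M[i,j], \bm M[i,j+1])$ for $j \in [0,\lambda-1]$. It duplicates each interior block at most twice and uses no gates, so it contributes $O(k\lambda S)$ wires and depth $0$. We feed this output, together with the hard-wired description $\lrag{\Cmid}$, into the implementation circuit $\CCmid$. This sub-circuit therefore has size $O(k\lambda S) + \size(\CCmid)$ and depth $\depth(\CCmid)$. Uniformity is preserved, since the index map $(i,j) \mapsto$ positions in $\bm M$ is computable in logarithmic space.

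\textbf{Step 2: the checksum check.} For each of the $S(\lambda+1)$ columns $\bm m$ of $\bm M$, we compute $\cksum_d(\bm m)$ using the implementation circuit from \Cref{lem:unique-decoding}. Each such circuit has size $O(k\cksumT)$ and depth $\tO(1)$; the bits of $\bm M$ are lifted into $\FF$ as $0/1$ elements. We then compare the result with the corresponding column of the hard-wired $\cksumv$. Each equality test over $\FF$ costs $\tO(1)$ size and depth: compute the difference, test it for zero, and AND over $\cksumT$ entries with a depth-$O(\log \cksumT)$ tree. Summing over all columns gives size $O(k\cksumT S(\lambda+1)) = \tO(k\cksumT S\lambda)$ and depth $\tO(1)$, where the final AND over $S(\lambda+1)\cksumT$ bits adds depth $O(\log(S\lambda d))$, absorbed by $\tO$.

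\textbf{Step 3: the $\C$ check and assembly.} We feed $\bm M$ and $\lrag{\C}$ directly into $C$, at cost $\size(C)$ and $\depth(C)$. We then combine the three output bits with two AND gates. Adding up the three parts gives $\size(\CCreduce) = \tO(k\cksumT S\lambda + \size(C) + \size(\CCmid))$. The depth is the maximum of the three sub-circuit depths plus $O(1)$, i.e., $\tO(\max(\depth(C),\depth(\CCmid)))$; the $\tO(1)$ checksum depth is absorbed by $\tO$, using that each of these depths is at least $1$.

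The only point needing care, and the step I expect to be the main (mild) obstacle, is log-space uniformity of the combined circuit given the description $\lrag\Creduce = (\cksumv,\lrag\C,\lrag{\Cmid},d,\lambda)$. To handle it, the generating machine must output the wiring of Step 1, the Reed--Solomon parity-check circuits, and calls to the uniform generators of $C$ and $\CCmid$, all in logarithmic space. This follows because each component generator is log-space and composition by sequential invocation with index arithmetic over $k, \lambda, S, d$ stays in log space.
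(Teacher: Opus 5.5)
Your proposal is correct and follows essentially the same route as the paper. You split $\CCreduce$ into three parts: the $\Cmid$ check on the rewired $\xmid$ (pure wiring into $\CCmid$), the per-column Reed--Solomon checksum computation from \Cref{lem:unique-decoding} followed by an equality test against $\cksumv$, and a direct call to $C$; sizes then add and depths combine by a maximum plus $\tO(1)$. Your extra remarks on log-space uniformity and on the gate-free wiring for $\xmid$ are consistent with the paper's argument and slightly more detailed than it.
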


\begin{proposition}[The output predicate $\CSmallBatch$ of $\SmallBatch$]
    Consider parameters $n, S = S(n), k = k(n), \lambda = \lambda(n) \in \NN$.
    Let $\bm x \in \bin^{k \times (2S)}$ be a batch of statements and \hfill\break
    $\xmid = ((x_{i, j}, x_{i, j + 1}))_{(i, j) \in [k] \times [0, \lambda - 1]} \in \bin^{(k \cdot \lambda) \times (2S)}$ be the batch of midpoint statements for $\bm x$ (given by the midpoint matrix ${\bm M}_{\bm x}$ for parameter $\lambda$).
    Given a description $\Cmid$ for a predicate $\Cmid : \bin^{(k \lambda) \times (2S)} \to \bin$ whose implementation circuit is $\CCmid$,
    we let $\lrag\CSmallBatch = (\xmid, \lrag{\Cmid})$. 
    The predicate $\CSmallBatch(\bm x)$ verifies the following:
    \begin{enumerate}
        \item The boundary states of $\xmid$ agree with $\bm x$ (i.e., $x_{i,0} = x_{i, \tstart}$ and $x_{i, \lambda} = x_{i, \tend}$ for all $i \in [k]$).
        \item The condition $\Cmid(\xmid) = 1$ holds.
    \end{enumerate}

    Note that $\abs{\lrag \CSmallBatch} = O(k \lambda S + \abs{\lrag{\Cmid}})$.
    With $\tO$ hiding $\poly(n,S,k,\lambda)$ factors,
    the implementation circuit $\CCSmallBatch$ for $\CSmallBatch$ satisfies
    \begin{itemize}
        \item $\size(\CCSmallBatch) = \tO(kS\lambda) + \size(\CCmid)$.
        \item $\depth(\CCSmallBatch) = \tO(1) + \depth(\CCmid)$.
    \end{itemize}
    \label{clm:small-batch}
\end{proposition}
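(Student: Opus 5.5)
The plan is to build the implementation circuit $\CCSmallBatch$ explicitly from two parts and then read off size, depth and description length. The description $\lrag\CSmallBatch = (\xmid, \lrag{\Cmid})$ hard-codes the $k\lambda$ midpoint statements, each of $2S$ bits. So $\abs{\lrag\CSmallBatch} = 2k\lambda S + \abs{\lrag{\Cmid}} = O(k\lambda S + \abs{\lrag{\Cmid}})$ follows immediately. On input $(\bm x, \lrag\CSmallBatch)$ the circuit computes the conjunction of a boundary-check subcircuit and $\CCmid(\xmid, \lrag{\Cmid})$, where $\xmid$ is wired directly from the description bits.

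For the boundary check, I would, for every $i \in [k]$, compare the $S$ bits of $x_{i,\tstart}$ with the start component of the hard-coded statement indexed by $(i,0)$. I would likewise compare the $S$ bits of $x_{i,\tend}$ with the end component of the statement indexed by $(i,\lambda-1)$. To make $\xmid$ a genuine midpoint chain, I would also check that the end of statement $(i,j)$ equals the start of statement $(i,j+1)$ for each $j \in [0,\lambda-2]$. These are $O(k\lambda S)$ bitwise XNOR/equality gates, combined by a balanced AND tree of depth $O(\log(k\lambda S))$. This gives size $O(k\lambda S)$ and depth $\tO(1)$. The chaining check is already implicit if $\xmid$ is produced by reading consecutive columns of $\bm M$ as in the definition, but including it costs nothing asymptotically. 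Over $\FF$ of characteristic 2, equality of bits $a,b$ is $1+a+b$, and AND is multiplication, so this is a legitimate arithmetic circuit.

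Uniformity needs care: $\CCSmallBatch$ must be log-space uniform given the shape parameters $(k,\lambda,S)$ and the length of $\lrag{\Cmid}$. The gate indices of the comparison layer are simple arithmetic on $(i,j,b)$, so a log-space machine can generate them, and $\CCmid$ is log-space uniform by hypothesis. Composing the two generators and adding one final AND gate keeps uniformity.

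Combining these, the size is $O(k\lambda S) + \size(\CCmid) + O(1)$, which is $\tO(kS\lambda) + \size(\CCmid)$. The depth is $\max(\tO(1), \depth(\CCmid)) + 1 \le \tO(1) + \depth(\CCmid)$. I expect no real obstacle. The only mildly delicate points are stating the uniformity of the composed circuit and fixing the indexing convention for how $\xmid$'s rows map to $(i,j)$ pairs, both of which are routine.
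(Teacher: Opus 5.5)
Your proposal is correct and follows the paper's proof: build $\CCSmallBatch$ as the conjunction of an $O(kS)$-gate, $\tO(1)$-depth boundary equality check with a call to $\CCmid$ on the hard-coded $\xmid$, then read off the size, depth and description-length bounds. Your extra chain-consistency check is redundant, since $\xmid$ is read off consecutive columns of the midpoint matrix, but it does not affect the bounds. Your uniformity remarks just make explicit what the paper leaves implicit.
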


\subsection{Analysis of Our Construction}
We show how to prove \Cref{thm:main-LDP,lem:base-case} simultaneously,
assuming \Cref{clm:base-case,clm:reduction-case,clm:small-batch,clm:batch},
whose proofs are deferred to \Cref{sec:auxiliary-claims}.

By strong induction,
suppose for all $(t', k')$ such that either $t' < t$ or $t' = t$ and $k' < k$,
there exists a protocol $\Batch(t', k')$ with the stated properties. 

\subsubsection{Completeness and parameter selection} 
\label{sec:field-size}
Completeness follows from the completeness of the underlying sub-protocols.

We set $\lambda = \ceil{2^{\sqrt{\log \binom{2\log T + \log k}{\log T}}}}$, $\secloc = \secpar+\ceil{\log(4\recnodes)}+2$, $d = \dbound$, $\baset = \lambda$ and $\basek = d^2$.
The discussion on why this choice minimizes the overall complexity of the protocol appears in \Cref{sec:complexity}.

Here we discuss the selection of the field size. 
\begin{itemize}
    \item \textbf{In the base case ($t < \baset = \lambda$)}: 

    We invoke the \GKR protocol on $\CCbase$ with $\size(\CCbase) = O(t \cdot k \cdot S)$, $\depth(\CCbase) = O(\lambda + \log k + \log S)$,
    and thus $\depth(\CCbase) \cdot \log \size(\CCbase) = \lambda (\log (\lambda k S))^2$.
    Therefore, to make the soundness error less than $\GKRbound$, we set
    \[
    \abs{\FF} > 2^{\secloc}\cdot C_\GKR \lambda \cdot (\log (\lambda (k+1) S))^2
    \in O(2^{\secloc} T^3 k^2 S).
    \]
    \item \textbf{In the general case ($t \ge \baset$, $k \ge \basek$)}:

    For some absolute constant $c_2$,
    the following bound follows from \Cref{clm:description-LDP} in \Cref{sec:complexity},
    \begin{align*}
        \depth(\CCreduce) \log \size(\CCreduce) &\le \tO(\depth(C_{t, k}) \log \size(C_{t, k})) \\
        &= \max(\depth(C) \log \size(C),\lambda)\log(\abs{\FF}n S T)^{c_2},
    \end{align*}
    where $C_{t, k}$ is the implementation circuit of $\C_{t, k}$, 
    the input predicate defining $\cL_t^k[\C_{t, k}]$ in the recursive call to $\Batch(t, k)$,
    and $\CCreduce$ is the implementation circuit for the predicate on which we run \IPP.
    
    Let $\Gamma \coloneqq \depth(C) \log \size(C)+1$.
    In order to apply the soundness guarantee in \Cref{thm:IPP},
    the field size must be at least
    (recalling that $C_\GKR$ and $c$ are some absolute constants),
    \begin{align*}
    &C_{\GKR} \cdot 2^{\secloc + 4c + 7} \cdot ((\secloc d S(\lambda + 1)\log(\abs{\FF}(k+1)))^c + \depth(\CCreduce) \log \size(\CCreduce)+1)\\
    \in& O(2^{2\secloc} (\lambda^{c + 1}S^c \log k^c + \max(\Gamma,\lambda) \log(\abs{\FF} n S T k)^{c_2}))\\
    \subset& O(2^{4\secloc} T^5 k^4 \Gamma\cdot(S\Flog)^{\max(c, c_2) + 1}).
    \end{align*}
    Note that we used the actual \IPP invocation parameter $\secloc$, $d=\dbound$, and $(\lambda\log (nT(k+1)))^{O(1)} = o(Tk)$ to simplify the bounds.
\end{itemize}
To summarize, there exist some constants $c_0, c_2$ such that 
picking 
\[\abs{\FF} \ge \Fboundstmt\] 
would satisfy both requirements.\footnote{Specifically, 
$c_0$ is the hidden constant in the $O$ notation in the general case above and $c_1 = 2(\max(c, c_2) + 1)$.
Note that we also used the fact that $\abs{\FF}^{1/2} \gg \Fbits$ when $\abs{\FF}$ is large.}

\subsubsection{Soundness.}
\label{sec:main-soundness}
Suppose $\bm x \notin \cL_{t}^k[\C]$ and let $\cP^*$ be a cheating prover.
Let $\lrag \Psi \coloneqq \outputprotUU[\FF, \lrag{\C}, n, t, k, T]{}{^*(\bm x)}{}$ be the output of either $\Batch(t, k)$ if $k > \basek$ or $\SmallBatch(t, k)$ if $k \le \basek$,
and $E$ be the event that $\Psi(\bm x) = 1$.
Let $\epstk$ be the soundness error of $\Batch(t, k)$.
Our goal is to give an upper bound for $\Pr[E] \le \epstk$.

\begin{claim}
If $t < \baset$, $\eps < \GKRbound$.
\end{claim}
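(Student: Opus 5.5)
In the base case $t < \baset$, \Cref{alg:main} does nothing except run $\GKR\printparam{\FF, \lrag \Cbase}$ and return its output. So the output predicate $\Psi$ is exactly the low-depth predicate produced by the \GKR verifier. The soundness event $E$ ("$\Psi(\bm x)=1$") is therefore exactly the \GKR soundness event, and the plan is to bound it by the soundness error of \Cref{lem:GKR} on the circuit $\CCbase$.

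First we check the hypothesis. Suppose $\bm x \notin \cL_t^k[\C]$. Since $t < \baset = \lambda$, the clause "$t \ge \lambda \implies \C(\bm M_{\bm x}) = 1$" holds vacuously, so $\cL_t^k[\C] = \cL_t^k$ and hence $\bm x \notin \cL_t^k$. By \Cref{clm:base-case}, $\Cbase(\bm x) = 0$, so $\bm x \notin \cL_{\Cbase}$. The soundness clause of \Cref{lem:GKR} then gives, for every cheating prover $\cP^*$,
\[
\Pr[E] \le C_\GKR \cdot \frac{D \log S'}{\abs{\FF}},
\]
where $D$ and $S'$ are the depth and size of the uniform implementation circuit for $\Cbase$.

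Next we plug in the parameters. \Cref{clm:base-case} gives $D = O(t + \log k + \log S) \le O(\lambda + \log k + \log S)$ and $S' = O(tkS)$, so $D\log S' = O(\lambda(\log(\lambda(k+1)S))^2)$. It remains to show that the chosen field size exceeds $2^{\secloc}$ times this quantity, with the constant $C_\GKR$ absorbed. This is the base-case computation in \Cref{sec:field-size}: $\abs{\FF} \ge \Fboundstmt$ dominates $2^{\secloc} C_\GKR \lambda(\log(\lambda(k+1)S))^2$, because $\lambda \le T$ and $T^{10}(k+1)^4S^{c_1}$ swamps the polylogarithmic factors once $c_0$ is large enough. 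Hence $\Pr[E] < 2^{-\secloc}$.

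The only delicate point is that \Cref{lem:GKR} is stated for the size and depth of the log-space uniform circuit $G(x,\lrag\C)$, not of $\Cbase$ itself. We rely on \Cref{clm:base-case}, which bounds the implementation circuit (with description $(t,k)$) directly. We also need the requirement $D \ge \log n$ in \Cref{lem:GKR}; if necessary we pad $D$, which changes nothing up to constants.
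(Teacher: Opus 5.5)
Your proposal is correct and follows the paper's proof: in the base case the protocol just returns the output of \GKR run on $\Cbase$, so the bound is \GKR soundness (\Cref{lem:GKR}) under the field size chosen in \Cref{sec:field-size}. You also spell out two details the paper leaves implicit: that $\cL_t^k[\C]=\cL_t^k$ when $t<\lambda$ because the implication in the definition is vacuous, and the explicit check that $\abs{\FF}$ dominates $2^{\secloc}C_\GKR\cdot\depth(\CCbase)\log\size(\CCbase)$.
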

\begin{proof}
    In this case, the verifier simply returns the output of the \GKR protocol for checking $\Cbase(\bm x)$,
    which verifies that $\bm x \in \cL_t^k$.
    By the soundness of \GKR (\Cref{lem:GKR}), $\eps < \GKRbound$.
\end{proof}
Therefore, we consider the case when $t \ge \baset$.
Let $\bm M$ denote the midpoint matrix of $\bm x$.
We have the following cases depending on whether the batch size $k$ is small:
\begin{itemize}
    \item In protocol \SmallBatch, some matrix $\bm M^*$ is sent explicitly to the verifier.
    \item In protocol \Batch,
    let $\chi^*$ be the checksum sent by $\cP^*$.
    Since the checksum function is unique-decoding up to $d$ deviations,
    $\chi^*$ uniquely defines at most one matrix $\bm M^* \in \rowball(\bm M)$.
\end{itemize}
In either case, we can define the matrix $\bm M^* \in \bin^{k \times S(\lambda +1)}$. If such a matrix exists, define it uniquely; otherwise, set $\bm M^*=\bm M$ arbitrarily. Let $\xmid^*$ be the instance of $\cL_{\tmid}^{\kmid}$ read from $\bm M^*$. In what follows, if an event $A$ is characterized by a logical statement, we may use $A$ for both the event and the logical statement.

\begin{fact}
    Suppose $\bm x \notin \cL_{t}^k[\C]$, and
    define $F_1$, $F_2$, and $F_3$ as follows.
    \begin{enumerate}[label=($F_{\arabic*}$),ref=F_{\arabic*}]
        \item $\C(\bm M^*) = 1$ \label{eq:CRR}
        \item $\bm M^*[:, :S] = \bm x[:,:S]$ and $\bm M^*[:, -S:] = \bm x[:,-S:]$, \ie the leftmost start states and rightmost end states of $\bm M^*$ are the same as those in $\bm x$ \label{eq:boundary}
        \item $\xmid^* \in \cL_{\tmid}^{\kmid}$. \label{eq:midpoint}
    \end{enumerate}
    Then $\Pr[F_1 \cap  F_2 \cap F_3] = 0$, \ie $\Pr[\neg F_1 \cup \neg F_2 \cup \neg F_3] = 1$.
    \label{clm:F}
\end{fact}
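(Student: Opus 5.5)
We argue by contradiction: assume $F_1$, $F_2$ and $F_3$ all hold and derive $\bm x \in \cL_t^k[\C]$, contradicting the hypothesis $\bm x \notin \cL_t^k[\C]$. The statement is deterministic, since $\bm M^*$ is a fixed matrix determined by $\cP^*$'s message, so the probability claim follows at once from the logical implication. Recall that, because $t \ge \baset = \lambda$, membership $\bm x \in \cL_t^k[\C]$ means two things: every pair $(x_{i,\tstart}, x_{i,\tend})$ lies in $\cL_t$, and $\C(\bm M_{\bm x}) = 1$, where $\bm M_{\bm x} = \bm M$ is the \emph{honest} midpoint matrix from \Cref{def:midpoint-matrix}.

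The key step is to show $\bm M^* = \bm M$. Write $\bm M^*[i,j] = x^*_{i,j}$. By $F_3$, each row $i$ satisfies $(x^*_{i,j}, x^*_{i,j+1}) \in \cL_{t/\lambda}$ for all $j \in [0, \lambda-1]$. That is, $\cM$ moves deterministically from $x^*_{i,j}$ to $x^*_{i,j+1}$ in exactly $t/\lambda$ steps. By $F_2$, $x^*_{i,0} = x_{i,\tstart} = x_{i,0}$. We then induct on $j$. If $x^*_{i,j} = x_{i,j}$, then $x^*_{i,j+1}$ is the configuration $t/\lambda$ steps after $x_{i,j}$, and by determinism of $\cM$ this is exactly $x_{i,j+1}$ (which is defined as the state after $(j+1)t/\lambda$ steps from $x_{i,\tstart}$). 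Hence $\bm M^* = \bm M$ entrywise.

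It remains to read off the conclusions. Chaining the $\lambda$ transitions of length $t/\lambda$ shows $\cM$ goes from $x_{i,\tstart}$ to $x^*_{i,\lambda}$ in exactly $t$ steps. By $F_2$, $x^*_{i,\lambda} = x_{i,\tend}$, so $(x_{i,\tstart}, x_{i,\tend}) \in \cL_t$ for every $i$, giving $\bm x \in \cL_t^k$. By $F_1$ and $\bm M^* = \bm M$, we get $\C(\bm M_{\bm x}) = 1$. Together these give $\bm x \in \cL_t^k[\C]$, the desired contradiction.

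There is no real obstacle here. The only points to handle carefully are indexing conventions: that $\xmid^*$ pairs consecutive columns $j, j+1$ of $\bm M^*$, and that the boundary slices $[:, :S]$ and $[:, -S:]$ of $\bm M^*$ correspond to columns $0$ and $\lambda$. The argument also uses the assumption $t = \lambda^\logt$, so that $t/\lambda$ is an integer and the midpoint times are exact.
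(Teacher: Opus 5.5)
Your proof is correct and follows essentially the same route as the paper: $F_2 \wedge F_3$ forces $\bm M^* = \bm M$ by determinism of $\cM$, after which $F_1$ contradicts $\bm x \notin \cL_t^k[\C]$. The one difference is in the last step. The paper invokes the equivalence $\bm x \notin \cL_t^k[\C] \Leftrightarrow \xmid \notin \cL_{\tmid}^{\kmid} \lor \C(\bm M) = 0$, which implicitly treats the last column of $\bm M$ as $x_{i,\tend}$. You instead derive $\bm x \in \cL_t^k$ directly by chaining the transitions and using the right-boundary part of $F_2$, so your argument works under either reading of the midpoint-matrix convention.
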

\begin{proof}
    It suffices to show that $\ref{eq:boundary}\land\ref{eq:midpoint}\implies \neg \ref{eq:CRR}$. We have that $\ref{eq:boundary}\land\ref{eq:midpoint}$ implies $\bm M = \bm M^*$ and $\xmid = \xmid^*$ because $\cM$ is deterministic: (a) fixing the start states and (b) ensuring the transitions are correct would uniquely determine the entire path. 
    On the other hand, 
    $\bm x \notin \cL_{t}^k[\C]\Leftrightarrow \xmid \notin \cL_\tmid^\kmid \lor \C(\bm M) = 0$.
    Since $\xmid \in \cL_\tmid^\kmid$ (by $\ref{eq:midpoint}$ and the fact that $\xmid = \xmid^*$), we must have $\C(\bm M) = 0$.
\end{proof}
Therefore, by a union bound:
\begin{align*}
    \Pr[E] &\le \Pr[E \cap \neg F_1] + \Pr[E \cap \neg F_2] + \Pr[E \cap \neg F_3].
\end{align*}
    Let $G$ be the event that $\Cmid(\xmid^*) = 1$.
    By the inductive hypothesis, 
    the protocol $\Batch(\tmid, \kmid)$ at \Cref{ln:call-exp} in both versions of the protocol is sound,
    so
    \begin{fact}
        In both versions of the protocol,
        \Batch and \SmallBatch,
        $\Pr[G \mid \neg F_3] = \Pr[\Cmid(\xmid^*) = 1 \mid \xmid^* \notin \cLmid] \le \epsmid$.
        \label{eq:mid}
    \end{fact}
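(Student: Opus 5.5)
The plan is to obtain the bound directly from the inductive hypothesis, applied with the fixed instance $\xmid^*$ in place of the prover's true input. First I will check that $\xmid^*$ is a function of the prover's first message alone. In \SmallBatch, $\bm M^*$ is the matrix sent in the clear. In \Batch, $\bm M^*$ is the unique element of $\rowball(\bm M)$ consistent with $\chi^*$, or $\bm M$ if none exists, and it is well defined by the unique-decoding property of Lemma~\ref{lem:unique-decoding}. In both cases $\bm M^*$, and hence $\xmid^*$, is determined before any verifier randomness used in the recursive call $\Batch(\tmid, \kmid)$ at \Cref{ln:call-exp}. Since the protocol is public-coin and the cheating prover is deterministic given the transcript, $\xmid^*$ is fixed once we condition on the first message.

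Next I will condition on the prover's first message ($\chi^*$ or $\bm M^*$), restricting to values for which $\neg F_3$ holds, i.e.\ $\xmid^* \notin \cLmid$. This is possible because $\neg F_3$ depends only on that message. Given such a conditioning, the remainder of $\cP^*$'s behavior during the sub-call $\Batch(\tmid,\kmid)$ is some (computationally unbounded) prover strategy. The sub-call's verifier never accesses its input (it is an \LDP), so its transcript distribution does not depend on which instance we regard it as run on. We may therefore view it as an execution of $\Batch(\tmid, \kmid)$ with predicate $\top$ on input $\xmid^*$. Because $\tmid < t$, the strong inductive hypothesis applies: $\Batch(\tmid,\kmid)$ is $\epsmid$-sound for $\cL_{\tmid}^{\kmid}[\top] = \cLmid$. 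Hence the probability that the output $\lrag{\Cmid}$ satisfies $\Cmid(\xmid^*) = 1$ is at most $\epsmid$ for every such conditioning. Averaging over all first messages consistent with $\neg F_3$ gives $\Pr[G \mid \neg F_3] \le \epsmid$.

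The main subtlety is the second step, namely justifying that the soundness guarantee still applies when the ``instance'' is defined by the prover's own message rather than fixed in advance. I will handle this exactly as above: the instance depends only on messages sent before the sub-protocol's randomness, and \LDP soundness quantifies over all instances and all prover strategies. One further point needs checking: that the recursive call's global parameters ($\lambda$, $d$, $\FF$, $\secloc$) are the root's parameters, so that the inductive hypothesis supplies the error $\epsmid$ for this exact parameterization. This holds because \Cref{alg:main} fixes these parameters at the root call.
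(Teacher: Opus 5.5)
Your proposal is correct and takes the same route as the paper, which proves this fact in one line by invoking the strong inductive hypothesis (soundness of the recursive call $\Batch(\tmid,\kmid)$, valid since $\tmid < t$). Your additional points are the justification the paper leaves implicit: $\xmid^*$ is fixed by the prover's message sent before the sub-call's fresh public coins, and because the \LDP verifier never reads its input, the sub-call may be regarded as run on $\xmid^*$. The paper only alludes to this in its technical overview, where the sub-call is ``regarded as executed w.r.t.'' the committed matrix.
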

Moreover, by a union bound,
\begin{align*}
    \Pr[E] & \le \Pr[E \cap (\neg F_1 \cup \neg F_2 \cup \neg F_3)]\\
    &\le \Pr[E \cap \neg F_1] + \Pr[E \cap \neg F_2] + \Pr[E \cap \neg F_3].  \numberthis\label{eq:key}
\end{align*}
Now we can analyze the two versions of the protocol.
\begin{claim}
    In \SmallBatch, $\Pr[E] \le \epsmid$.
\end{claim}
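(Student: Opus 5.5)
In \SmallBatch the matrix $\bm M^*$ is the one the prover actually sent, and the verifier checks it explicitly. I would therefore bound the three terms of \eqref{eq:key} one by one, showing that the first two vanish and the third is at most $\epsmid$.

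\textbf{The term $\Pr[E \cap \neg F_1]$.} In step~\ref{ln:CRR-basek} of \Cref{alg:base} the verifier evaluates $\C(\bm M^*)$ directly and rejects if it equals $0$. So whenever $\neg F_1$ holds, the verifier outputs $\bot$ and never produces a $\lrag\Psi$ with $\Psi(\bm x)=1$. Hence this term is $0$.

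\textbf{The term $\Pr[E \cap \neg F_2]$.} The output predicate is $\CSmallBatch$ from \Cref{clm:small-batch}, with $\xmid^*$ hard-coded into its description (the verifier builds $\xmid$ from the matrix $\bm M^*$ it received). Its first check is that the boundary states of the hard-coded $\xmid^*$ agree with $\bm x$. If $\neg F_2$ holds, some start state $x^*_{i,0}$ or end state $x^*_{i,\lambda}$ differs from the corresponding entry of $\bm x$. Then $\CSmallBatch(\bm x)=0$ deterministically, so this term is also $0$.

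\textbf{The term $\Pr[E \cap \neg F_3]$.} The second check of $\CSmallBatch$ is $\Cmid(\xmid^*)=1$, where $\lrag{\Cmid}$ is the output of the recursive call $\Batch(\tmid,\kmid)$. So $E$ implies $G$, and
\[
\Pr[E\cap\neg F_3]\le \Pr[G\cap\neg F_3]\le \Pr[G\mid \neg F_3]\le \epsmid
\]
by \Cref{eq:mid}.

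The main subtlety is justifying the use of \Cref{eq:mid}. The recursive call is executed on $\xmid^*$, an instance that the cheating prover chose adaptively (it is fixed by the prover's first message). So I would argue as follows. In \SmallBatch the matrix $\bm M^*$ is sent before the verifier sends any coins, so $\bm M^*$ is fixed by the deterministic prover $\cP^*$. Conditioned on this fixed value, the rest of the interaction is a prover strategy for $\Batch(\tmid,\kmid)$ on the fixed input $\xmid^*$. The inductive soundness of that protocol then applies, since $(\tmid,\kmid)$ satisfies $\tmid<t$ and is therefore covered by the strong induction hypothesis. Summing the three terms gives $\Pr[E]\le\epsmid$.
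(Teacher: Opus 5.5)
Your proposal is correct and follows essentially the same argument as the paper. You use the same three-term union bound from \Cref{eq:key}: the $\neg F_1$ term vanishes by the verifier's explicit check of $\C(\bm M^*)$, the $\neg F_2$ term vanishes by the boundary check in $\CSmallBatch$, and the $\neg F_3$ term is bounded via $E \Rightarrow G$ and \Cref{eq:mid}. Your added remark, that $\bm M^*$ (hence $\xmid^*$) is fixed before the recursive call's coins so inductive soundness applies to a fixed instance, spells out a point the paper leaves implicit.
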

\begin{proof}
    The verifier has access to the entire matrix $\bm M^*$,
    and the output $\Psi$ is the circuit $\Ccontract$ in \Cref{clm:small-batch},
    which checks $F_2 \land G$.
    \begin{itemize}
        \item Since the verifier checks that $\C(\bm M^*) = 1$ (\ie $\ref{eq:CRR}$) in \Cref{ln:CRR-basek} in \SmallBatch, if $\neg \ref{eq:CRR}$ holds,
        it rejects on that line,
        and thus $\Pr[E \mid \neg F_1] = 0$.
        \item Since the output circuit $\Psi$ produced by \SmallBatch checks $\bm M^*[:, 0:S] = \bm x[:, 0:S]$ (\ie $\ref{eq:boundary}$),
        and also $\Cmid(\xmid) = 1$ (\ie $G$),
        $\Pr[E \mid \neg F_2] = \Pr[E \mid \neg G] = 0$.
    \end{itemize}
    Therefore, by \Cref{eq:key,eq:mid},
    \begin{align*}
        \Pr[E] &\le 0 + \Pr[E \land \neg F_2] + \Pr[E \wedge \neg F_3]\\
        &\le 0 + \Pr[E \mid \neg F_2] + \Pr[E \land \neg F_3 \land \neg G] + \Pr[E \land \neg F_3 \land G]\\
        &\le 2 \cdot 0 + \Pr[E \mid \neg G] + Pr[G \mid \neg F_3] \le \epsmid.\qedhere
    \end{align*}
\end{proof}

\begin{claim}
    In \Batch,
    $\Pr[E] \le \epsred + \epsmid + \IPPbound$.
\end{claim}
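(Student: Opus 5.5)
We bound the three terms of \Cref{eq:key} separately, as in the \SmallBatch case. The differences are that the verifier never sees $\bm M^*$, and that $\C$ is checked only indirectly, through the \IPP and the recursive call on $\xred$.

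First, $\neg F_2$. The output predicate $\CBatch$ checks $\cksum_d(\bm x[:,0:S])=\chi^*[:,0:S]$ and the analogous condition on the last $S$ columns. We split on whether $\bm M^*$ was actually decoded from $\chi^*$.
\begin{itemize}
\item If $\bm M^*\in\rowball(\bm M)$ is consistent with $\chi^*$, then $\bm x$'s boundary columns agree with $\bm M$'s, so they lie within $d$ of $\bm M^*$'s boundary columns. Unique decoding (\Cref{lem:unique-decoding}) then forces them to equal those of $\bm M^*$ whenever the boundary checksums match. Hence $\Pr[E\cap\neg F_2]=0$ in this subcase.
\item If no such $\bm M^*$ exists, we fold this event into the \IPP term below.
\end{itemize}

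Second, $\neg F_3$. Split on $G$:
\[
\Pr[E\cap\neg F_3]\le \Pr[G\mid\neg F_3]+\Pr[E\cap\neg G].
\]
The first summand is at most $\epsmid$ by \Cref{eq:mid}. So the whole problem reduces to bounding $\Pr[E\cap(\neg F_1\cup\neg G\cup \text{no decoding})]$, which we treat together as the single bad event $B$.

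On $B$, no matrix in $\rowball(\bm M)$ satisfies $\Creduce$. We show this by cases.
\begin{itemize}
\item If nothing in the ball matches $\chi^*$, the checksum condition fails for every candidate.
\item Otherwise the unique matrix in the ball matching $\chi^*$ is $\bm M^*$. On $B$ it fails either $\C$ or $\Cmid(\xmid(\cdot))$, so it violates $\Creduce$.
\end{itemize}
Therefore $\bm M$ is $\Deltac$-$d$-far from $\cL_{\Creduce}$. Here we use that $\Deltac\le$ the row distance, and that decoding is column-wise with radius $d$.

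Condition on the transcript up to the \IPP, so that $\chi^*$ and $\lrag{\Cmid}$ are fixed. Soundness of \Cref{thm:IPP} with parameter $\secloc$ gives
\[
\Pr\bigl[\CIPP(\bm M[\cQ,:])=1\bigr]\le 2^{-\secloc}.
\]
If $\CIPP(\bm M[\cQ,:])=0$, then $\xred\notin\cL_t^{\kred}[\CIPP]$, because the midpoint matrix of $\xred$ is exactly $\bm M[\cQ,:]$. By the inductive hypothesis for $\Batch(t,\kred)$ (note $\kred<k$, given $d\ge 48\secloc\log k$ and $k\ge\basek=d^2$), $\Pr[\Cred(\xred)=1]\le\epsred$. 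Since $\CBatch$ requires $\Cred(\bm x[\cQ,:])=1$, this gives
\[
\Pr[E\cap B]\le 2^{-\secloc}+\epsred.
\]

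Summing the three contributions yields the claimed bound $\epsred+\epsmid+\IPPbound$.

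The main obstacle is the bookkeeping that makes the cases disjoint so that each bound is charged only once. We must show that $E\cap\neg F_1$ and $E\cap\neg F_2$ (in the non-decoding subcase) both fall inside $B$, so they are not charged separately. We must also check that conditioning on prefix transcripts is legitimate, which holds because the \IPP and recursive soundness apply for every fixed predicate.
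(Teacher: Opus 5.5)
Your proof is correct and follows essentially the same route as the paper: unique decoding rules out $E\cap\neg F_2$, the event $\neg F_3\cap G$ is charged to the expanded-batch call, and under $\neg(F_1\wedge G)$ the matrix $\bm M$ is $d$-$\Deltac$-far from $\cL_{\Creduce}$, so the \IPP and the reduced-batch call together contribute $2^{-\secloc}+\epsred$. The differences are only bookkeeping: you fold the no-decoding case into the \IPP-bad event, whereas the paper sets $\bm M^*=\bm M$ there so that $F_2$ holds automatically; and you work with joint rather than conditional probabilities, which is slightly cleaner than the paper's chain.
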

\begin{proof}
    We first prove a series of facts.
    \begin{fact}
        In \Batch,
        $\Pr[E \mid \neg F_2] = 0$.
        \label{clm:7}
    \end{fact}
    \begin{proof}[Proof of \Cref{clm:7}]
        Recall that $\neg \ref{eq:boundary}$ is the event that $\bm M^*[:, 0:S] \neq \bm x[:, 0:S]$ or $\bm M^*[:, -S:] \neq \bm x[:, -S:]$. If $\bm M^*[:, 0:S] \neq \bm x[:, 0:S]$, by unique decoding (\Cref{lem:unique-decoding}), 
        $\cksum(\bm x[:, 0:S]) \neq \chi^*[:, 0:S] = \cksum(\bm M^*[:, 0:S])$.
        The output circuit $\Psi$ produced by \Batch checks that $\cksum(\bm x[:, 0:S]) = \chi^*[:, 0:S]$ (\Cref{clm:batch}),
        so indeed $\Pr[E \mid \neg F_2] = 0$. The same argument applies for $\bm M^*[:, -S:] \neq \bm x[:, -S:]$.
    \end{proof}
    Let $\kred \coloneqq \abs{\cQ}$,
    $\Mred \coloneqq \bm M[\cQ, :]$ be the corresponding variables after the protocol runs \Cref{ln:RR}.
    \begin{fact}
        $\Pr[\CIPP(\Mred) = 1 \mid \neg (F_1 \cap G)] \le 2^{-\secloc}$.
        \label{clm:8}
    \end{fact}
    \begin{proof}[Proof of \Cref{clm:8}]
        By the unique decoding property (\Cref{lem:unique-decoding}),
        $\rowball(\bm M) \cap \Creduce^{-1}(1) \subset \set{\bm M^*}$
        because $\Creduce$ checks that its input has checksum $\chi^*$.
        However, $\Creduce$ also checks $\ref{eq:CRR}$ and $G$,
        so $\bm M^*$ is excluded from its acceptance range when $\neg (\ref{eq:CRR} \land G)$,
        and $\bm M$ is $d$-$\Deltac$-far from $\cL_\Creduce$.

        The prerequisite of the \IPP soundness is also satisfied by our choice of $d$ and $\FF$,
        so \Cref{thm:IPP} entails that
        $\Pr[\CIPP(\Mred) = 1 \mid \neg (F_1 \cap G)] \le \IPPbound$.
    \end{proof}
    Let $\lrag \Cred$ be the output of \Cref{ln:red} in \Batch,
    and $\xred \coloneqq \bm x[\cQ, :]$.
    Applying the inductive hypothesis to the protocol $\Batch(t, \kred)$, we have
    \begin{align*}
        \Pr[E \mid \CIPP(\Mred) = 0] = \Pr[\Cred(\bm \xred) = 1 \mid \Cred(\Mred) = 0] \le \epsred.
    \end{align*}
    This, along with \Cref{clm:8}, gives
    \begin{align*}
        \Pr[E \mid \neg (F_1 \cap G)] &\le \Pr[E \mid \CIPP(\Mred) = 0] + \Pr[\CIPP(\Mred) = 1 \mid \neg (F_1 \cap G)]\\
        &\le \epsred + \IPPbound.
    \end{align*}
    Finally, combining with \Cref{clm:7,eq:mid,eq:key},
    \begin{align*}
        \Pr[E] &\le\Pr[E \cap (\neg F_1 \cup \neg F_3)] + \Pr[E \cap \neg F_2]\\
        &\le \Pr[E \cap (\neg F_3 \cap G)] + \Pr[E \cap (\neg F_1 \cup (\neg F_3 \cap \neg G))] + \Pr[E \mid \neg F_2]\\
        &\le \Pr[G \mid \neg F_3] + \Pr[E \mid \neg F_1 \cup \neg G] + 0\\
        &\le \epsred + \epsmid + \IPPbound. \qedhere
    \end{align*}
\end{proof}
To sum up the three cases,
\begin{align*}
    \Pr[E] \le \begin{cases} 
        \GKRbound & \text{if } t < \baset, \\
        \epsmid & \text{if } k < \basek, \\
        \epsred + \epsmid + \IPPbound.& \text{otherwise.}
    \end{cases}
\end{align*}

It remains to plug in our choice of parameters.
Let $\mathcal T(t,k)$ be the recursion tree generated by the two recursive calls.
By \Cref{sec:complexity}, $\abs{\mathcal T(t,k)} \le \recnodes$.
Every base node contributes at most $2^{-\secloc}$,
and every general node contributes at most $2^{-\secloc}$ in addition to its recursive errors.
Thus the inductive recurrence above gives
\[
\eps(t,k)
\le 2\abs{\mathcal T(t,k)}\cdot 2^{-\secloc}
\le 2\recnodes\cdot 2^{-\secpar-\ceil{\log(4\recnodes)}-2}
<2^{-\secpar}.
\]
\subsubsection{Complexities.}
\label{sec:complexity}
Let $\lrag{\Psi}$ be output by the protocol,
and $C_\Psi$ be the predicate's implementation circuit.
Let $\ell(t, k)$, $a(t, k)$, $\Ptime(t, k)$, $\Vtime(t, k)$ 
be the round complexity, per-round communication complexity, prover runtime, and verifier time of $\Batch(t, k)$,
and
$\size(t, k)$,
$\depth(t, k)$,
$\abs{\lrag \Psi}(t, k)$ be the size, depth and description length of the output predicate of $\Batch(t,k)$.

$\tO$ hides $\polylog(\abs{\FF}, n, S, T, k)$ factors.
Recall that $\tmid = t/\lambda$, $\kmid = k \cdot \lambda$, $\kred = \abs{\cQ} \le \ceil{24\secloc \frac{k}{d}}$.
\begin{claim}
The size, depth, and description length of the output low-depth predicate satisfy:
\begin{align*}
    \size(t, k) &= \begin{cases}
    \tO(k \cdot S) & \text{if } t < \baset, \\
    \size(\tmid, \kmid) + \tO(\basek S \lambda) & \text{if } k < \basek,\\
    \size(t, \kred) + \tO(k\cdot \poly(d) S) & \text{otherwise},
\end{cases} \\
    \depth(t, k) &= \begin{cases}
    \tO(\baset) & \text{if } t < \baset, \\
    \depth(\tmid, \kmid) + \tO(1) & \text{if } k < \basek,\\
    \depth(t, \kred) + \tO(1) & \text{otherwise},
\end{cases} \\
    \abs{\lrag{\Psi}}(t, k) &= \begin{cases}
    \tO(1) & \text{if } t < \baset, \\
    \abs{\lrag{\Psi}}(\tmid, \kmid) + \tO(\basek \cdot S\lambda) & \text{if } k < \basek,\\
    \abs{\lrag{\Psi}}(t, \kred) + \tO(dS + \poly(d)) & \text{otherwise},
\end{cases}
\end{align*}
Note that $\baset = \lambda$ and $\basek = d^2$.
Along any recursion path, at most $\log_\lambda T$ expanded-batch calls occur, so every encountered state satisfies $k'\le kT$.
Consequently $\depth(t', k') < \tO(\lambda)$, $\log\size(t',k') \le \polylog(k,T,S,\lambda,d)$, and $\abs{\lrag{\Psi}}(t', k') < \tO(d^2S\lambda + \poly(d))$ for all $(t', k')$ encountered in the recursion.
\label{clm:description-LDP}
\end{claim}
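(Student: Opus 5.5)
The plan is to read off each recurrence directly from what the verifier outputs in the three branches of \Cref{alg:main}, using the circuit bounds of \Cref{clm:base-case,clm:batch,clm:small-batch} together with \Cref{lem:GKR} and \Cref{thm:IPP}. After that, I would bound the recursion states to get the uniform bounds at the end of the claim.

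\textbf{Base case $t<\baset$.} The output is that of $\GKR$ on $\Cbase$. By \Cref{lem:GKR}, the output predicate is a low-degree-extension check with $\abs{\lrag\Psi}=O(\Flog)=\tO(1)$, size $\tO(n)$ where $n=2kS$ is the input length, and depth $\tO(1)$. The stated depth bound $\tO(\baset)$ is therefore loose but valid.

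\textbf{Case $k<\basek$.} The output is $\CSmallBatch$ with inner predicate $\Cmid$ coming from $\Batch(\tmid,\kmid)$. \Cref{clm:small-batch} gives size $\tO(kS\lambda)+\size(\tmid,\kmid)$, depth $\tO(1)+\depth(\tmid,\kmid)$, and description length $O(k\lambda S)+\abs{\lrag\Psi}(\tmid,\kmid)$. Substituting $k<\basek$ yields the stated recurrences.

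\textbf{General case.} The output is $\CBatch$ from \Cref{clm:batch} with inner predicate $\Cred$ from $\Batch(t,\kred)$. The set $\cQ$ is produced by \Cref{thm:IPP}, so $\size(G)=\tO(\poly(d))$, $\depth(G)=\tO(1)$, and $\abs{\lrag\cQ}=\tO(\poly(d))$. \Cref{clm:batch} then gives:
\begin{itemize}
\item size $\tO(k\cksumT S+\abs{\cQ}(\poly(d)+S))+\size(t,\kred)$, which is at most $\tO(k\poly(d)S)+\size(t,\kred)$ since $\abs{\cQ}\le k$ and $\cksumT=2d$;
\item depth $\tO(1)+\depth(t,\kred)$;
\item description length $\abs{\lrag\Psi}(t,\kred)+\tO(\poly(d))+O(dS\Flog)$.
\end{itemize}

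\textbf{State bound.} Each expanded-batch call divides $t$ by $\lambda$, and the recursion stops once $t<\baset$, so any root-to-leaf path contains at most $\log_\lambda T$ such calls. Reduced-batch calls only shrink $k$, because $\kred\le\lceil 24\secloc k/d\rceil<k$ when $k\ge\basek=d^2$ and $d\ge 96\secloc$. Hence every state $(t',k')$ satisfies $k'\le k\lambda^{\log_\lambda T}=kT$.

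\textbf{Unrolling the recurrences.} I would unroll each recurrence along a single root-to-leaf path of the output predicate. The output predicate nests only one child predicate per node (the child $\Cmid$ or $\Cred$, not both), so the bounds add along a path rather than over the whole tree.
\begin{itemize}
\item \emph{Depth.} A path has at most $\log_\lambda T$ small-batch steps. It has $O(\log_{d}(kT))$ reduction steps between them, since each reduction step shrinks $k$ by a factor of about $d/(24\secloc)\ge 4$. Each step adds $\tO(1)$, and the leaf contributes $\tO(\lambda)$, so the total is $\tO(\lambda)$.
\item \emph{Description length.} Along the same path this is dominated by $\log_\lambda T$ terms of size $\tO(d^2S\lambda)$ plus $\polylog$ terms of size $\tO(dS+\poly(d))$. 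Since $\log_\lambda T$ is absorbed into $\tO$, this gives $\tO(d^2S\lambda+\poly(d))$.
\item \emph{Size.} The size is a sum of polynomially many terms, each polynomial in $k,T,S,\lambda,d$. Therefore $\log\size(t',k')=\polylog(k,T,S,\lambda,d)$.
\end{itemize}

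\textbf{Main obstacle.} The delicate point is the depth unrolling. I have to make sure the $\tO(1)$ per step, multiplied by the number of reduction steps, stays polylogarithmic and does not compound, which relies on $\kred\le k/4$. I also have to check that the $\tO$ hides no factor that depends on the current $k'$ beyond $\polylog(kT)$.
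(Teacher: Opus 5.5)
Your proposal is correct and follows essentially the paper's route. You read off the three recurrences case by case from \Cref{lem:GKR}, \Cref{clm:small-batch}, and \Cref{clm:batch} (with the bounds on $G$ and $\lrag{\cQ}$ taken from \Cref{thm:IPP}), and the at-most-$\log_\lambda T$ expansions give $k'\le kT$. Your added observation that the output predicate nests only one child, so the recurrences unroll along a single path of polylogarithmic length, justifies the final bounds more explicitly than the paper, which states them without derivation. One small nit: the number of reduction steps is governed by the shrink factor $d/(24\secloc)$, so it is $O(\log(kT))$ rather than $O(\log_d(kT))$; this is still polylogarithmic, so your conclusion stands.
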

We analyze the complexities (and thereby prove \Cref{clm:description-LDP}) by the following cases.
\begin{itemize}
    \item 
    When $t < \baset$,
    by \Cref{lem:GKR,clm:base-case},
    we have
    \begin{itemize}
        \item $\size(\CCbase) = O(\baset \cdot k \cdot S)$,
        \item $\depth(\CCbase) = O(\baset + \log k + \log S) = \tO(\baset)$,
        \item $\ell(t, k) = \ell_\GKR(\CCbase) = \depth(\CCbase)\log \size(\CCbase) = \tO(\baset)$,
        \item $a(t, k) = a_\GKR(\CCbase) = O(\Flog) = \tO(1)$.
        \item $\Ptime(t, k) = \Ptime_\GKR(\CCbase) = \tO(\poly(\size(\CCbase) \cdot \Fbits)) = \poly(\baset kS)$,
        \item $\Vtime(t, k) = \Vtime_\GKR(\CCbase) = \tO((\ell_\GKR(\CCbase) + \abs{\lrag{\Cbase}}) \Flog) = \tO(\baset)$.
        \item $\size(t, k) = \tO(k \cdot S)$.
        \item $\depth(t, k) = \tO(1)$.
        \item $\abs{\lrag \Psi} = O(\Flog) = \tO(1)$.
    \end{itemize}

    \item When $k < \basek$, 
    we run the $\SmallBatch$ protocol, in which case
    \begin{itemize}
        \item $\begin{aligned}[t]
            \ell(t, k) &= \ell(\tmid, \kmid) + O(1).
        \end{aligned}$
        \item $\begin{aligned}[t]
            a(t, k) &= a(\tmid, \kmid) + O(\basek S \lambda).
        \end{aligned}$
        \item $\begin{aligned}[t]
            \Ptime(t, k) &= \Ptime(\tmid, \kmid) + \tO(\basek S \lambda).
        \end{aligned}$
        \item $\begin{aligned}[t]
            \Vtime(t, k) &= \Vtime(\tmid, \kmid) + \tO(\basek S \lambda) + \size(C) \cdot \Fbits.
        \end{aligned}$
    \end{itemize}
    The additional terms in the prover and verifier time are due to the time to send and receive the matrix $\bm M$ and the time to verify the predicate $\C$ by simulating its implementation circuit $C$ (with a $\Fbits$ overhead).
    Also note that $\Ptime$ does not pay any cost in computing the intermediate states at \Cref{ln:tableau} because we assume it has access to the entire tableau as auxiliary input.
    
    Furthermore, by \Cref{clm:small-batch},
    \begin{itemize}
        \item $\size(t, k) = \size(\CCSmallBatch) = \tO(\basek S \lambda) + \size(\tmid, \kmid)$.
        \item $\depth(t, k) = \depth(\CCSmallBatch) = \tO(1) + \depth(\tmid, \kmid)$.
        \item $\abs{\lrag \Psi}(t, k) = \tO(\basek S \lambda) + \abs{\lrag \Cmid} = \tO(\basek S \lambda) + \abs{\lrag \C}(\tmid, \kmid)$.
    \end{itemize}
 
    \item 
    In the general case,
    let $\tmid = t/\lambda$, $\kmid = k\lambda$, and $\kred = \abs{\cQ}$.
    By \Cref{thm:IPP}, $\abs{\cQ} \le \ceil{24\secloc\cdot\frac{k}{d}} < \frac{k}{\lambda}$,
    where the final inequality follows from $d=\dbound$.
    We first work out the bounds for intermediate parameters, using \Cref{clm:reduction-case} and \Cref{clm:description-LDP}.
    \begin{itemize}
        \item $\cksumT = 2d$.
        \item $\size(\CCreduce) = \tO(k\cksumT S\lambda) + \size(C) + \size(\CCmid)$, and therefore $\log\size(\CCreduce)\le \polylog(k,T,S,\lambda,d,\abs{\FF})+\log\size(C)$.
        \item $\depth(\CCreduce) = \tO(\max(\depth(C), \depth(\CCmid))) = \tO(\depth(C))$.
    \end{itemize}
    Therefore,
    $\depth(\CCreduce) \log \size(\CCreduce) = \tO(\depth(C) \log \size (C)+\lambda)$.
    By \Cref{thm:IPP},
    \begin{itemize}
        \item $\begin{aligned}[t]
            \ell(t, k) &= \ell(\tmid, \kmid) + \ell(t, \kred) + \ell_\IPP(\CCreduce) + O(1) \\
            &= \ell(\tmid, \kmid) + \ell(t, \kred) + \tO(\depth(\CCreduce) \log \size(\CCreduce))\\
            &= \ell(\tmid, \kmid) + \ell(t, \kred) + \tO(\depth(C) \log \size(C)+\lambda).
        \end{aligned}$
        \item $\begin{aligned}[t]
            a(t, k) &= a(\tmid, \kmid) + a(t, \kred) + O(\cksumT S \lambda \Flog) + a_\IPP(\CCreduce)\\
            &= a(\tmid, \kmid) + a(t, \kred) + \tO(S\lambda\poly(d)).
        \end{aligned}$
        \item $\begin{aligned}[t]
            \Ptime(t, k) &= \Ptime(\tmid, \kmid) + \Ptime(t, \kred) + \tO(k \cdot \cksumT \cdot S \cdot \lambda \cdot \Fbits)\\
            &\quad + \Ptime_\IPP(\CCreduce)\\
            &= \Ptime(\tmid, \kmid) + \Ptime(t, \kred) + \tO(\poly(kdS\lambda\size(C))).
        \end{aligned}$
        \item $\begin{aligned}[t]
            \Vtime(t, k) &= \Vtime(\tmid, \kmid) + \Vtime(t, \kred)+ \Vtime_\IPP(\CCreduce)\\
            &= \Vtime(\tmid, \kmid) + \Vtime(t, \kred) \\
            &\quad+ \tO((\depth(\CCreduce) \log \size(\CCreduce) + \abs{\lrag{\Creduce}}) \cdot Sd\lambda + \poly(d))\\
            &= \Vtime(\tmid, \kmid) + \Vtime(t, \kred) \\
            &\quad+ \tO((\depth(C)\log \size(C) + \lambda + \abs{\lrag \C} + \abs{\lrag{\Cmid}} + dS\lambda) Sd \lambda + \poly(d))\\
            &= \Vtime(\tmid, \kmid) + \Vtime(t, \kred) \\
            &\quad+ \tO((\depth(C) \log \size(C) + \abs{\lrag{\C}})S^2\lambda^2\poly(d)).
        \end{aligned}$\\
        Note that we used \Cref{clm:description-LDP}, \ie that $\abs{\lrag{\Cmid}} \le d^2 S\lambda + \poly(d)$,
        and that $\basek = \tO(d^2)$ in deriving the upper bound on $\Vtime$.
    \end{itemize}
    By \Cref{clm:batch},
    \begin{itemize}
        \item $\size(t, k) = \size(\CCBatch) = \tO(k\cksumT S + \abs{\cQ}(\size(G) + S) + \size(\CCred)) = \tO(kdS + k(\poly(d) + S) + \size(t, \kred)) = \tO(k\cdot \poly(d) S + \size(t, \kred))$.
        \item $\depth(t, k) = \depth(\CCBatch) = \tO(1) + \tO(G) + \depth(\CCred) = \tO(1) + \depth(t, \kred)$.
        \item $\abs{\lrag{\Psi}}(t, k) = \abs{\lrag{\CBatch}} = \abs{\lrag{\Cred}} + \abs{\lrag \cQ} + \tO(dS) = \abs{\lrag{\Psi}}(t, \kred) + \tO(dS + \poly(d))$.
    \end{itemize}
\end{itemize}
To summarize,
\begin{align*}
    \size(t, k) &= \begin{cases}
    \tO(k \cdot S) & \text{if } t < \baset, \\
    \size(\tmid, \kmid) + \tO(\basek S \lambda) & \text{if } k < \basek,\\
    \size(t, \kred) + \tO(k\cdot \poly(d) S) & \text{otherwise}.
\end{cases}
\end{align*}
\begin{align*}
    \depth(t, k) &= \begin{cases}
    \tO(\baset) & \text{if } t < \baset, \\
    \depth(\tmid, \kmid) + \tO(1) & \text{if } k < \basek,\\
    \depth(t, \kred) + \tO(1) & \text{otherwise}.
\end{cases}
\end{align*}
\begin{align*}
    \ell(t, k) &= \begin{cases}
    \tO(\baset) & \text{if } t < \baset, \\
    \ell(\tmid, \kmid) + \tO(1) & \text{if } k < \basek,\\
    \ell(\tmid, \kmid) + \ell(t, \kred) + \tO(\depth(C) \log \size(C)) & \text{otherwise},
\end{cases}
\end{align*}
\begin{align*}
a(t, k) &= \begin{cases}
    \tO(1) & \text{if } t < \baset, \\
    a(\tmid, \kmid) + \tO(\basek \cdot S\lambda) & \text{if } k < \basek,\\
    a(\tmid, \kmid) + a(t, \kred) + \tO(S\lambda\poly(d)) & \text{otherwise},
\end{cases}
\end{align*}
\begin{align*}
\Ptime(t, k) &= \begin{cases}
    \poly(\baset kS) & \text{if } t < \baset,\\
    \Ptime(\tmid, \kmid) + \tO(\poly(\basek S\lambda)) & \text{if } k < \basek,\\
    \Ptime(\tmid, \kmid) + \Ptime(t, \kred) + \tO(\poly(kdS\lambda\size(C))) & \text{otherwise},
\end{cases}
\end{align*}
\begin{align*}
\Vtime(t, k) &= \begin{cases}
    \tO(\baset) & \text{if } t < \baset, \\
    \Vtime(\tmid, \kmid) + \tO(\basek \cdot S\lambda) & \text{if } k < \basek,\\
    \Vtime(\tmid, \kmid) + \Vtime(t, \kred) \\
    \quad + (\depth(C) \log \size(C) + \abs{\lrag{\C}})S^2\lambda^2\poly(d)) & \text{otherwise}.
\end{cases}
\end{align*}
We set the parameter $\lambda = \Lambda(T, k) = 2^{\sqrt{\log \binom{2\log T + \log k}{\log T}}}$.
As analyzed below, the number of nodes in the recurrence tree is bounded by $\binom{2\log_\lambda T + \log_\lambda k}{\log_\lambda T}$, which is upper bounded by $\poly(\lambda)$ for our choice of $\lambda$.
Solving the recurrence relations,
plugging in $\baset = \lambda$, $\basek = d^2$, $d = \poly(\lambda,\secpar)$,
and bounding the size of the recurrence tree by $\poly(\lambda)$,
we have
\begin{align*}
    \ell(t, k) &\le \poly(\lambda) \cdot \tO(\lambda) + \tO(\depth(C) \log \size(C)) = (\poly(\lambda) + \depth(C) \log \size(C)) \cdot \poly(\secpar) .\\
    a(t, k) &\le \poly(\lambda) \cdot \poly(\lambda) \cdot \tO(S) = \poly(\lambda) \cdot S \cdot \poly(\secpar).\\ 
    \Ptime(t, k) &\le \poly(\lambda) \cdot \poly(T,S,\lambda,\size(C)) = \poly(S, T, \secpar,\size(C)).\\
    \Vtime(t, k) &\le (\poly(\lambda) + \depth(C) \log \size(C) + \abs{\lrag{\C}}) \cdot \poly(\lambda)\cdot \tO(S^2)\\
    &= (\poly(\lambda) + \depth(C) \log \size(C) + \abs{\lrag{\C}}) \cdot S^2 \cdot \poly(\secpar).\\
    \size(t, k) &\le (k+\poly(d)\lambda)\cdot S\cdot\poly(d) = (k+\poly(\lambda)) \cdot S \cdot \poly(\secpar).\\
    \depth(t, k) &\le \tO(\lambda) = \poly(\lambda) \cdot \poly(\secpar), \\
    \abs{\lrag{\Psi}}(t, k) &\le \tO(d^2S\lambda + \poly(d)) = \poly(\lambda) \cdot S \cdot \poly(\secpar),
\end{align*}
Note that $\tO$ hides $\polylog(\abs{\FF},n,S,T,k) = \poly(\secpar)\cdot \polylog(n,S,T,k)$ terms,
and we simplify the bounds for $\ell$ and $\Vtime$ by \Cref{clm:description-LDP},
which implies that
$\depth(t', k')\log \size(t', k') \le \tO(\lambda)$ and $\abs{\lrag \C}(t',k') \le \tO(\poly(\lambda) \cdot S)$ for all $t', k'$ that appear in the recurrence.

We analyze our selection of parameters $d, \lambda, \baset, \basek$ and show that they yield the best complexity in \Cref{sec:parameter-optimality}.

\subsection{Construction of Auxiliary Circuits}
\label{sec:auxiliary-claims}
\begin{proof}[Proof of \Cref{clm:base-case}]
  We construct a circuit $C_\mathsf{base}$ with the given bounds that
  checks that $\bm x\in\mathcal{L}_t^{k}$. 
 To do this, we can use the well-known result that
  any time-bounded Turing machine can be efficiently simulated by a
  circuit. Observe that there is a circuit of size $S$ and
  constant depth (in the size of the alphabet and the number of states
  of $\mathcal{M}$) that, given a configuration of $\mathcal{M}$,
  outputs the configuration at the next time step. Call this circuit $C_1$. From $C_1$, we can
  construct a circuit $C_t$ that, given a starting configuration,
  outputs the configuration of $\mathcal{M}$ after $t$ time steps, by
  layering $t$ copies of $C_1$ on top of each other. Then we can
  construct a circuit that checks whether $\mathcal{M}$ goes from some
  starting configuration to some ending configuration in $t$ time
  steps by appending onto $C_t$ a circuit that checks equality between
  the output of $C_t$ and the ending configuration, which can be done using $S$ gates and $\log S$ depth. We can do this for $k$ $t$-time computations in parallel at the cost of a multiplicative factor of $k$ in the size and an additive factor of $\log k$ in the depth.

  So putting everything together, we have $\mathsf{size}(C_\mathsf{base}) = O(t\cdot k\cdot S)$ and
  $\mathsf{depth}(C_\mathsf{base}) = O(t+\log k+\log S)$.

\end{proof}

\begin{proof}[Proof of \Cref{clm:batch}]
We construct a circuit $\ensuremath{C_{\Batch}}\xspace$ that satisfies the given bounds. The circuit $\ensuremath{C_{\Batch}}\xspace$ needs to (1) check that $\mathsf{cksum}_d({\bm x}) = \cksumv_{\mathsf{bdry}}$, (2) expand $\langle\cQ\rangle$ and compute $\xred = \bm x[\cQ,:]$, and (3) verify that $\Cred(\xred) = 1$. To check (1), we compute the checksum of every one of the $2S$ columns of ${\bm x}$. By \Cref{lem:unique-decoding}, computing the checksum of one column can be done with a circuit of size $O(k\cksumT)$ and depth $\tO(1)$, and here $\cksumT=2d$. Therefore the $2S$ boundary-column checksums can be computed using $\tO(k\cksumT S)$ gates and $\tO(1)$ depth. Then we verify equality with $\cksumv_{\mathsf{bdry}}$, which can be done using $\tO(\cksumT S)$ gates and $\tO(1)$ depth. For condition (2), expanding $\langle\cQ\rangle$ can be done using $|\mathcal{Q}|\mathsf{size}(G)$ gates and $\mathsf{depth}(G)$ depth (by taking $|\mathcal{Q}|$ parallel copies of $G$), and then computing $\xred = \bm x[\cQ,:]$ can be done using $O(|\mathcal{Q}|S)$ gates and $\widetilde{O}(1)$ depth. For condition (3), we call $\CCred$ on $(\xred, \langle\Cred\rangle)$. So in total, we have $\size(\CCBatch) = \tO(k\cksumT S+ |\mathcal{Q}|(\mathsf{size}(G) + S) + \mathsf{size}(\CCred))$ and $\depth(\CCBatch) = \tO(1) + \mathsf{depth}(G) + \mathsf{depth}(\CCred)$. 
\end{proof}

\begin{proof}[Proof of \Cref{clm:reduction-case}]
We construct a circuit $C_\mathsf{reduce}$ with the given bounds
that checks the following conditions.
  \begin{enumerate}
  \item Reads out
    $\xmid = ((\xmid_{i,j}, \xmid_{i,j + 1}))_{i, j \in [k]                       
      \times [0, \lambda - 1]} \in \bin^{2 S(n) \cdot k \cdot                        
      \lambda}$ from $\bm M$ and verifies that
    $\Cmid(\xmid) = 1$.
    To do this, we call $\CCmid(\xmid, \lrag{\Cmid})$.
    This can be done using a circuit of size
    $\tO(kS\lambda) + \size(\CCmid)$ and depth $\tO(1) + \depth(\CCmid)$.
  \item Verifies that $\cksum_d(\bm M) = \cksumv$,
    where $\cksumv \in \FF^{\cksumT \times (S(\lambda + 1))}$ are the hard-coded checksums.
    To do this, we compute the checksum of every column of $\bm M$ and
    verify that the result matches $\cksumv$. By
    \Cref{lem:unique-decoding}, computing the checksum of
    one column requires a circuit of size $O(k\cksumT)$ and depth $\tO(1)$, with $\cksumT=2d$.
    Therefore, the checksums of all columns can be computed using a circuit of size
    $\tO(k\cksumT S\lambda)$ and depth $\tO(1)$. 
    Then we can check
    equality with $\cksumv$ using a circuit of size
    $\tO(\cksumT S\lambda)$ and
    depth $\tO(1)$. So the total complexity is
    $\tO(k\cksumT S\lambda)$ gates
    and depth $\tO(1)$.
  \item Verifies that $\C(\bm M) = 1$.
    We can do this by calling $C$ on $({\bm M}, \langle\C\rangle)$. 
  \end{enumerate}
So we have
\[
\mathsf{size}(C_\mathsf{reduce}) = \tO(k\cksumT S\lambda + \mathsf{size}(C) + \mathsf{size}(\CCmid))
\]
and 
\[
\mathsf{depth}(C_\mathsf{reduce}) =        
\tO(\max(\mathsf{depth}(C), \mathsf{depth}(\CCmid))).
\]
\end{proof}

\begin{proof}[Proof of \Cref{clm:small-batch}]

We construct a circuit $\CCcontract$ that satisfies the given bounds. The circuit $\CCcontract$ needs to verify that (1) $\xmid$'s boundary states agree with $\bm x$ and (2) $\Cmid(\xmid) = 1$. Condition (1) is an equality check, which can be done with a circuit with $O(kS)$ gates and $\tO(1)$ depth. To verify (2), we call $\CCmid(\xmid, \lrag{\Cmid})$. So in total, we have $\size(\CCSmallBatch) = \tO(kS\lambda) + \size(\CCmid)$ and $\depth(\CCSmallBatch) = \tO(1) + \depth(\CCmid)$. 

\end{proof}

\section{Acknowledgments}
Y.T.K. is supported by Amazon grant.
M.M.H., and Y.T.K. are supported by the Defense Advanced Research Projects Agency (DARPA) under Contract No. HR0011-25-C-0300 (to Y.T.K.). Z.X. is supported by an Akamai Presidential Fellowship.
Any opinions, findings and conclusions or recommendations expressed in this material are those of the author(s) and do not necessarily reflect the views of the Defense Advanced Research Projects Agency (DARPA).
\pagestyle{plain}

\bibliographystyle{alpha}
\bibliography{abbrev3,crypto,ref}

\appendix
\section{Interactive Proof of Proximity with Row Reduction}
\label{sec:IPP-proof}
We show the proof of \Cref{thm:IPP}, which we restate here for completeness.

\RRIPP*

\paragraph{Organization} We first set up additional preliminaries in \Cref{sec:LDE,sec:GKR-preserving,sec:RR-PVAL},
and we present the formal protocol and its analysis in \Cref{sec:IPP-analysis}.

\subsection{Low Degree Extension and Polynomial Valuation (\pval)}
\label{sec:LDE}
    

Given a subset $H \subset \FF$
and an integer $m \in \NN$,
the \emph{low-degree extension} (LDE) of a function $f : H^m \to \FF$ is the unique $(\abs{H} - 1)$-individual-degree polynomial $\hat {f}: \FF^{m} \to \FF$ such that for all $\bm s \in H^m$, $\hat {f}(\bm s) = f(\bm s)$.
In this work we focus on the special case when $H = \{0,1\}$,
where $\hat {f}$ is multilinear over $\FF^m$,
and is called the multilinear extension of $f$.
Abusing the notation, 
given a string $x \in \FF^n$,
we first pad it with 0's such that its length becomes $2^m$ where $m = \ceil{\log n}$,
and let $\hat{x}: \FF^m \to \FF$ denote the multilinear extension of the function $f_{x} : \{0,1\}^m \to \FF$ whose function table is $x$.
On a sequence $\bm \pj = (\bm \pj_1, \ldots, \bm \pj_\pvalT)\in (\FF^m)^{\pvalT}$ of length $\pvalT$,
we use the shorthand $\hat x(\bm \pj) \coloneqq (\hat x(\bm \pj_1),\ldots,\hat x(\bm \pj_\pvalT)) \in \FF^\pvalT$.

Consider the following \emph{Polynomial Valuation} (\pval) set, which is an affine subspace over $\FF$, first defined in~\cite{STOC:RotVadWig13}.

\begin{definition}[The \pval set]
\label{def:pval}
    Let $m, \pvalT \in \NN$.
    The set $\pvalF(\bm \pj, \bm \pv) \subset \FF^{2^m}$ is parameterized by the sequences 
    $\bm \pj = (\pj_1,\ldots,\pj_\pvalT) \in (\FF^m)^\pvalT$ and $\bm \pv = (\pv_1,\ldots,\pv_\pvalT) \in \FF^\pvalT$. 
    It consists of all strings $x \in \FF^{2^m}$ whose corresponding multilinear extension $\hat x: \FF^m \to \FF$ satisfies $\hat x(\bm \pj) = \bm \pv$.
\end{definition}
Note that $\pval(\bm \pj, \bm 0)$ is a linear subspace of $\FF^{2^m}$,
so we can define $\Delta(\pval(\bm \pj, \bm 0))$ to be the minimum Hamming distance of a non-zero vector in $\pval(\bm \pj, \bm 0)$.

\subsection{The \GKR protocol is $\Delta_c$-distance-preserving}
\label{sec:GKR-preserving}
The description $\lrag \Psi$ output by the \GKR protocol is a pair $(\pj, \pv) \in \FF^{\ceil{\log n}} \times \FF$,
such that $\Psi(x) = 1$ iff $\hat x(\pj) = \pv$.
Regardless of $\cP^*$, $\pj$ is always uniformly random and only depends on the random coins sampled by $\cV$.

Let $\prot{_\GKR}{}{}$ be the \GKR protocol from \Cref{lem:GKR}.
In \cite{STOC:RotVadWig13},
the authors observed that when $\prot{_\GKR}{}{}$ is parallel-repeated $\pvalT$ times for a large enough $\pvalT$,
it is distance-preserving in the following sense:
suppose the input $x$ is $d$-Hamming-far from satisfying $\C$,
and $\bm \pj = (\pj_1,\ldots,\pj_\pvalT)$ and $\bm \pv = (\pv_1,\ldots,\pv_\pvalT)$ are the outputs of $\pvalT$ parallel repetitions of $\prot[\FF, \lrag \C]{_\GKR}{(x)}{}$,
then $x$ is $d$-Hamming-far from the set $\pval(\bm \pj, \bm \pv)$.
Note that this increases the overall cost of the protocol by a factor of $\pvalT$.
This generalizes to $\Delta_c$-distance. 

In the following, we naturally interpret $\pval(\pj, \bm \pv)$ as a subspace of $\FF^{k \times L}$, by treating matrices in $\FF^{k \times L}$ as vectors in $\FF^{kL}$. $\Deltac(\pval(\pj, \bm 0))$ is the minimum $\Deltac$ distance of matrices in $\pval(\pj, \bm 0)$.

\begin{lemma}[\GKR is $\Delta_c$-Distance-Preserving; Summary of Lemma 2 and Claim 3 in \cite{FOCS:BGHK25}]
    \label{lem:RVW}
    Let $k, L, d\in \NN$.
    Denote by $\FF$ a field of characteristic 2.
    Consider a log-space uniform arithmetic circuit $\C: \FF^{k \times L} \to \bin$ over $\FF$,
    with addition and multiplication gates of fan-in 2.
    Let $C$ be its implementation circuit,
    and let $S$ and $D$ be its size and depth, respectively.
    Let $\lrag{\C}$ be the description of $\C$.

    Let $\cL_\C \coloneqq \set{\bm M \in \FF^{k \times L}: \C(\bm M) = 1}$ denote the set of strings accepted by $\C$. 
    Suppose $\bm M \in \FF^{k \times L}$,
    and that
    \[
    \cL_\C \cap \rowball(\bm M)  = \varnothing,
    \]
    \ie nothing in the ball $\rowball(\bm M)$ satisfies the circuit. (See \Cref{def:Deltac-dist} for the definition of $\rowball$.)

    There is a constant $C_\GKR > 0$ such that the following holds. For any prover $\cP^*$ and $\secpar \in \NN$,
    let $(\pj, \bm \pv)$ be the output $\outputprotUU[\FF, \lrag \C]{_{\GKR}^*(\bm M)}{_{\GKR}}$ with $\pvalT \ge 8dL\log k+ \secpar + 3$ parallel repetitions,
    \begin{align}
        \Pr[\pval(\pj, \bm \pv) \cap \rowball(\bm M) \neq \varnothing] &\le \left(\left(C_\GKR \cdot \frac{D \log S}{\abs{\FF}}\right)^\pvalT \cdot\left(\binom{k}{d} \abs{\FF}^d\right)^\ncol\right). \label{eq:RVW}\\
        \Pr[\Deltac(\pval(\pj, \bm 0)) < 4d] &\le 2^{-\secpar-2}. \label{eq:kernel}
    \end{align}
\end{lemma}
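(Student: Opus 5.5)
We prove the two inequalities separately; both are union bounds over a small combinatorial family, combined with independence across the $\pvalT$ parallel repetitions of $\GKR$.

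\textbf{Inequality \eqref{eq:RVW}.} We enumerate the ball $\rowball(\bm M)$. A matrix $\bm M'$ with $\Deltac(\bm M,\bm M')\le d$ is obtained by choosing, independently for each of the $L$ columns, a set of at most $d$ coordinates out of $k$ together with new field values there. Hence $\abs{\rowball(\bm M)} \le (\binom{k}{d}\abs{\FF}^d)^L$, up to the harmless convention of counting sets of size exactly $d$ and allowing unchanged values.

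Now fix any $\bm M'\in\rowball(\bm M)$. By hypothesis $\C(\bm M')\neq 1$, so $\bm M'\notin\cL_\C$. Consider one $\GKR$ execution run with a cheating prover $\cP^*$ on input $\bm M$. The verifier never reads its input, so the output $(\pj_r,\pv_r)$ is a fixed function of the coins and the prover's strategy. The event $\hat{\bm M'}(\pj_r)=\pv_r$ is exactly the event that this execution, viewed as a run on input $\bm M'$, outputs an accepting predicate. By \Cref{lem:GKR}, this happens with probability at most $C_\GKR D\log S/\abs{\FF}$.

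The $\pvalT$ repetitions run in parallel with fresh verifier coins, so we would like the per-repetition bound to multiply. Soundness of $\GKR$ holds against any prover, including one that conditions on the other repetitions' messages. Arguing round by round, each repetition's soundness error bound holds conditionally on everything else, so the joint probability is at most the product. This gives $(C_\GKR D\log S/\abs{\FF})^\pvalT$ for each fixed $\bm M'$. A union bound over the ball then yields \eqref{eq:RVW}. I expect this independence/adaptivity point to be the main subtlety, and we can cite the $\GKR$ parallel-repetition analysis from \cite{FOCS:BGHK25} for it.

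\textbf{Inequality \eqref{eq:kernel}.} This uses only that $\pj=(\pj_1,\ldots,\pj_\pvalT)$ is uniform and independent of $\cP^*$. We have $\Deltac(\pval(\pj,\bm 0))<4d$ iff some nonzero $\bm Z$ with $\Deltac(\bm Z)<4d$ satisfies $\hat{\bm Z}(\pj_r)=0$ for all $r$. For a fixed nonzero $\bm Z$, the multilinear $\hat{\bm Z}$ is a nonzero polynomial of total degree at most $m=\ceil{\log(kL)}$. By Schwartz--Zippel each constraint holds with probability at most $m/\abs{\FF}\le 1/2$, so all $\pvalT$ constraints hold with probability at most $2^{-\pvalT}$.

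We cannot union bound over all field values, since there are too many. Instead we use linearity. Fix a support pattern $P$: a choice of at most $4d$ rows per column, giving at most $\binom{k}{4d}^L\le 2^{4dL\log k}$ patterns. The constraints restricted to matrices supported on $P$ form a linear system whose unknowns are the entries on $P$, and we need this system to have trivial kernel.

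To get this, we argue the rank grows by one with constant probability per constraint. While the current kernel $K$ restricted to $P$ is nonzero, pick a nonzero $\bm Z\in K$. A fresh random $\pj_r$ cuts $K$ down by a dimension unless $\hat{\bm Z}(\pj_r)=0$, which happens with probability at most $1/2$. The kernel has dimension at most $4dL$. By a Chernoff bound on $\pvalT$ trials, the probability that fewer than $4dL$ reductions occur is at most $2^{-\Omega(\pvalT)}$. Concretely, it is at most $2^{-(\pvalT-4dL\log k\cdot\ldots)}$ when $\pvalT\ge 8dL\log k+\secpar+3$.

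A union bound over the $2^{4dL\log k}$ patterns then gives $2^{-\secpar-2}$. The constant bookkeeping is where $\pvalT\ge 8dL\log k+\secpar+3$ enters, with the exponent from the patterns plus slack for Chernoff; I would tune it using $\abs{\FF}\gg m$ to make the per-step failure probability tiny rather than $1/2$.
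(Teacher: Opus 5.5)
The paper gives no proof of this lemma; it imports it from \cite{FOCS:BGHK25} (Lemma~2 and Claim~3, following \cite{STOC:RotVadWig13}), so the comparison is with the standard argument there. Your proof of \eqref{eq:RVW} is that argument and is correct. One step deserves a cleaner justification than ``conditioning on the other repetitions'': the product bound. For a fixed $\bm M'$, $\cP^*$ is just some strategy in a single-prover public-coin game whose winning condition is an AND over the $\pvalT$ copies. Such games have exactly multiplicative value. Inducting from the leaves, the maximum over the prover's joint answer of a product of nonnegative per-copy values factorizes, and the expectation factorizes over the independent per-copy coins. No citation is needed.

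For \eqref{eq:kernel} your route differs from the standard one, and your reason for the detour is mistaken.

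\emph{Field size.} First, \eqref{eq:kernel} is false without a field-size hypothesis. Over $\GF(2)$ each constraint $\hat{\bm Z}(\pj_r)=0$ merely zeroes one entry, so weight-one matrices survive whenever $\pvalT<kL$. The lemma tacitly assumes a large field, as is the case in \Cref{thm:IPP}. You should state this explicitly rather than slip in $m/\abs{\FF}\le 1/2$.

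\emph{The direct union bound.} With, say, $\abs{\FF}\ge 4m^2$ you \emph{can} union bound over field values. There are at most $\binom{k}{4d}^{L}\abs{\FF}^{4dL}$ candidates $\bm Z$, each surviving with probability at most $(m/\abs{\FF})^{\pvalT}$. Since $\pvalT\ge 8dL$, we get $\abs{\FF}^{4dL}(m/\abs{\FF})^{\pvalT}\le (m/\sqrt{\abs{\FF}})^{\pvalT}\le 2^{-\pvalT}$. The total is then $2^{4dL\log k-\pvalT}\le 2^{-\secpar-3}$. This is exactly where the constants $8$ and $3$ come from, and it mirrors \eqref{eq:RVW}.

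\emph{Your route.} Your support-pattern plus rank-growth argument is also valid and needs only $\abs{\FF}=O(m)$. However, its quantitative step cannot be a generic $2^{-\Omega(\pvalT)}$ Chernoff bound. The pattern count $2^{4dL\log k}$ can be as large as $2^{\pvalT/2}$, so it already consumes half the exponent. With per-step failure probability $1/2$, the tail below $4dL$ successes is about $2^{-\pvalT+4dL\log(2e\log k)}$, which does not beat the pattern count for small $k$. Do it as you suggest, with small per-step failure probability and an exact count. The final kernel is nonzero only if at least $\pvalT-4dL$ steps failed. So the probability is at most $2^{4dL\log k}\cdot 2^{\pvalT}(m/\abs{\FF})^{\pvalT-4dL}$. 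For $\abs{\FF}\ge 4m$ this is at most $2^{4dL\log k+8dL-\pvalT}$, which suffices for $k\ge 4$.
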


\subsection{The Interactive Proof of Proximity for \pval with Row Reduction}
\label{sec:RR-PVAL}
\begin{lemma}[The Row Reduction Protocol for \pval with $\Delta_c$-distance, Theorem~9 in \cite{FOCS:BGHK25}; c.f. \cite{TCC:RotRot20}]
    \label{lem:DcRR}
    Suppose $k, L, d \in \NN$.
    Let $\FF$ be a field.
    Suppose $\secpar, \pvalT \in \NN$.
    Let $\bm \pj = (\bm \pj_1, \ldots, \bm \pj_\pvalT) \in (\FF^{\log k+ \log L})^\pvalT$ and $\bm \pv = (v_1, \ldots, v_\pvalT) \in \FF^\pvalT$,
    and let $\bm M \in \FF^{k \times L}$.
    For some constant $c > 0$,
    if the following conditions hold:
    \begin{itemize}
        \item $d \ge \dboundm \in O(\secpar \log k)$,
        \item $\abs{\FF} \ge \FboundT \in O(2^\secpar \cdot \poly(R, \log k, \log L))$,
    \end{itemize}
    then there exists an $\protcplx{}$ protocol $\prot[\secpar, \FF, d, \pj, \bm \pv]{_\RR}{(\bm M)}{}$,
    which either outputs $\bot$ or $(\lrag \cQ, \lrag \CRR)$ that are descriptions of 
    a set of rows $\cQ \subsetneq [k]$,
    and a predicate $\CRR$ satisfying the following properties.
    \begin{itemize}[label=-]
        \item \textbf{Completeness:} If $\bm M \in \pval(\bm \pj, \bm \pv)$,
        then $\Pr[\CRR({\bm M}[\cQ, :]) = 1] = 1$.
        \item $2^{-\secpar}$-\textbf{Soundness:} 
        Suppose $\Delta_c(\pval(\bm \pj,\bm 0)) \geq 4d$,
        and $\rowball(\bm M) \cap \pval(\pj, \bm \pv) = \varnothing$,\footnote{Recall that $\rowball(\bm M)$ is the set of all matrices that are $\Deltac$-d-close to $\bm M$}
        then for any prover strategy $\cP^*$,
        \[
        \Pr\left[\outputprotUU[\secpar, \FF, d, \pj, \bm \pv]{_\RR^*(\bm M)}{_\RR} = (\lrag \cQ, \lrag \CRR) \text{ s.t. } \CRR({\bm M}[\cQ, :]) = 1\right] \le 2^{-\secpar}.
        \]
        \item \textbf{Row Reduction:} 
        The subset of rows $\cQ \subsetneq [k]$ has size $\abs{\cQ} \le \ceil{8\secpar \cdot \frac{k}{d}}$.
    \end{itemize}
    
    With $\tO$ hiding $\polylog(\abs{\FF}, k, L)$ factors,
    if $\pvalT = \tO(d L)$,
    then the complexity of the protocol is as follows.
    \begin{itemize}
        \item $\ell = \tO(1)$.
        \item $a = \tO(\aboundIPP)$.
        \item $\Ptime = \poly(k L, \pvalT \cdot \Flog)$.
        \item $\Vtime = \tO(\aboundIPP)$.
    \end{itemize}
    The bit-lengths are $\abs{\lrag{\cQ}} = \tO(\poly(d))$ and $\abs{\lrag{\CRR}} = \tO(\aboundIPP)$.
    Let $G$ and $C$ be the implementation circuits of $\cQ$ and $\CRR$, respectively.
    Then they satisfy the following.
    \begin{itemize}
        \item $\size(G) = \tO(\poly(d))$.
        \item $\depth(G) = \tO(1)$.
        \item $\size(C) = \tO(\abs{\cQ} \cdot L)$.
        \item $\depth(C) = \tO(1)$.
    \end{itemize}
\end{lemma}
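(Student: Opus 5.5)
The statement is Lemma~\ref{lem:DcRR}, the row-reduction \IPP for \pval under $\Deltac$-distance. I would prove it by reducing to the standard Hamming-distance \IPP for \pval of \cite{TCC:RotRot20}, applied column by column. The approach is the ``random linear combination + sampling'' template. Write $\bm M=(\bm m_1,\ldots,\bm m_L)$.

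Because the multilinear extension is linear, each constraint $\hat{M}(\pj_r)=v_r$ is a linear form in the entries of $\bm M$. It decomposes as $\sum_{c\in[L]}\langle \bm \beta_{r,c},\bm m_c\rangle$, where $\bm\beta_{r,c}\in\FF^k$ is the tensor of the Lagrange coefficients of the row part and column part of $\pj_r$.

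The protocol proceeds in four steps.

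\textbf{Step 1 (decomposition).} The prover sends, for each $r$ and each column $c$, the partial sums over rows. A cheaper alternative, consistent with $a=\tO(L+\poly(d))$, is for the verifier to first take a random combination over $r\in[\pvalT]$ with coefficients $\alpha_r$. This collapses the $\pvalT$ constraints into a single linear form $\langle \bm B,\bm M\rangle = \sum_r\alpha_r v_r$. Because $\Deltac(\pval(\pj,\bm 0))\ge 4d$, the ball $\rowball(\bm M)$ stays disjoint from the combined affine constraint except with probability about $\binom{k}{d}^L$-ish over $|\FF|$. I would adapt the kernel-distance counting from Lemma~\ref{lem:RVW} to justify this.

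\textbf{Step 2 (split by row blocks).} Partition $[k]$ into $k/d'$ blocks of size $d'\approx d/(8\secpar)$. The prover sends, for each block $b$ and column $c$, the claimed block-sum $y_{b,c}=\langle \bm B[b,c],\bm M[b,c]\rangle$. This totals $\tO(kL/d')$ field elements; to stay within the stated budget, I would replace blocks with a random hash of rows into $O(d)$ buckets, so the prover sends $\tO(dL)$ values. The verifier checks that $\sum_{b,c}y_{b,c}$ equals the target.

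\textbf{Step 3 (soundness through distance).} Suppose $\bm M$ is $\Deltac$-$d$-far from $\pval(\pj,\bm v)$. For any claimed values $y$, consider the set of buckets whose true sum differs from $y$. I must show this set is large, meaning at least roughly $d/2$ buckets are wrong with high probability. The argument: if few buckets are wrong, then by changing only the rows in those buckets one reaches a matrix satisfying the constraint. The kernel distance $4d$ together with the fact that within each column we modify at most (bucket size)$\times$(wrong buckets) entries would then yield a $\Deltac$-$d$-close element of $\pval$, a contradiction. Making this rigorous requires that each bucket individually can absorb a correction, which is where the kernel distance hypothesis enters. \textbf{This is the main obstacle.} I would first try the \cite{TCC:RotRot20} proof verbatim, replacing Hamming weight by column-maximum weight. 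This works because all constraints act separately on columns after the $\bm\beta$ decomposition, and a set of rows modified is a common support for all columns.

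\textbf{Step 4 (sampling).} The verifier samples $\approx 8\secpar\cdot k/d$ rows, namely the union of $8\secpar$ random buckets, as $\cQ$. The predicate $\CRR$ recomputes those buckets' sums from $\bm M[\cQ,:]$ and compares them with $y$. If at least a $1/2$ fraction of buckets are wrong among the $\approx d$ buckets, the escape probability is $2^{-8\secpar}$. A union with the error from Steps 1 and 3 gives $2^{-\secpar}$ under the stated bounds on $d$ and $|\FF|$. Completeness is immediate.

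\textbf{Complexity and descriptions.} $\lrag{\cQ}$ consists of the hash seed and bucket indices, of size $\poly(d)$, with $G$ evaluating the hash at depth $\tO(1)$. $\lrag{\CRR}$ consists of $(\bm\alpha,\pj,y|_{\text{sampled}})$, and $C$ is a depth-$\tO(1)$ linear check of size $\tO(|\cQ|L)$.
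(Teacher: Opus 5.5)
The paper does not prove this lemma; it imports it from \cite{FOCS:BGHK25} (Theorem~9, adapting \cite{TCC:RotRot20}). So your sketch has to stand on its own, and it has a genuine gap.

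\textbf{Step~1 destroys the distance you need.} After collapsing the $R$ constraints into one linear form $\langle \bm B,\bm M\rangle=\sum_r\alpha_r v_r$, the solution set is a single affine hyperplane. If $\bm B\neq 0$, you can reach it by changing one entry $\bm M[i,c]$ with $\bm B[i,c]\neq 0$. So the $\Delta_c$-$1$-ball around $\bm M$ always meets the hyperplane, whatever $\Delta_c(\textsf{PVAL}(\bm j,\bm 0))$ is. The counting behind Lemma~\ref{lem:RVW} does not transfer. The ball contains roughly $\bigl(\binom{k}{d}|\mathbb{F}|^{d}\bigr)^{L}$ matrices, while one random combination buys only a factor $1/|\mathbb{F}|$. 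That lemma needs $R\ge 8dL\log k+\sigma+3$ constraints precisely to pay for the $|\mathbb{F}|^{dL}$ term.

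\textbf{Independently of Step~1, the one-shot bucket protocol is unsound.} The structure ``report bucket sums, then read $8\sigma$ random buckets'' fails even if you keep all $R$ constraints per bucket. A cheating prover reports the true partial sums for every bucket except one, $b_0$, and adds the entire discrepancy between the target and the true total to $b_0$'s claims. The verifier's sum check passes, and every sampled bucket other than $b_0$ is consistent. The lie is caught only if $b_0$ is sampled, which happens with probability about $8\sigma/d$, not $1-2^{-\sigma}$.

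Your Step~3 cannot exclude this. Repairing a lying bucket may require rewriting many (up to all $\approx k/d$) of its rows, or may be impossible when the restricted system is overdetermined. So ``few wrong buckets'' is fully compatible with $\bm M$ being $\Delta_c$-$d$-far. The kernel-distance hypothesis says nothing about how the prover distributes its error across buckets, so the premise ``half the buckets are wrong'' in Step~4 is never established.

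\textbf{What is missing} is a mechanism that copes with concentrated lies. Your random hashing into buckets loses the needed structure. In the \textsf{PVAL} proximity proofs this line of work builds on, one approach is to align blocks with the sub-cube structure of the multilinear extension. Each block's claim is then again a \textsf{PVAL} claim at projected points. Sampled blocks are then verified recursively with a reduced proximity parameter, rather than read once, trading the number of sampled blocks against how far each must be via $\sum_b \Delta_c(\bm M_b,\text{claim}_b)\ge d$.

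As a smaller point, your $\widetilde{O}(dL)$ prover message also exceeds the stated $a=\widetilde{O}(L+\mathrm{poly}(d))$.
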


\subsection{Proof of \Cref{thm:IPP}}
\label{sec:IPP-analysis}
We present the protocol in \Cref{alg:IPP}.
It simply runs the \GKR protocol followed by the \RR protocol with appropriate parameters.

\begin{algorithm}[ht]
  \setstretch{1.2}
  \caption{Protocol $\IPP = \prot[\secpar, \FF, \lrag{\C}, d]{}{(\bm M)}{}$ for checking $\C(\bm M) = 1$.}
  \label{alg:IPP}
    \textbf{Input Parameters:} $\secpar \in \NN$, $\FF$ is a field, $\lrag \C$ describes a predicate $\C : \bin^{k \times L} \to \bin$ whose implementation circuit has size $S$ and depth $D$, $d \in \NN$.\\
    \textbf{Input Requirement:} $d \ge 48 \secpar \log k$, $\abs{\FF} \ge \FboundIPP$.\\
    \textbf{Input Matrix:} $\bm M \in \bin^{k \times L}$.\\
    \textbf{Derived Parameters:} $\pvalT = \Tbound = \tO(dL)$.\\
    \textbf{Verifier Output:} $\lrag \cQ, \lrag \CRR$.
    \begin{enumerate}[label=(\arabic*)]
        \item Apply the \textbf{GKR} protocol to create the intermediate claim $\bm M \in \pval(\pj, \pv)$. 

        \begin{algorithmic}
           \State Run $\prot[\FF, \lrag \C]{_\GKR}{(\bm M)}{}$ $\pvalT$ times in parallel and obtain $\lrag{\Psi} = (\pjM, \pvM)$. \label{ln:GKR-reduce}
        \end{algorithmic}
        \item Apply the \textbf{RR} protocol.
        \begin{algorithmic}
            \State Run $\prot[\sigma+2,\FF, d, \pjM, \pvM]{_\Reduce}{(\bm M)}{}$ and obtain $\lrag \cQ, \lrag \CIPP$.\\
            \Return $\lrag \cQ, \lrag \CIPP$.
        \end{algorithmic}
    \end{enumerate}
\end{algorithm}

\textbf{Completeness} follows from the completeness of both sub-protocols.
Assuming $\C(\bm M) = 1$, we have:
\begin{itemize}
    \item The $\GKR$ protocol outputs $(\pj, \pv)$ such that $\widehat{\bm M}(\pj) = \bm \pv$.
    In other words, $\bm M \in \pval(\pj, \bm \pv)$.
    \item By the completeness of the $\RR$ protocol,
    given that $\bm M \in \pval(\pj, \bm \pv)$,
    we have $\Psi(\bm M[\cQ, :]) = 1$.
\end{itemize}

\paragraph{Soundness}
Let $\cS \coloneqq \rowball(\bm M) \cap \pval(\pj, \pv)$.
If $\cL_\C \cap \rowball(\bm M) = \varnothing$,
then
given our parameter settings,
the following holds.
\begin{align*}
    \pvalT &\ge \Tbound,\\
    \abs{\FF} &> 2^{\secpar + 2} \cdot (D \log S),
\end{align*}
so by \Cref{lem:RVW},
\begin{align}
    \Pr[\cS \neq \varnothing] \le \left(C_\GKR \cdot \frac{D\log S}{\abs{\FF}}\right)^\pvalT \cdot\left(\binom{k}{d} \abs{\FF}^d\right)^\ncol &\le \epsIPPconcrete. \\
    \Pr[\Deltac(\pval(\pj, \bm 0)) < 4d] &\le \epsIPPconcrete.
\end{align}
Moreover,
the prerequisite of \Cref{lem:DcRR},
$d \ge 48 \secpar \log k \ge 16 (\secpar + 2) \log k$ and $\abs{\FF} \ge \FboundT$,
is also satisfied by the choice of the parameters,
so
\begin{align*}
    \Pr[\Psi(\bm M[\cQ, :]) = 1  \mid \cL_\C \cap \rowball(\bm M) = \varnothing]
    &\le \Pr[\cS \neq \varnothing \mid \cL_\C \cap \rowball(\bm M) = \varnothing] \\
    &\quad +\Pr[\Psi(\bm M[\cQ, :]) = 1 \mid \cS = \varnothing]\\
    &\le \epsIPPconcrete \\
    &\quad + \Pr[\CIPP(\Mred) = 1 \mid \cS = \varnothing, \Deltac(\pval(\pjM, \bm 0)) \ge 4d]\label{eq:clm42}\numberthis\\
    &\quad + \Pr[\Deltac(\pval(\pj, \bm 0)) < 4d] \\
    &\le 3 \cdot \epsIPPconcrete < 2^{-\secpar} \numberthis\label{eq:clm43}
\end{align*}
\Cref{eq:clm42} follows from \Cref{lem:RVW},
and \Cref{eq:clm43} follows from soundness (\Cref{lem:DcRR}) of the \RR protocol as well as \Cref{eq:kernel} in \Cref{lem:RVW}.

\paragraph{Complexities}
The overall cost is the sum of $R$ runs of the \GKR protocol and one run of the \RR protocol.
With $\tO$ hiding $\polylog(\abs{\FF}, k, L)$ factors,
and letting $S = \size(C)$, $D = \depth(C)$,
we have
\begin{itemize}
    \item $\ell = \tO(D\log S) + \tO(1) = \tO(D \log S)$.

    Note that the round complexity of $R$ \GKR protocol runs is the same as that of one run since they are run in parallel.
    \item $a = \tO(\max(\pvalT \cdot \Flog, \aboundIPP)) = \tO(d\aboundIPP)$,
    given that 
    \begin{align*}
    \pvalT &= \Tbound = \tO(dL + \secpar) = \tO(dL)\\
    \abs{\FF} &= \FboundIPP.
    \end{align*}
    \item $\Ptime = \poly(k, L, d, S,\Flog)$.
    \item $\Vtime = \tO(\pvalT \cdot (D\log S + \abs{\lrag \C}) \cdot \Flog + \aboundIPP) = \tO(dL(D \log S + \abs{\lrag \C}) + \poly(d))$.
\end{itemize}

\section{Analyzing Protocol Parameters}
\label{sec:parameter-optimality}

In this section, we show that our selection of parameters $d,\lambda,\baset,\basek$ for \Cref{thm:main-LDP} is optimal. 

To select the optimal parameters,
we focus on the recurrence relation on $\Vtime(t, k)$.
We also assume $\depth(C)\log\size(C) = \abs{\lrag{\C}} = O(1)$,
since we are most interested in $\cL_T = \cL_T^1[\top]$,
where $\top$ is the dummy predicate that accepts all strings. 
Since the base cases and the intermediate terms all have to be efficient,
we set all four parameters to be $\poly(n)$.
Let $v(p, q) \coloneqq \Vtime(\lambda^a, \lambda^b)$,
$\basep \coloneqq \log_\lambda \baset$, $\baseq \coloneqq \log_\lambda \basek$.
Then
\begin{align*}
    v(p, q) &= \begin{cases}
        \tO(\lambda^\basep) & \text{if } p < \basep, \\
        v(p-1, q+1) + \tO(\lambda^{\baseq + 1} \cdot S) & \text{if } q < \baseq,\\
        v(p-1, q+1) + v(p, q-\log_\lambda \frac{d}{24\secpar}) + \tO(S^2\lambda^2\poly(d)) & \text{otherwise.}
    \end{cases}
\end{align*}
Let $\gamma \coloneqq \frac{1}{\log_\lambda \frac{d}{24\secpar}}$.
We are interested in the quantity $\Vtime(T, k) = v(\log_\lambda T, \log_\lambda k)$.
Let us call the recurrence going from $v(p, q)$ to $v(p-1, q+1)$ a $\nwarrow$-step and the recurrence going from $v(p, q)$ to $v(p, q-\frac{1}{\gamma})$ a $\downarrow$-step.
Since the recurrence terminates only when $p < \basep$,
every path that hits the base case of the recurrence tree must use $\ceil{p - \basep}$ many $\nwarrow$-steps.
Since all the $\nwarrow$-steps preserve the sum $p + q$,
and each $\downarrow$-step decreases the sum by $\gamma^{-1}$,
the total number of $\downarrow$-steps used on any path before reaching $p < \basep$ is at most $\gamma \cdot (\ceil{p - \basep} + \ceil{q - \baseq}) + O(1)$.
When $q < \baseq$, a $\nwarrow$ step must be taken.
Therefore, the number of ways to reach the base case (\ie $p < \basep$) is upper-bounded by the number of $\set{\nwarrow, \downarrow}$-paths on the grid defined by $\set{(p, q) : p \ge \basep, q \ge 0}$
starting from $(\log_\lambda T, \log_\lambda k)$.
If we let $x = \ceil{\log_\lambda T - \basep}$ and $y = \ceil{\log_\lambda T + \log_\lambda k - \baseq}$,
the number of such paths is asymptotically upper-bounded by $\binom{x + \gamma y}{x}$. 
The total verifier work satisfies
\begin{align*}
    \Vtime(T, k) < \binom{x + \gamma y}{x} \cdot ((\basek + \baset \cdot \lambda^2 S) + \poly(d, \lambda, S)).
\end{align*}
\paragraph{Choosing $\gamma$} 
For a fixed choice of $\basek, \baset, S$, 
with the observation that $d = \lambda^{\gamma^{-1}}$,
\[
\Vtime(T, k) > \max\left(\binom{x + \gamma y}x, \lambda^{O(\gamma^{-1}) + 2}\right).
\]
An optimal choice of $\gamma$ balances the two terms in the $\max$.
By Stirling's approximation,
\begin{align*}
    \log \binom{x + \gamma y}{x} &\approx O\left(x \log\left(1 + \frac{\gamma y}{x}\right) + \gamma y \log\left(1 + \frac{x}{\gamma y}\right)\right)\\
    \log(\lambda^{O(\gamma^{-1})+2}) &= O((\gamma^{-1} + 2) \log \lambda).
\end{align*}
Let $X \coloneqq \log T$, $Y \coloneqq \log T + \log k$.
Setting the above equal gives
\begin{align*}
    (1 + 2\gamma)\log^2 \lambda = \gamma X \log \left(1 + \frac{\gamma Y}X\right) + \gamma^2 Y \log \left(1 + \frac{X}{\gamma Y}\right).
\end{align*}
Since $\log \Vtime(T, k) > (\gamma^{-1} + 2) \log \lambda$, we have
$$\begin{aligned}
    \log^2 (\Vtime(T, k)) &\gtrsim \frac{(\gamma^{-1} + 2)^2}{1 + 2\gamma} \left[ \gamma X \log\left(1+\frac{\gamma Y}{X}\right) + \gamma^2 Y \log\left(1 + \frac{X}{\gamma Y}\right) \right] \\
    &= (1 + 2\gamma) \left[ \frac{X}{\gamma} \log\left(1+\frac{\gamma Y}{X}\right) + Y \log\left(1 + \frac{X}{\gamma Y}\right) \right] \coloneqq f(\gamma).
\end{aligned}$$
\begin{fact}
    $f(\gamma)$ is minimized when $\gamma = \Theta(1)$.
\end{fact}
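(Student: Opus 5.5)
We read the Fact as saying that $f(\gamma)$ attains its minimum over $\gamma>0$, up to a constant factor, at some $\gamma=\Theta(1)$. Under this reading we would prove two things. First, $f(1)$ is within a constant factor of $\inf_{\gamma>0} f(\gamma)$. Second, there are absolute constants $0<c<1<C$ such that $f(\gamma)> f(1)$ whenever $\gamma\notin[c,C]$. Throughout we use $Y=\log T+\log k\ge X=\log T$, and we write $a\coloneqq Y/X\ge 1$. With this notation,
\[
f(\gamma)=X(1+2\gamma)\Bigl[\tfrac{1}{\gamma}\log(1+\gamma a)+a\log\bigl(1+\tfrac{1}{\gamma a}\bigr)\Bigr],
\]
so we only need to study the bracket $g(\gamma)=\frac1\gamma\log(1+\gamma a)+a\log(1+\frac{1}{\gamma a})$ as a function of $\gamma$ for a fixed $a\ge1$.

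\emph{Upper bound at $\gamma=1$.} We use $\log(1+u)\le u$ on the second term. This gives $f(1)=3X[\log(1+a)+a\log(1+1/a)]\le 3X(\log(1+a)+1)=O(X\log(1+a))$. The last step uses $\log(1+a)\ge\log 2$ for $a\ge 1$.

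\emph{Lower bound for $\gamma\ge1$.} Since $\frac{1}{\gamma a}\le1$, we have $\log(1+u)\ge u/2$ with $u=\frac1{\gamma a}$, so the second term of $g$ is at least $\frac1{2\gamma}$. The first term is at least $\frac1\gamma\log(\gamma a)$. Multiplying by $(1+2\gamma)\ge 2\gamma$ yields
\[
f(\gamma)\ge X\bigl(1+2\log(\gamma a)\bigr)=\Omega\bigl(X\log(1+a)\bigr)+2X\log\gamma .
\]
The first summand matches $f(1)$ up to a constant, which settles the first claim on $\gamma\ge1$. The extra summand $2X\log\gamma$ exceeds $f(1)=O(X\log(1+a))$ once $\gamma$ is larger than a constant. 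Here we need care: that threshold may depend on $a$. To obtain an absolute constant $C$, we would compare $2\log(\gamma a)$ directly against $3\log(1+a)+3$.

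\emph{Lower bound for $\gamma\le1$.} The map $\gamma\mapsto\frac1\gamma\log(1+\gamma a)$ is decreasing, and $\gamma\mapsto a\log(1+\frac1{\gamma a})$ is also decreasing. Hence $g(\gamma)\ge g(1)$, and so $f(\gamma)\ge f(1)/3$ on $\gamma\le1$, which gives the first claim there. To push $f$ strictly above $f(1)$ when $\gamma$ is small, we use that the second term behaves like $a\log\frac{1}{\gamma a}$ as $\gamma\to0$. This forces $f(\gamma)>f(1)$ for $\gamma$ below some threshold.

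\emph{Main obstacle.} We expect the hardest step to be making the small-$\gamma$ threshold an absolute constant when $a=Y/X$ is large. In that regime the blow-up $a\log\frac{1}{\gamma a}$ only starts around $\gamma\approx 1/a$, so a naive argument would give a threshold that depends on $a$. Our first attempt would be to keep the exact first term and use $\log(1+u)\ge \frac{u}{1+u}$. This gives $\frac1\gamma\log(1+\gamma a)\ge \frac{a}{1+\gamma a}$, which is of order $a\gg\log(1+a)$ whenever $\gamma a\lesssim a/\log a$. We would then check that this covers the intermediate range of $\gamma$. If that fails, we would fall back on the regime actually used in \Cref{thm:deip}, namely $k=1$ and $a=1$, where everything is explicit.
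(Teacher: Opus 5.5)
You prove your first claim completely, and it is the correct reading of the Fact. On $\gamma\le 1$, the termwise monotonicity of $\frac1\gamma\ln(1+\gamma a)$ and $a\ln(1+\frac1{\gamma a})$ is the paper's ``$g$ is decreasing'' step, obtained without differentiating. On $\gamma\ge1$, your bound $f(\gamma)\ge X(1+2\ln(\gamma a))=\Omega(f(1))$ holds uniformly in $a$ and replaces the paper's assertion that $f(\gamma)\to\infty$ for $\gamma=\omega(1)$. Together these give $\inf_{\gamma>0}f(\gamma)=\Theta(f(1))$, which is all the Fact is used for: choosing $\gamma=\Theta(1)$ loses only a constant factor.

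The gap is your second claim, which you should drop rather than try to prove. On the large-$\gamma$ side it is false whenever $a=Y/X$ is unbounded (i.e.\ $\log k\gg\log T$). So your planned comparison of $2\ln(\gamma a)$ against $3\ln(1+a)+3$ cannot yield an absolute $C$, and no other argument can either. For $\gamma\ge1$ and $a\to\infty$,
\[
\frac{f(\gamma)}{X}=\Bigl(2+\frac1\gamma\Bigr)\bigl(\ln(\gamma a)+1\bigr)+o(1),\qquad \frac{f(1)}{X}=3\ln a+3+o(1).
\]
Hence $f$ is minimized near $\gamma^*\approx\frac12\ln a$, where $f(\gamma^*)=(2+o(1))X\ln a<f(1)$. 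Moreover $f(\gamma)>f(1)$ only once $\gamma\gtrsim\sqrt{ea}$. For example, $a=e^{100}$ gives $f(50)\approx 212X<303X\approx f(1)$. The $a$-dependence you noticed in your threshold is therefore real, not an artifact of a crude bound. An absolute $C$ exists only when $a=O(1)$, e.g.\ $k\le T^{O(1)}$, including your fallback $k=1$. The same computation shows the Fact, and the paper's ``$f\to\infty$'' step, can only hold in your up-to-constant-factors sense, not as a statement about where the minimizer lies.

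Conversely, the side you flagged as the main obstacle is harmless. For $\gamma\le c$ you already have $f(\gamma)\ge Xg(c)$, so it suffices to show $g(c)>3\ln(1+a)+3\ge 3g(1)$ for a small absolute $c$. Split on whether $ca\ge 1$:
\begin{itemize}
\item If $ca\ge1$, then $a\ln(1+\frac1{ca})\ge\frac1{2c}$. Also, for $c\le\frac13$, $\frac1c\ln(1+ca)\ge 3\ln(1+ca)\ge 3\ln(1+a)-3\ln\frac1c$. Then $\frac1{2c}>3\ln\frac1c+3$ finishes it.
\item If $ca<1$, then $a<\frac1c$, and $a\ln(1+\frac1{ca})\ge\max\bigl(a\ln 2,\ln(1+\frac1c)\bigr)$ alone exceeds $3\ln(1+a)+3$.
\end{itemize}
Your proposed bound $\frac1\gamma\ln(1+\gamma a)\ge\frac{a}{1+\gamma a}$ would not have done this job. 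It is of order $a$ only when $\gamma a=O(1)$. On the range $\gamma a\lesssim a/\log a$ it is merely $\gtrsim\log a$, which is comparable to $\log(1+a)$ rather than dominating it.
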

\begin{proof}
    If $\gamma = \omega(1)$, then
    \begin{align*}
        Y \log\left(1 + \frac{X}{\gamma Y}\right) = \Omega(1),
    \end{align*}
    because $Y \ge X$ and the approximation $\log(1 + u) = \Theta(u)$ for $u \to 0$.
    Therefore, $f(\gamma) \to \infty$.

    Moreover,
    let $g(\gamma) =\left[ \frac{X}{\gamma} \log\left(1+\frac{\gamma Y}{X}\right) + Y \log\left(1 + \frac{X}{\gamma Y}\right) \right]$.
    We have $f(\gamma) \ge g(\gamma)$.
    Computing $g'(\gamma)$,
    \begin{align*}
    g'(\gamma) &= -\frac{X}{\gamma^2} \log\left(1+\frac{\gamma Y}{X}\right) + \frac{X}{\gamma}\frac{Y/X}{1+\gamma Y/X} + Y\frac{-X/(\gamma^2 Y)}{1+X/(\gamma Y)} \\
    &= -\frac{X}{\gamma^2} \log\left(1+\frac{\gamma Y}{X}\right) + \frac{XY}{\gamma X + \gamma^2 Y} - \frac{XY}{\gamma^2 Y + \gamma X} \\
    &= -\frac{X}{\gamma^2} \log\left(1+\frac{\gamma Y}{X}\right) < 0.
    \end{align*}
    Therefore,
    $g$ is decreasing in $\gamma$.
    This implies that $f$ is minimized when $\gamma = \Theta(1)$.
\end{proof}
\paragraph{Selecting $\lambda$}
Given that $\gamma = \Theta(1)$,
$\binom{x + \gamma y}{x} = \binom{x + y}{x}^{\Theta(1)}$.
We select a $\lambda$ that balances the terms in $f(\gamma)$. 
Indeed,
if we set $\lambda = \Lambda(T, k) = \ceil{2^{\sqrt{\log \binom{X + Y}{X}}}}$,
then
\begin{align*}
    f(\gamma) &= \Theta(\log^2 \lambda) = \Theta\left(\log \binom{X + Y}{X}\right),\\
    \Vtime(T, k) &\ge \max\left(\binom{x + \gamma y}{x}, \poly(\lambda^{\gamma^{-1} + 2})\right) \\
    &= 2^{\Theta(\sqrt{f(\gamma)})} = 2^{\Theta\left(\sqrt{\log \binom{X + Y}{X}}\right)} = \lambda^{\Theta(1)},
\end{align*}
By the recurrence relation's upper bound, we also have
\begin{align*}
    \Vtime(T, k) \le \binom{x + \gamma y}{x}\cdot (\basek + \baset \cdot \poly(S, \lambda)) = \lambda^{\Theta(1)} \cdot (\basek + \baset\cdot \poly(S)).
\end{align*}
\paragraph{Selecting $\basek$, $\baset$}
Given that the optimal $\lambda \ge 2^{\sqrt{\log \binom{X + Y}{X}}} = \Omega(\log T + \log k)$,
the following claim can be applied.
\begin{claim}
    Suppose $\lambda \ge x + \gamma y\in O(\log T + \log k)$, then $\binom{x + \gamma y}{x}\cdot (\basek + \baset \cdot \poly(S, \lambda))$ is minimized when $\baset = O(\lambda)$ and $\basek = O(d^2)$.
    \label{clm:base-params}
\end{claim}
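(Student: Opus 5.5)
The plan is to treat the objective
\[
F(\baset,\basek) \coloneqq \binom{x + \gamma y}{x}\cdot \bigl(\basek + \baset \cdot \poly(S, \lambda)\bigr),
\]
with $x = \ceil{\log_\lambda T - \log_\lambda \baset}$ and $y = \ceil{\log_\lambda T + \log_\lambda k - \log_\lambda \basek}$, as a function of the two base-case thresholds. I would study the dependence on $\baset$ and $\basek$ separately. The key observation is that each threshold enters in two opposing ways. It lowers the path-count factor only through $x$ or $y$, and only by additive amounts of order $\log_\lambda \baset$ or $\log_\lambda \basek$. At the same time it raises the linear factor $\basek + \baset\cdot\poly(S,\lambda)$ polynomially.

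For $\baset$, I would write $\baset = \lambda^{p}$ and compare $F$ at exponent $p$ with $F$ at exponent $p+1$. Raising $p$ by one decreases $x$ by one. This changes the binomial by the ratio
\[
\binom{x-1+\gamma y}{x-1}\Big/\binom{x+\gamma y}{x} = \frac{x}{x+\gamma y} \ge \frac{1}{x+\gamma y}.
\]
Meanwhile the second factor is multiplied by roughly $\lambda$. The hypothesis $\lambda \ge x + \gamma y$ is therefore exactly what is needed: the multiplicative gain from the binomial is at most $x+\gamma y \le \lambda$, while the cost is a factor $\lambda$. Hence increasing $p$ beyond $1$ never helps, and $\baset = O(\lambda)$ is optimal up to constant factors. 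For completeness, $\baset$ must be at least $\lambda$ for the midpoint split to make sense. The same discrete-derivative comparison also shows that taking $p<1$ does not improve the bound.

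For $\basek$, I would do the same comparison in steps of $\log_\lambda$ equal to $1/\gamma$, that is, multiplying $\basek$ by $d$. This lowers $y$ by $1/\gamma$, so $\gamma y$ drops by one, and the binomial shrinks by a factor of about $\gamma y/(x+\gamma y) \ge 1/\lambda$. Meanwhile $\basek$ grows by a factor $d \ge \lambda$. So past the point where $\basek$ dominates the additive $\poly(d,\lambda,S)$ overhead of a general node, further growth is not worth it. The lower end is forced by the protocol itself. The \IPP row reduction requires $\kred = \ceil{24\secloc k/d} < k$ to actually shrink $k$, and the \SmallBatch cost $\basek S\lambda$ must be comparable to the per-node cost $S^2\lambda^2\poly(d)$. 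This pins $\basek$ to $\Theta(d^2)$ up to the $\poly(d)$ slack already present per node.

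The main obstacle is making these ratio comparisons rigorous. The ceilings in $x$ and $y$ and the non-integer step $1/\gamma$ make the binomial only approximately monotone. I would handle this by bounding $\binom{x+\gamma y}{x}$ between $(\cdot)^{\Theta(1)}$ versions via Stirling and arguing only up to constant factors in the exponent, which is all the claim asserts.
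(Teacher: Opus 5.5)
Your proposal follows the paper's proof sketch. The perturbation $\baset \mapsto \lambda\baset$, $x \mapsto x-1$ changes the bound by the factor $\lambda x/(x+\gamma y) \ge 1$ under the hypothesis $\lambda \ge x+\gamma y$, and the analogous step for $\basek$ (scaling it by about $d$ so that $\gamma y$ drops by one) is the ``similar argument'' the paper invokes. One small correction: that comparison shows lowering $\baset$ never increases the bound, so it cannot rule out $p<1$ (only feasibility of the $\lambda$-way midpoint split does), but like your lower-end discussion for $\basek$ this is not needed, since the claim asserts only the upper bounds $\baset = O(\lambda)$ and $\basek = O(d^2)$.
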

\begin{proof}[Proof Sketch]
    Consider the perturbation 
    \begin{align*}
    \begin{cases}
    \baset &\mapsto \baset \cdot \lambda,\\
    x &\mapsto x - 1.
    \end{cases}
    \end{align*}
    The ratio of the upper bounds is 
    (ignoring terms unchanged in the multiplicand) 
    \begin{align*}
        \frac{\binom{x - 1 + \gamma y}{x - 1}}{\binom{x + \gamma y}{x}} \cdot \lambda = \left(\frac{\lambda}{x + \gamma y}\right)\cdot x> 1.
    \end{align*}
    Therefore,
    we must set $\baset$ to be as small as possible, which is $O(\lambda)$.
    A similar argument holds for $\basek$.
\end{proof}

\end{document}